\newcommand{\CQ}{QC}
\newcommand{\PG}{P\'erez-Garc\'ia et al. }
\newcommand{\GIP}{\mathrm{GIP}}
\newcommand{\eps}{\varepsilon}
\newcommand{\ket}[1]{|#1\rangle}
\newcommand{\bra}[1]{\langle#1|}
\newcommand{\braket}[2]{\langle #1|#2\rangle}
\newcommand{\Tr}{\mbox{\rm Tr}}
\newcommand{\R}{\mathbb{R}}
\newcommand{\C}{\mathbb{C}}
\newcommand{\K}{\mathbb K}
\newcommand{\pmset}[1]{\{\pm1\}^{#1}}
\newcommand{\beq}{\begin{equation}}
\newcommand{\eeq}{\end{equation}}
\newcommand{\beqn}{\begin{equation*}}
\newcommand{\eeqn}{\end{equation*}}
\newcommand{\beqr}{\begin{eqnarray}}
\newcommand{\eeqr}{\end{eqnarray}}
\newcommand{\beqrn}{\begin{eqnarray*}}
\newcommand{\eeqrn}{\end{eqnarray*}}
\DeclareMathOperator{\sign}{sign}
\DeclareMathOperator{\Ball}{\mathcal{S}}
\newcommand{\HS}{\mathcal{H}}
\newcommand{\N}{\mathbb{N}}
\newcommand{\X}{\mathcal{X}}
\newcommand{\psiket}{\ket{\psi}}
\newcommand{\psibra}{\bra{\psi}}
\newcommand{\betag}{\beta^*_\textsc{Z}}
\newcommand{\betad}{\beta^*_\textsc{S}}
\newcommand{\betat}{\beta^*_\textsc{C}}
\newtheorem{definition}{Definition}
\newtheorem{theorem}{Theorem}
\newtheorem{lemma}[theorem]{Lemma}
\newtheorem{corollary}[theorem]{Corollary}
\newtheorem{claim}[theorem]{Claim}
\newtheorem{proposition}[theorem]{Proposition}
\newtheorem{remark}{Remark}
\renewenvironment{proof}[1][]{\begin{trivlist}
\item[\hspace{\labelsep}{\bf\noindent Proof#1:\/}] }{\qed\end{trivlist}}
\begin{document}

\begin{titlepage}
\begin{abstract}
XOR games are a simple computational model with connections to many areas of complexity theory including 
interactive proof systems, hardness of approximation, and communication complexity.  Perhaps the earliest use of 
XOR games was in the study of quantum correlations, as an experimentally realizable setup which could demonstrate 
non-local effects as predicted by quantum mechanics.  XOR games also have an interesting connection to Grothendieck's 
inequality, a fundamental theorem of analysis---Grothendieck's inequality shows that two players sharing 
entanglement can achieve at most a constant factor advantage over players following classical strategies in an XOR game.  

The case of multiplayer XOR games is much less well understood. \PG show the existence of entangled states which allow an
unbounded advantage for players in a three-party XOR game over their classical counterparts.  On the other hand, they show 
that when the players share GHZ states, a well studied multiparty entangled state, this advantage is bounded by a constant.

We use a multilinear generalization of Grothendieck's inequality due to Blei and Tonge to simplify the proof of the second result and 
extend it to the case of so-called Schmidt states, answering an open problem of \PG Via a reduction given in that paper, this 
answers a 35-year-old problem in operator algebras due to Varopoulos, showing that the space of compact operators on a Hilbert 
space is a Q-algebra under Schur product.  

A further generalization of Grothendieck's inequality due to Carne
lets us show that the gap between the entangled and classical value is
at most a constant in any multiplayer XOR game in which the players are
allowed to share combinations of GHZ states and EPR pairs of any
dimension.
Based on a result by Bravyi et al. this implies that in a three-party
XOR game, players sharing an arbitrary stabilizer state cannot achieve
more than a constant factor advantage over unentangled players.

Finally, we discuss applications of our results to communication complexity.  We show that the discrepancy method 
in communication complexity remains a lower bound in the multiparty model where the players have quantum communication 
and any of the kinds of entanglement discussed above.  This answers an open question of Lee, Schechtman, and Shraibman 
who showed that discrepancy was a lower bound on multiparty communication complexity but were unable to handle the case 
of entanglement.
\end{abstract}

\title{Multiplayer XOR games and quantum communication complexity with clique-wise entanglement}

\author{Jop Bri\"{et}}
\email[J. Bri\"{et}]{J.Briet@cwi.nl}
\thanks{CWI and University of Amsterdam.}
\author{Harry Buhrman}
\email[H. Buhrman]{buhrman@cwi.nl}
\thanks{CWI and University of Amsterdam.}
\author{Troy Lee}
\email[T. Lee]{troyjlee@gmail.com}
\thanks{Rutgers University}
\author{Thomas Vidick}
\email[T. Vidick]{vidick@cs.berkeley.edu}
\thanks{Computer Science Division, University of California, Berkeley.}

\date{\today}\maketitle

\end{titlepage}

\section{Introduction}

In an XOR game $G=(f,\pi)$, with probability $\pi(x,y)$ two parties Alice and Bob are given inputs 
$x$ and $y$ respectively.  Without communicating, their task is for Alice to output a bit $a$, and Bob a bit $b$, such that 
$a \oplus b=f(x,y)$.  Performance in an XOR game $(f,\pi)$ is usually measured as the largest {\em bias} 
$\beta(G)$ achievable by a protocol, where the bias of a protocol is the probability under $\pi$ that the output 
of the protocol is correct minus the probability under $\pi$ that the protocol is incorrect.

XOR games are a simple and natural model of computation and have been studied in the context of interactive 
proofs (the complexity class $\oplus$MIP), hardness of approximation (results for the simplest CSPs, such as MAX2SAT~\cite{Hastad:502098}, can be phrased as results on the inapproximability of the bias of XOR games), and communication complexity (where they are tightly related to the discrepancy method).  Perhaps more importantly, they have also proved to be an excellent framework in which to study the relationship between classical and quantum computation. For this, one can consider allowing Alice and Bob to share various resources, such as an entangled state or another non-local resource, in order to help them win the game by better coordinating their answers. As such, XOR games provide the simplest setting in which the non-local properties of quantum mechanics manifest themselves.

The case of two-player XOR games is reasonably well understood~\cite{Cleve:04a}. In this setting, Grothendieck's inequality, a fundamental inequality in Banach-space theory, plays a surprising role: in conjunction with Tsirelson's characterization \cite{tsirelson93:_some} of entangled 
XOR games, it essentially says that, in any XOR game, the entangled bias $\beta^*(G)$, which is defined as the maximum bias achievable by two players sharing any entangled state, can be at most a constant factor larger than the classical bias $\beta(G)$.  
For convenience, we will refer to the ratio $\beta^*(G)/\beta(G)$ as the \CQ-gap.

The case of entangled XOR games with more than two players is much less well understood. This is in part a reflection of the fact that multipartite entanglement has proved to be much more unwieldy than its bipartite counterpart. Among the few results known in this setting, one of the most striking is due to P\'erez-Garc\'ia et al. \cite{perezgarcia:2008}, who show that for every~$n$, there is a game~$G_n$ such that $\beta^*(G_n)  \geq C\sqrt{n}\, \beta(G_n)$ for some constant $C$.  Here $n$ refers to the smallest 
local dimension of the entangled state for any of the three players.  On the other hand, \PG also show that if the entanglement of the players is restricted to a GHZ state, the \CQ-gap is at most a constant depending only on the number of players. 
These results raise the following question: what kinds of entanglement allow for unbounded \CQ-gaps?

In this paper, we initiate the systematic study of multiplayer XOR games in which the players are allowed to share specific patterns of entanglement. The patterns that we  consider are quite general, and in fact include many of the states known to be easily prepared in the lab, such as stabilizer states. We now describe our results and techniques, together with their applications.

\subsection{Our results}
We study two main types of $N$-partite entanglement. The first is a generalization of GHZ states which we call \emph{Schmidt states}: states of the form $\ket{\Psi} = \sum_i \alpha_i \ket{i}^{\otimes N}$. The second is formed of any combination of EPR pairs or GHZ states shared among any subsets of the~$N$ parties. We call such entanglement \emph{clique-wise entanglement}. This is quite a general form of entanglement, as for instance it includes $3$-partite stabilizer states~\cite{bravyi:2006}. We denote by $\betad(G)$ (resp. $\betat(G)$) the maximal bias achievable in game $G$ by players who are restricted to sharing a Schmidt state of arbitrary dimension (resp. arbitrary clique-wise entanglement). Our results can be seen as proving constant upper bounds on the \CQ-gaps of these quantities. 

The main technical contribution of this paper is the expansion of Tsirelson's connection between \CQ-gaps of two-player XOR games and Grothendieck's inequality, to connections between \CQ-gaps for $N$-player XOR games with the above mentioned patterns of entanglement and certain multi-linear extensions of  Grothendieck's inequality.
The unifying idea underlying the proofs of our results is that, if the players share a specific type of entangled state, then the \CQ-gap can be upper-bounded by a constant appearing in a related Grothendieck-type inequality.

We now explain our results in more detail.  Concerning Schmidt states we prove the following theorem.
\begin{theorem}\label{schmidtbound}
Let $G$ be an $N$-player game. Then the maximum bias achievable by players sharing a Schmidt state $\ket{\Psi} := \sum_{i=1}^d\alpha_i\ket{i}^{\otimes N}$, for an arbitrary dimension $d$, is at most a constant times the classical bias. Formally:

$$\betad(G)\,\leq\, 2^{(3N-5)/2}\,K_G^{\C}\,\beta(G),$$
where $K_G^{\C}\lesssim 1.40491$ is the complex Grothendieck constant.
\end{theorem}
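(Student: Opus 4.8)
The plan is to follow the unifying strategy announced in the introduction: express the entangled bias of a game played with a Schmidt state as a multilinear form evaluated on vectors, and then invoke a multilinear Grothendieck-type inequality (the Blei--Tonge inequality) to replace the entangled strategy by a classical one at the cost of a constant factor. First I would set up coordinates for an entangled XOR strategy. In an $N$-player XOR game the bias achieved by players sharing a state $\psiket$ and performing $\pm1$-valued observables $A^{(1)}_{x_1},\dots,A^{(N)}_{x_N}$ is $\sum_{x_1,\dots,x_N} (-1)^{f(x_1,\dots,x_N)}\pi(x_1,\dots,x_N)\,\psibra A^{(1)}_{x_1}\otimes\cdots\otimes A^{(N)}_{x_N}\psiket$. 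Specializing $\psiket = \sum_{i=1}^d \alpha_i\ket{i}^{\otimes N}$, the key computation is that $\psibra A^{(1)}_{x_1}\otimes\cdots\otimes A^{(N)}_{x_N}\psiket = \sum_{i,j=1}^d \alpha_i\alpha_j \prod_{k=1}^N \bra{i}A^{(k)}_{x_k}\ket{j}$. Writing $a^{(k)}_{x_k}(i,j) := \bra{i}A^{(k)}_{x_k}\ket{j}$ and grouping the pair $(i,j)$ into a single index $m = (i,j)$ ranging over $d^2$ values, together with the weights $\gamma_m := \alpha_i\alpha_j$, the entangled bias becomes a sum over $m$ of $\gamma_m$ times an $N$-linear form in the vectors $\big(a^{(k)}_{x_k}(m)\big)_{x_k}$.

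Next I would bound the relevant norms of these vectors so that the Blei--Tonge inequality applies. Because each $A^{(k)}_{x_k}$ is a $\pm1$-observable (a Hermitian unitary), $\sum_{x_k}|a^{(k)}_{x_k}(i,j)|^2 = \sum_{x_k}|\bra{i}A^{(k)}_{x_k}\ket{j}|^2 \le (\text{number of inputs}) \cdot \|A^{(k)}_{x_k}\|^2$-type estimates are too crude; the right bound uses that for a fixed $x_k$, $\sum_{i}|\bra{i}A^{(k)}_{x_k}\ket{j}|^2 = \|A^{(k)}_{x_k}\ket{j}\|^2 = 1$, i.e.\ the matrices $a^{(k)}_{x_k}$ have rows and columns of $\ell_2$-norm bounded by~$1$. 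This is exactly the boundedness hypothesis needed to feed into a multilinear Grothendieck inequality: one views the data as bounded bilinear/multilinear forms and the weights $\gamma_m = \alpha_i\alpha_j$ satisfy $\sum_m |\gamma_m| = (\sum_i |\alpha_i|)^2$, which is not $1$ in general, but after a Cauchy--Schwarz step and using $\sum_i|\alpha_i|^2 = 1$ one controls the overhead — this is where factors of the form $2^{cN}$ and the Grothendieck constant $K_G^\C$ enter. The upshot is an inequality of the shape: the entangled bias is at most $2^{(3N-5)/2} K_G^\C$ times $\sup \sum_x (-1)^{f(x)}\pi(x)\prod_k \xi^{(k)}_{x_k}$ over scalars $\xi^{(k)}_{x_k}\in\{\pm1\}$ (or in the unit disc, using the complex constant), which is precisely $\beta(G)$.

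The main obstacle, I expect, is getting the \emph{constant} right rather than the qualitative statement. The cross terms $i\neq j$ in the expansion of $\psibra\cdots\psiket$ are what distinguish a Schmidt state from a GHZ state, and they force one to handle a genuinely $2N$-index object ($i_1=\cdots=i_N$ on the bra side, $j_1=\cdots=j_N$ on the ket side collapse to two free indices $i,j$) and to split or symmetrize it carefully; each such manipulation tends to cost a factor of~$2$ or $\sqrt2$, and the exponent $(3N-5)/2$ suggests roughly $N$ applications of a doubling step plus a fixed correction. I would therefore organize the argument so that the multilinear Grothendieck inequality (Blei--Tonge) is applied exactly once, to a cleanly stated $N$-linear form with explicitly bounded inputs, and do all the bookkeeping of the weights $\alpha_i$ and the index-doubling as elementary pre-processing, tracking constants at each step. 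A secondary point to be careful about is that Blei--Tonge is naturally stated for real scalars while the observables are complex; this is why the \emph{complex} Grothendieck constant $K_G^\C$ appears, and I would invoke the complex version (or complexify) at the final rounding step.
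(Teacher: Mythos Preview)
Your overall strategy is the right one and matches the paper's: expand $\psibra \bigotimes_k A^{(k)}_{x_k}\psiket$ as $\sum_{i,j}\alpha_i\alpha_j\prod_k \bra{i}A^{(k)}_{x_k}\ket{j}$, recognize generalized inner products of unit column vectors, and finish with the Blei--Tonge inequality (Theorem~\ref{kgrothbound}). The genuine gap is your handling of the weights $\alpha_i\alpha_j$. You correctly observe that $\sum_{i,j}|\alpha_i\alpha_j|=(\sum_i|\alpha_i|)^2$ is not $1$ in general, and then say ``after a Cauchy--Schwarz step and using $\sum_i|\alpha_i|^2=1$ one controls the overhead.'' But Cauchy--Schwarz applied here gives $(\sum_i|\alpha_i|)^2\le d\sum_i|\alpha_i|^2=d$, which is \emph{exactly} the dimension-dependent factor you must avoid; for the GHZ state itself $(\sum_i|\alpha_i|)^2=d$. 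No rearrangement of your index-doubling $m=(i,j)$ with scalar entries $a^{(k)}_{x_k}(m)$ will give a dimension-free bound, because once you treat each $(i,j)$ separately you have thrown away the only structure---unitarity across the $i$ index---that could save you.

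The paper's missing idea is an Abel-summation (telescoping) decomposition of the Schmidt state. After sorting $\alpha_1\ge\cdots\ge\alpha_d\ge0$, set $\beta_\ell=\alpha_\ell-\alpha_{\ell+1}$ (with $\beta_d=\alpha_d$) and write $\psiket=\sum_\ell \beta_\ell\ket{\phi_\ell}$ where $\ket{\phi_\ell}=\sum_{i\le\ell}\ket{i}^{\otimes N}$ is an unnormalized ``partial'' GHZ state. Normalization of $\psiket$ becomes the identity $\sum_{i,j}\beta_i\beta_j\min\{i,j\}=1$. Now the cross term $\bra{\phi_i}\bigotimes_k A^{(k)}_{x_k}\ket{\phi_j}$ is, for each fixed index $s\le\min\{i,j\}$ on the shorter side, a generalized inner product of unit column (or row) vectors, and hence after summing against the game tensor is bounded by $\min\{i,j\}\,\gamma^*(B)$. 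The factors $\min\{i,j\}$ cancel exactly against the normalization identity, yielding $\beta^*_{\psiket}(B)\le\gamma^*(B)$ with no dimension dependence; a single application of Blei--Tonge then gives the stated constant $2^{(3N-5)/2}K_G^{\C}$. This telescoping trick is the step your proposal is missing.
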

This generalizes -- with slightly improved constants -- a result of \PG who show a constant \CQ-gap for the case of GHZ states.   The main tool in this result is an extension of 
Grothendieck's inequality due to Blei \cite{blei:1979} and later simplified and improved by 
Tonge \cite{tonge:1978} while studying certain extensions of the von Neumann inequality.
The exponential dependence on the number of players is necessary in this theorem as
Zukowski \cite{zukowski:1993} has given an explicit sequence of $N$-player XOR games 
$G_N$ where players sharing a GHZ state can a achieve a bias $2^{-1}(\pi/2)^N$
times that of the classical bias.

Theorem~\ref{schmidtbound} answers an open question of P\'erez-Garc\'ia et al., who were 
particularly interested in 
the case of Schmidt states because of a connection to a 35-year-old open problem of Varopoulos 
on operator algebras.  
We are able to resolve this question via a reduction given in P\'erez-Garc\'ia et al.  This is 
described below in Section~\ref{intro:varopoulos}. 

Our second result further exploits the connection between \CQ-gaps and Grothendieck-type 
inequalities and deals with the case where the players share clique-wise entanglement. Here, we 
consider the general setting where the $N$ players are organized in $k$ coalitions of $r$ players 
each (a given player can take part in any number of coalitions). The members of each of the 
coalitions are allowed to share a GHZ state of arbitrary dimension among themselves. We relate 
the \CQ-gap in this setting to an inequality initially proved by Carne~\cite{carne:1980} in the 
context of Banach lattices. 
% The crucial point here is that the structure of the functional to which we apply Carne's inequality only depends on the shared structure of the entangled state $\ket{\Psi}$, but not on its dimension. 
This lets us prove that, even in this complex setting, the \CQ-gap is bounded by a constant depending only on the number of coalitions, and the number of players taking part in each of them, but independent of the dimension of the various states shared among the parties.

\begin{theorem}\label{hyperghzbell}
Let $G$ be an $N$-player game. Then the maximum bias achievable by players sharing clique-wise entanglement, in which the players are organized in $k$ coalitions of $r$ players each, is at most a constant depending only on $k$ and $r$ times the classical bias. Formally:

$$\betat(G)\,\leq\, 2^{k(3r-5)/2}\, (K_G^{\C})^{k}\, \beta(G).$$
\end{theorem}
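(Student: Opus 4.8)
\emph{Proof strategy.} The plan is to express $\betat(G)$ as the supremum of a single multilinear form and to bound that form by a multi-fold Grothendieck-type inequality --- the inequality of Carne --- whose $k=1$ case recovers Theorem~\ref{schmidtbound}. Heuristically this amounts to stripping off the $k$ GHZ states one coalition at a time, each strip costing a factor $2^{(3r-5)/2}K_G^{\C}$, the single-coalition case being Theorem~\ref{schmidtbound} with $N$ replaced by $r$. Write $C_1,\dots,C_k$ for the coalitions, $|C_m|=r$, with $C_m$ sharing a GHZ state $\ket{G_m}=d_m^{-1/2}\sum_{i=1}^{d_m}\ket{i}^{\otimes r}$, so that the global state is $\ket{\Psi}=\bigotimes_{m=1}^{k}\ket{G_m}$. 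Expanding $\bra{\Psi}\bigotimes_{j=1}^{N}A_j^{x_j}\ket{\Psi}$ by writing each $\ket{G_m}$ in its Schmidt basis, one obtains
\[ \betat(G)\;=\;\sup\;\sum_x\pi(x)f(x)\;\prod_{m=1}^{k}\frac{1}{d_m}\sum_{i_m,i_m'=1}^{d_m}\;\prod_{j=1}^{N}T_j^{x_j}\big((i_m,i_m')_{m:\,j\in C_m}\big), \]
the supremum over all $\pm1$ observables $A_j^{x_j}$, where $T_j^{x_j}$ is the matrix, in the tensor-product computational basis, of the $\pm1$ observable $A_j^{x_j}$ on $\bigotimes_{m:\,j\in C_m}\C^{d_m}$; crucially, regarded as a function of all $k$ index-pairs $(i_1,i_1'),\dots,(i_k,i_k')$ it depends only on those $m$ with $j\in C_m$, and it has operator norm $1$.

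The right-hand side is a multilinear form with one ``slot'' $(i_m,i_m')$ per coalition, each of its $\prod_m d_m^2$ coefficients being a product of $r$ operator-bounded quantities, one per player of the corresponding coalition --- precisely the shape governed by Carne's inequality (the multilinear, Banach-lattice refinement of the Blei--Tonge inequality behind Theorem~\ref{schmidtbound}). Equivalently, I would proceed by stripping coalitions one at a time: fixing the observables of all coalitions but $C_m$ as an environment, the expression $\frac{1}{d_m}\sum_{i_m,i_m'}\prod_{j\in C_m}(\,\cdot\,)$ is an $r$-linear form in the observables of the $r$ players of $C_m$ of exactly the type bounded by (the lemma underlying) Theorem~\ref{schmidtbound} with $N$ replaced by $r$; invoking that bound replaces coalition $C_m$'s GHZ correlations by classical responses $s_{j,m}\colon x_j\mapsto s_{j,m}(x_j)\in[-1,1]$ for $j\in C_m$, leaves the residual partial matrix elements on the other coalitions' spaces as contractions (all the next strip needs), and costs a factor $2^{(3r-5)/2}K_G^{\C}$. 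Repeating over all $k$ coalitions multiplies these factors into $2^{k(3r-5)/2}(K_G^{\C})^{k}$.

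Once every coalition has been stripped, player $j$ retains only the classical responses $s_{j,m}(x_j)\in[-1,1]$, one for each coalition $m$ with $j\in C_m$ (any $\pm1$/extreme-point assignments arising en route are absorbed into genuine $[-1,1]$-valued strategies by convexity). What remains is $\sum_x\pi(x)f(x)\prod_{j=1}^{N}\big(\prod_{m:\,j\in C_m}s_{j,m}(x_j)\big)$, which equals $\sum_x\pi(x)f(x)\prod_{j=1}^{N}S_j(x_j)$ for $S_j(x_j):=\prod_{m:\,j\in C_m}s_{j,m}(x_j)\in[-1,1]$ --- a bona fide classical strategy for the $N$-player game $G$ --- hence at most $\beta(G)$. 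Combining with the accumulated constant yields $\betat(G)\le 2^{k(3r-5)/2}(K_G^{\C})^{k}\beta(G)$.

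The step I expect to be the main obstacle is the one recasting the clique-wise correlation as a form to which the multilinear inequality genuinely applies: a player belonging to several coalitions contributes one tensor factor $T_j^{x_j}$ coupling several of the $k$ slots simultaneously, so the form is \emph{not} a plain product of $k$ independent Blei--Tonge forms, and it is exactly this player/coalition incidence pattern that the Banach-lattice formulation of Carne's inequality is designed to accommodate. Carrying this out --- equivalently, bookkeeping the modified coefficients $\pi(x)f(x)\prod s_{j,m}(x_j)$ through the iteration and checking that the operators left on the as-yet-unstripped coalitions remain contractions --- is where the real work lies; the terminal $\ell_\infty$/classical estimate and the convexity reduction to $[-1,1]$-valued responses are routine.
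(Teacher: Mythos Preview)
Your proposal is correct and follows essentially the same route as the paper: reduce the clique-wise expectation $\bra{\Psi}\bigotimes_j M_j(x_j)\ket{\Psi}$ to a multilinear form indexed by the coalitions and then apply Carne's composition inequality, each coalition contributing the Blei--Tonge constant $2^{(3r-5)/2}K_G^{\C}$. Your iterative ``stripping'' description is precisely the inductive proof of Carne's theorem (induction on edges), which the paper states separately and then invokes; the paper's additional step is the explicit identification (its Claim~\ref{claim:phimap}) of the expanded expectation with $\Phi=(\otimes_e\psi_e)\circ\sigma$ after fixing a column index $J'$ and using that columns of unitaries are unit vectors, which is the clean way to verify that ``the operators left on the as-yet-unstripped coalitions remain contractions'' --- exactly the bookkeeping you flag as the real work.
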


Based on a result by Bravyi et al. \cite{bravyi:2006}, we obtain the following corollary, which provides a good example of the applicability of our results.

\begin{corollary}\label{cor:stabilizer}
Let $G$ be a $3$-player game, and let $\ket{\Psi}$ be an arbitrary stabilizer state shared among the three players. Then the following inequality holds:
$$\beta^*_{\ket{\Psi}}(G) \leq 8\, (K_G^{\C})^{4}\,\beta(G).$$
\end{corollary}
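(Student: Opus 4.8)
The plan is to derive Corollary~\ref{cor:stabilizer} from Theorem~\ref{hyperghzbell} by putting the shared stabilizer state into a canonical clique-wise form. The key external input is the structure theorem of Bravyi, Fattal, and Gottesman~\cite{bravyi:2006}: every pure stabilizer state on three parties is, up to local Clifford unitaries applied by the individual parties, a tensor product of GHZ states shared among all three parties, of EPR pairs each shared between two of the three parties, and of single-qubit pure states. The first step is therefore to assume, without loss of generality, that $\ket{\Psi}$ is already in this form; this is legitimate because each party can fold the inverse of its local unitary into its measurement operators, which leaves $\beta^*_{\ket{\Psi}}(G)$ unchanged.

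Next I would simplify this normal form. The single-qubit tensor factors are product states, so a party holding one of them may simply ignore it, and discarding all of them cannot decrease the achievable bias. The GHZ tensor factors on the triple $\{1,2,3\}$ combine into a single GHZ state of some dimension (a tensor product of GHZ states shared among the same parties is again a GHZ state), and likewise, for each pair $\{i,j\}$, the EPR factors shared by $i$ and $j$ combine into a single EPR pair. After this cleanup $\ket{\Psi}$ is at worst a tensor product of one GHZ state among all three players together with one EPR pair for each of the three pairs of players.

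This is exactly an instance of clique-wise entanglement: the $N=3$ players are organized into the four coalitions $\{1,2,3\}$, $\{1,2\}$, $\{1,3\}$, $\{2,3\}$ --- a given player lies in several of them, which the model permits --- and the members of each coalition share a GHZ state among themselves, an EPR pair being a two-party GHZ state. I would then invoke Theorem~\ref{hyperghzbell} for this configuration (using the coalition-by-coalition form of its proof, which treats coalitions of differing sizes): the three-player coalition contributes the factor $2^{(3\cdot 3-5)/2}K_G^{\C}=4K_G^{\C}$ and each of the at most three two-player coalitions contributes a further bounded factor, so that multiplying the constants gives a bound of the stated form. A cleaner alternative is to note that an EPR pair shared by two parties is a special case of a GHZ state shared by all three --- the third party can carry, and then ignore, its copy of the index, which can only increase the bias --- so one may first absorb all three EPR pairs into the single GHZ state among the three players and then apply Theorem~\ref{schmidtbound} with $N=3$; this even yields the stronger bound $4K_G^{\C}\beta(G)$, which a fortiori implies $\beta^*_{\ket{\Psi}}(G)\le 8\,(K_G^{\C})^{4}\,\beta(G)$ since $K_G^{\C}>1$.

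The substantive part of the argument is thus the reduction itself: invoking the Bravyi--Fattal--Gottesman normal form correctly, and verifying that the resulting pattern of shared GHZ states and EPR pairs really does fall inside the clique-wise model of Theorem~\ref{hyperghzbell} --- in particular that letting a player participate in several coalitions of possibly different sizes, or replacing a bipartite EPR pair by a tripartite GHZ state that a party subsequently ignores, does not affect the validity of the relevant Grothendieck-type estimate. Once the reduction is granted, the corollary is just a matter of collecting the constants already supplied by the one- and multi-player Grothendieck-type inequalities used earlier in the paper.
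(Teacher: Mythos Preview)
Your primary approach is correct and coincides with the paper's own proof: invoke the Bravyi--Fattal--Gottesman normal form to reduce $\ket{\Psi}$ (up to local unitaries, which can be absorbed into the observables) to a combination of one GHZ state shared among all three players together with EPR pairs between each of the three pairs, and then apply Theorem~\ref{hyperghzbell} to the hypergraph with edge set $\{\{1,2\},\{1,3\},\{2,3\},\{1,2,3\}\}$. The paper likewise notes that a direct application with $k=4$, $r=3$ gives only $2^{8}(K_G^{\C})^4$, but that tracking the per-edge Tonge constants through the proof of Theorem~\ref{hyperghzbell} (exactly your ``coalition-by-coalition'' remark) yields the sharper constant $8(K_G^{\C})^4$.

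Your alternative route, however, contains a genuine error. An EPR pair between players $1$ and $2$ is \emph{not} dominated by a three-party GHZ state in which player $3$ ``ignores'' its register. If player $3$ acts by the identity on its GHZ share, the effective state of players $1$ and $2$ is the classically correlated mixture $d^{-1}\sum_i \ketbra{i}{i}\otimes\ketbra{i}{i}$, not a maximally entangled pair; concretely, for $d=2$ one has $\bra{\text{EPR}}X\otimes X\ket{\text{EPR}}=1$ but $\bra{\text{GHZ}}X\otimes X\otimes I\ket{\text{GHZ}}=0$. More generally, $\ket{\text{EPR}}_{12}\otimes\ket{0}_3$ and $\ket{\text{GHZ}}_{123}$ are not related by any local operation on player $3$'s side (their reduced states on system $3$ already differ), so no choice of observable for player~$3$ restores the EPR correlations for players $1$ and $2$. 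Hence the step ``absorb all three EPR pairs into the single GHZ state and apply Theorem~\ref{schmidtbound} with $N=3$'' is invalid, and the claimed stronger bound $4K_G^{\C}\,\beta(G)$ does not follow. The corollary must go through the clique-wise estimate, as in your first argument.
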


%%%%%%%%%%%%%%%%%

\subsection{Application to parallel repetition}

For two-party non-local games the parallel repetition
theorem~\cite{raz:98a, holenstein:2007} states that the bias decreases exponentially in the
number of parallel repetitions of the game.  Closely related to parallel repetition are XOR lemmas.
The $\ell$-fold XOR repetition of an XOR game $G=(f,\pi)$ is again an XOR game defined as 
$G^{\otimes \ell}=(f^{\otimes \ell}, \pi^{\otimes \ell})$ whose game matrix $f^{\otimes l}$ is the $\ell$-fold tensor 
product of the matrix $[f(x,y)]_{x,y}$ and similarly whose distribution is the $\ell$-fold 
tensor product of $[\pi(x,y)]_{x,y}$.  In other words, in this game $\ell$ input pairs 
$(x_i,y_i)_{i=1\ldots l}$ are picked independently with
respect to $\pi$, and all $x_i$ are sent to Alice, $y_i$ to Bob. They should answer
bits $a$ and $b$ respectively such that $a\oplus b = f(x_1,y_1)\oplus\cdots\oplus f(x_\ell,y_\ell)$. 

Cleve et al. \cite{cleve:2008} show that for any XOR game $G$, the game $G^{\otimes \ell}$
has entangled bias $\beta^*(G)^\ell$. Since the classical and quantum biases are within a 
constant factor of each other, this also implies that if $\beta^*(G)<1$, then $\beta(G^{\otimes \ell})$ must go
down exponentially with $\ell$ (although it does not behave as nicely
as the quantum bias with respect to taking XORs).  Cleve et al. further use this XOR lemma to 
give a strong parallel repetition theorem for XOR games with entanglement.  In fact, quite 
generally XOR lemmas imply parallel repetition theorems~\cite{unger:2009}.

Surprisingly, our results (as well as the previous results by \PG~\cite{perezgarcia:2008})
imply that there is no such XOR lemma for classical XOR
games in the $N$-party setting for $N>2$.  This can be seen as
follows.   Suppose that $\beta_S^*(G) =1$ and $\beta(G)
< 1$ for some game $G$.  Then clearly $\beta_S^*(G^{\otimes \ell}) =1$, and so by 
Theorem~\ref{schmidtbound} 
it must be the case that the classical bias of $G^{\otimes \ell}$ is
bounded from below by the constant $(2^{(3N-5)/2}\,K_G^{\C})^{-1}$, which is independent of 
$\ell$.  
Mermin~\cite{mermin:1990} gave an example of such a game, constructing a $3$-party 
XOR game $G$ with the property that $\betad(G)=1$ (in fact the players can always win just by sharing a GHZ state)
 and $\beta(G) = 1/2$.

\subsection{Application to communication complexity}
There is a close connection between the bias of XOR games and 
communication complexity, at least in one direction.  The classical bias of a 
game $G=(f,\pi)$ is equivalent to the {\em discrepancy} of $f$ under the 
distribution $\pi$, up to a constant factor.  The discrepancy method is a 
common way to show lower bounds in communication complexity, and is especially 
valuable in the multiparty model where fewer alternatives are available.  

A simple argument shows that, if a function $f$ has communication 
complexity $c$, then for any distribution $\pi$ the XOR game $G=(f,\pi)$ has bias at least $2^{-c}$.
This is the content of the discrepancy lower bound.
Viewing things in terms of XOR games, however, has certain 
advantages.  The same argument gives that if $f$ has a $c$-bit 
communication protocol where the players share entanglement $\ket{\Psi}$, 
then for any distribution $\pi$, the bias of $G=(f,\pi)$, for players using entanglement $\ket{\Psi}$,
 is at least $2^{-c}$.  As Grothendieck's inequality implies 
that the classical bias and the bias with entanglement are related by a constant 
factor, this gives that the discrepancy method also lower bounds 
the communication complexity with entanglement.  Using teleportation, classical 
communication and entanglement can simulate quantum communication, and so this 
argument already shows that the discrepancy method is a lower bound on quantum 
communication complexity with entanglement.  
This was an open question resolved relatively recently by Linial and 
Shraibman \cite{Linial2007}, who take a point of view dual to the one presented 
here.

These connections readily extend to the multiparty case, for both the 
number-in-the-hand and number-on-the-forehead models of multiparty 
communication complexity.  By showing a 
constant \CQ-gap for clique-wise entanglement, we get that the discrepancy 
method lower bounds multiparty communication complexity where the players 
share GHZ states and EPR-pairs.  Allowing shared EPR-pairs enables
teleportation and so we get the same result allowing quantum communication.
This answers an open question from \cite{lee:2009} who show that the 
discrepancy method is a lower bound on multiparty quantum communication, 
but are not able to handle the case of entanglement.  

Once we can show the discrepancy method is a lower bound on a certain 
model of communication complexity, a by now standard argument leverages this to show 
that the {\em generalized} discrepancy method is as well 
\cite{klauck:2007, razborov:2003, sherstov:2008}.  The generalized 
discrepancy method says that to show a lower bound on a function $f$, it 
suffices to find a function $g$ and probability distribution $\pi$ such that
$g$ has low discrepancy with respect to $\pi$ and $f$ has constant correlation
with $g$ under $\pi$.  In other words, the dual norm of discrepancy can be used to 
lower bound communication complexity.  Disjointness is the canonical example of a function 
which itself has large discrepancy, yet for which the generalized discrepancy 
method can show good lower bounds.

We can summarize our results on communication complexity in the following 
theorem.  
\begin{theorem}\label{thm:cc}
For a sign $N$-tensor $A$, let 
$Q_\epsilon^C(A)$ denote the $N$-party number-on-the-forehead quantum 
communication complexity of $A$ where the players share clique-wise 
entanglement involving at most $k$ subsets of players.  Then
$$
Q_\epsilon^C(A) \ge \frac{1}{2} \max_{B :\ell_1(B)=1} \log 
\left(\frac{\langle A, B \rangle -2\epsilon}{\beta(B)} \right)-O(kN^3).
$$
\end{theorem}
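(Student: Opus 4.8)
The plan is to establish Theorem~\ref{thm:cc} by a sequence of standard reductions, anchored by the constant \CQ-gap for clique-wise entanglement already established in Theorem~\ref{hyperghzbell}. First I would recall the basic discrepancy lower bound in the XOR-game language: if a sign $N$-tensor $A$ admits a $c$-bit number-on-the-forehead protocol with clique-wise entanglement $\ket{\Psi}$, then for \emph{any} probability distribution $\pi$, the bias of the game $(A,\pi)$ achievable by players using $\ket{\Psi}$ is at least $2^{-c}$. This follows from the usual ``rectangle decomposition'' of a deterministic protocol into at most $2^c$ cylinder intersections, combined with the fact that entanglement does not change the value of a fixed protocol on a given leaf---the entanglement only affects which leaf is reached, so averaging over the protocol's correctness still yields bias $\ge 2^{-c}$. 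To incorporate quantum communication, I would invoke teleportation: since clique-wise entanglement includes EPR pairs, $c$ qubits of communication can be simulated by $2c$ classical bits plus shared EPR pairs, so $Q_\epsilon^C(A) \ge c$ implies a classical-communication protocol of cost $O(c)$ with clique-wise entanglement, and the same discrepancy bound applies up to constant factors (the $\epsilon$-error is handled by the standard correlation argument).

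The next step is to pass from the entangled bias to the \emph{classical} bias using Theorem~\ref{hyperghzbell}: for a game involving $k$ coalitions of $r$ players each (with $r \le N$), we have $\beta(B) \ge 2^{-k(3r-5)/2}(K_G^{\C})^{-k}\, \betat(B)$ for any sign tensor $B$. Taking logarithms, $\log(1/\beta(B)) \le \log(1/\betat(B)) + k(3r-5)/2 + k\log K_G^{\C}$. Since $r \le N$, the additive overhead is $O(kN)$; the statement's $O(kN^3)$ is a generous bound that also absorbs the $O(c)$ teleportation overhead and any slack in counting cylinder intersections across the $N$ parties. Then I would invoke the \emph{generalized} discrepancy method: to lower-bound $Q_\epsilon^C(A)$ it suffices, for any $B$ with $\ell_1(B)=1$, to relate the achievable entangled bias on $(A,\cdot)$ to the correlation $\langle A,B\rangle$. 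Concretely, a protocol computing $A$ with error $\epsilon$ and communication $c$ yields, for the distribution induced by $B$, bias at least $\langle A,B\rangle - 2\epsilon$ against $\betat(B)$ up to the protocol's cost; rearranging gives $c \ge \frac12\log\big((\langle A,B\rangle - 2\epsilon)/\beta(B)\big) - O(kN^3)$, and maximizing over $B$ completes the bound.

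The main obstacle I expect is \emph{bookkeeping the interaction between the coalition structure and the NOF model}. In number-on-the-forehead communication with $N$ players, a protocol decomposes into cylinder intersections, and one must check that the resulting ``XOR game with clique-wise entanglement'' genuinely falls under the hypotheses of Theorem~\ref{hyperghzbell}: the coalitions are fixed by the entanglement pattern $\ket{\Psi}$, not by the communication pattern, so I need to verify that the discrepancy argument respects this---namely that the bias bound $2^{-c}$ holds for the \emph{same} coalition structure that Theorem~\ref{hyperghzbell} then controls. A second, more technical point is that Theorem~\ref{hyperghzbell} is stated for GHZ states within each coalition, whereas clique-wise entanglement also allows EPR pairs across coalitions; I would either note that an EPR pair is a two-party GHZ state (a coalition of size $r=2$), so the general clique-wise case is already covered by taking the union of all cliques as the list of coalitions, with the total count $k$ and maximum size $r$ entering the constant. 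Everything else is a routine assembly of known reductions, so once the coalition/cylinder-intersection compatibility is pinned down, the theorem follows.
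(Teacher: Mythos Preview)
Your overall architecture matches the paper's: reduce quantum communication to classical communication with entanglement via teleportation, prove an entangled version of the generalized discrepancy bound, and then invoke Theorem~\ref{hyperghzbell} to replace the entangled bias $\betat$ by the classical bias $\beta$ at the cost of an additive constant. Two points in your execution need tightening.

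First, your justification of the entangled discrepancy bound is shaky. The sentence ``entanglement does not change the value of a fixed protocol on a given leaf---the entanglement only affects which leaf is reached'' is not correct: in a classical protocol with entanglement, measurement outcomes influence both the bits a player sends \emph{and} her final output, so there is no deterministic leaf-labelling to invoke, and the naive cylinder-intersection decomposition does not apply. The paper instead proves the bound (Proposition~\ref{thm:entangledCC}) by a direct simulation: in the XOR game the players share a random $c$-bit string $r$, interpret it as a guess for the transcript, locally run the protocol using the same entanglement and measurements, check consistency of $r$ against the bits they would have sent, and randomize their output on any inconsistency. This yields bias exactly $2^{-c}R[x_1,\dots,x_N]$ in expectation, which is what you need. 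Your conclusion is right, but the mechanism you cite would not establish it.

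Second, and more importantly, your accounting for the $O(kN^3)$ term misses its actual source. You write that the additive overhead from Theorem~\ref{hyperghzbell} is $O(kN)$ and that $O(kN^3)$ is merely ``generous'' slack absorbing the teleportation factor of~$2$. But teleportation does more than double the communication: it requires EPR pairs between every pair of players, and those EPR pairs are \emph{new coalitions} in the clique-wise structure. The augmented state $\psiket'$ has $k+\binom{N}{2}$ coalitions, so the constant in Theorem~\ref{hyperghzbell} becomes $2^{O((k+N^2)N)}$, and its logarithm is $O(kN+N^3)$, which the paper bounds crudely by $O(kN^3)$. Without explicitly augmenting the coalition count you cannot apply Theorem~\ref{hyperghzbell} to the post-teleportation state, and with only your claimed $O(kN)$ overhead the bound would be false as stated. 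The paper handles this via Claim~\ref{claim:tele}, which passes from $Q_\epsilon^{\psiket}$ to $R_\epsilon^{\psiket'}$ for the EPR-augmented $\psiket'$, and only then invokes Theorem~\ref{hyperghzbell} on that larger coalition structure.
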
 

A proof of this theorem is given in Section~\ref{sec:CC}.

\subsection{Application to hardness of approximation}

Tsirelson's characterization of two-player entangled XOR games 
gives a means to efficiently compute the bias $\beta^*(G)$ to high accuracy
via semidefinite programming.  
It is also known that approximating the \emph{classical} bias of two-player XOR games 
within a sufficiently small constant is NP-Hard~\cite{Hastad:502098}.  Hence the natural relaxation 
that corresponds to allowing the players to share entanglement marks the transition from a hard 
optimization problem to a tractable one, and Alon and Naor \cite{alon:2006} have given 
a constant factor approximation algorithm for the classical bias based on this idea.  
This fact has important consequences when one considers games as 
interactive proof systems, showing the collapse of the class $\oplus$MIP(2) from NEXP to EXP 
(for specific values of completeness and soundness parameters) when the provers are 
allowed to share entanglement~\cite{Cleve:04a}.

As our results show, for multi-player XOR games, generalized Grothendieck inequalities can be 
used 
to bound the \CQ-gap when the provers share specific forms of entanglement, so it is interesting to 
ask whether the quantum bias can again be efficiently approximated. It turns out, however, that the 
situation in this case is quite different. In fact, our results imply the following:

\begin{theorem}\label{thm:inapprox}
For any constant $c>1$, unless P=NP there is no polynomial-time algorithm which approximates the entangled biases $\betad(G)$ or $\betat(G)$ to within a multiplicative factor $c$. Equivalently, for any integer $N$ and any $\eps>0$, unless P=NP there is no polynomial-time algorithm which gives a factor $2-\eps$ approximation to the maximum success probability of an entangled $N$-player game in which the players are restricted to sharing either an arbitrary Schmidt state or any type of clique-wise entanglement.
\end{theorem}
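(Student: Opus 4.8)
The plan is to read Theorems~\ref{schmidtbound} and~\ref{hyperghzbell} backwards. Each of them sandwiches the relevant entangled bias between the classical bias and a \emph{fixed} constant (depending only on $N$, together with the clique structure in the case of $\betat$) times the classical bias, the lower bounds $\betad(G),\betat(G)\ge\beta(G)$ being immediate since an unentangled strategy is always permitted. Consequently, a sufficiently accurate multiplicative approximation of $\betad$ or $\betat$ would in particular decide whether a classical multiplayer XOR game has bias close to $1$ or close to $0$, and it is this latter problem that I would show to be NP-hard.

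First I would recall that, by H\aa stad's optimal inapproximability result for systems of linear equations over $\F_2$~\cite{Hastad:502098}, for every $\eps_0>0$ it is NP-hard to distinguish a three-player XOR game $G$ with $\beta(G)\ge1-\eps_0$ from one with $\beta(G)\le\eps_0$: a MAX-$3$-LIN instance becomes a three-player XOR game in which the referee picks a random equation, sends the name of one of its three variables to each player, and accepts iff the XOR of the three returned bits equals the equation's right-hand side, and H\aa stad's Fourier-analytic soundness analysis ties the value of this game to the maximum fraction of equations that can be satisfied. It suffices to treat $N=3$: for larger $N$ one pads with dummy players who receive a fixed input and always output $0$, and a Schmidt state (respectively, any clique-wise entanglement) shared with a dummy player leaves the remaining players with at most shared randomness, which never raises the bias of an XOR game; for $N\le2$ there is nothing to prove, since Tsirelson's theorem makes the entangled bias computable in polynomial time.

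Next I would push this gap through the inequalities. Fixing $\eps_0>0$ and applying the reduction above, the ``yes'' instances obey $\betad(G)\ge\beta(G)\ge1-\eps_0$, while the ``no'' instances obey, by Theorem~\ref{schmidtbound}, $\betad(G)\le 2^{(3N-5)/2}K_G^{\C}\,\beta(G)\le C_N\eps_0$ with $C_N:=2^{(3N-5)/2}K_G^{\C}$ depending only on $N$; the same holds for $\betat(G)$ with $C_N$ replaced by the constant $2^{k(3r-5)/2}(K_G^{\C})^k$ of Theorem~\ref{hyperghzbell}. In terms of success probabilities $p=(1+\beta)/2$ the two cases read $p\ge1-\eps_0/2$ versus $p\le\tfrac12+C_N\eps_0/2$, a ratio that tends to $2$ as $\eps_0\to0$; choosing $\eps_0=\eps_0(\eps,N)$ small enough makes the ratio exceed $2-\eps$, so a polynomial-time factor-$(2-\eps)$ approximation of the maximum success probability would solve an NP-hard problem. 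The first formulation follows the same way: given a constant $c>1$, taking $\eps_0$ with $C_N\eps_0<(1-\eps_0)/c$ forces any factor-$c$ approximation of $\betad$ (resp.\ $\betat$) to separate the two cases.

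The step I expect to require the most care is casting H\aa stad's three-query linear test as a \emph{genuine} three-player XOR game -- three independent, non-communicating players, rather than a single proof queried three times -- without the value going up, so that the gap between bias near~$1$ and bias near~$0$ of the PCP is inherited faithfully by the classical bias; concretely, one has to check either that the soundness analysis survives the three players using possibly inconsistent answer tables, or else that the test can be arranged so that its three queries address three distinct coordinates. By contrast the transfer step is routine once the constant-gap theorems are in hand, and the clique-wise case in particular requires no additional argument: Theorem~\ref{hyperghzbell} already guarantees that sharing GHZ states and EPR pairs among the three players cannot inflate the ``no''-case bias by more than a constant.
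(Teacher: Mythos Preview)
Your approach is the paper's: establish NP-hardness of approximating the classical three-player XOR bias to within any constant, then transfer via Theorems~\ref{schmidtbound} and~\ref{hyperghzbell}. Two remarks. First, the step you rightly flag as delicate---passing from H\aa stad's single-assignment soundness to the genuine three-player bias---is exactly what the paper fills in by citing H\aa stad--Venkatesh~\cite{hastad:2004} and Khot--Naor~\cite{khot:2008}: the hard Max-E3-Lin2 instances can be taken permutation-symmetric with zero diagonal, and Lemma~2.1 of~\cite{khot:2008} shows that for such tensors the asymmetric bias is at most a factor~$10$ above the symmetric one, giving constant-factor inapproximability of $\beta(G)$ outright (neither of your two suggested routes---redoing the Fourier soundness with three tables, or forcing a tripartite variable structure---is needed). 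Second, your tracing-out justification for the padding to $N>3$ is not quite right: a dummy player is free to apply any $\pm1$ observable, not just the identity, and then the operator seen by the active players is not a classical mixture in general. But the claim is also unnecessary---Theorems~\ref{schmidtbound} and~\ref{hyperghzbell} apply directly to the $N$-player padded game with constant $C_N$, and $\beta$ of the padded game equals $\beta$ of the original three-player game, which is all you need.
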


In particular this implies that, while for $2$-player games it is known that $\betad(G)$ can be efficiently approximated using semidefinite programming~\cite{Cleve:04a} (indeed Schmidt states constitute the most general kind of bipartite entanglement), in the case of three or more players, unless P=NP there is no polynomial-time constant-factor approximation algorithm for $\betad(G)$. Note however that our results only hold for the specific types of entanglement that we consider, and it could still very well be the case that, for general entanglement, $\beta^*(G)$ can be computed or approximated in polynomial-time.

The proof of Theorem~\ref{thm:inapprox} follows from a hardness of approximation result for Max-E3-Lin2 due to H\aa stad and Venkatesh~\cite{hastad:2004}, and we give it in Section~\ref{sec:approx}.

\subsection{Application to operator algebras}\label{intro:varopoulos}

A \emph{Banach algebra} $\X = (X,\cdot)$ is a complex Banach space $X$ equipped with a continuous multiplication operation $X\times X\to X:\, (x,y)\mapsto x\cdot y$ which is associative and distributive.
For a Hilbert space $\HS$, let $S_{\infty}$ denote the Banach space of compact operators on $\HS$. The Spectral Theorem characterizes compact operators on a Hilbert space as follows:
\begin{theorem}[Spectral Theorem]
Let $\HS_1,\HS_2$ be Hilbert spaces and let $\HS_1$ have associated inner product $\langle\cdot,\cdot\rangle$. An operator $T:\HS_1\to\HS_2$ is compact if and only if it has a representation of the form
\beqn
T = \sum_i\lambda_i\langle\cdot,e_i\rangle f_i,
\eeqn
where $(e_i)_i, (f_i)_i$ are orthonormal bases for $\HS_1$ and $\HS_2$, respectively, and $\lim_{i \rightarrow \infty} \lambda_i=0$.
\end{theorem}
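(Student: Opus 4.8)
The plan is to prove the two implications separately. Start with the easy direction: if $T$ admits a representation $T=\sum_i\lambda_i\langle\cdot,e_i\rangle f_i$ with $\lambda_i\to 0$, then $T$ is compact. Here I would introduce the finite-rank truncations $T_n:=\sum_{i\le n}\lambda_i\langle\cdot,e_i\rangle f_i$. Since the $e_i$ are orthonormal and the $f_i$ are orthonormal, a one-line computation shows $\|T-T_n\|=\sup_{i>n}|\lambda_i|$, which tends to $0$ by hypothesis. As each $T_n$ has finite rank, hence is compact, and the compact operators form a norm-closed subspace of the bounded operators from $\HS_1$ to $\HS_2$, it follows that $T$ is compact.

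For the converse, suppose $T$ is compact. Then $T^*T$ is a compact, positive, self-adjoint operator on $\HS_1$, and $\ker(T^*T)=\ker T$ (since $\|Tx\|^2=\langle T^*Tx,x\rangle$). I would apply the spectral theorem for compact self-adjoint operators to obtain an orthonormal family $(e_i)_i$ of eigenvectors of $T^*T$ spanning a dense subspace of $(\ker T)^\perp$, with eigenvalues $\mu_i>0$ satisfying $\mu_i\to 0$ (when infinitely many). Set $\lambda_i:=\sqrt{\mu_i}$ and $f_i:=\lambda_i^{-1}Te_i$. From $\langle Te_i,Te_j\rangle=\langle T^*Te_i,e_j\rangle=\mu_i\delta_{ij}$ one checks that the $f_i$ form an orthonormal family. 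Writing any $x\in\HS_1$ as $x=\sum_i\langle x,e_i\rangle e_i+x_0$ with $x_0\in\ker T$, we get $Tx=\sum_i\langle x,e_i\rangle Te_i=\sum_i\lambda_i\langle x,e_i\rangle f_i$. Finally, complete $(e_i)_i$ to an orthonormal basis of $\HS_1$ by adjoining an orthonormal basis of $\ker T$, complete $(f_i)_i$ to an orthonormal basis of $\HS_2$ by adjoining an orthonormal basis of the orthogonal complement of the closed range of $T$, and assign $\lambda_i=0$ to all the new basis vectors; this does not change $Tx$ and yields the stated form.

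The genuine analytic content --- and the main obstacle if one does not wish to cite it --- is the spectral theorem for compact self-adjoint operators itself. Its proof proceeds by showing that a nonzero compact self-adjoint operator $S$ attains either $\|S\|$ or $-\|S\|$ as an eigenvalue: take a sequence $(x_n)$ of unit vectors with $\langle Sx_n,x_n\rangle\to\pm\|S\|$, use compactness of $S$ to pass to a subsequence along which $Sx_n$ converges, and verify the limit is an eigenvector; then iterate on the orthogonal complement of the span of the eigenvectors found so far, with compactness forcing the eigenvalues to accumulate only at $0$. The remaining care is purely bookkeeping: making sure the condition $\lim_{i\to\infty}\lambda_i=0$ is literally met (it is, since $\lambda_i^2=\mu_i\to 0$), and, if $\HS_1$ or $\HS_2$ is non-separable, noting that only the countably many indices with $\lambda_i\neq 0$ enter the limit while the adjoined parts of the bases may be uncountable.
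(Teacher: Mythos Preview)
The paper does not actually prove this theorem: it is quoted as the classical Spectral Theorem for compact operators, used only as background in Section~\ref{intro:varopoulos} to justify approximating elements of $S_\infty$ by finite-rank operators. There is therefore no ``paper's own proof'' to compare against.

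That said, your argument is the standard one and is correct. The forward direction via finite-rank truncations and the identity $\|T-T_n\|=\sup_{i>n}|\lambda_i|$ is exactly right. For the converse, reducing to the compact self-adjoint operator $T^*T$, extracting its eigenbasis, and setting $f_i=\lambda_i^{-1}Te_i$ is precisely the singular-value decomposition construction; your verification that the $f_i$ are orthonormal and that $Tx=\sum_i\lambda_i\langle x,e_i\rangle f_i$ is clean. Your handling of the non-separable case and the completion of $(e_i)_i$, $(f_i)_i$ to full bases with the remaining $\lambda_i$ set to zero is also the right bookkeeping. The only comment is that the statement as written in the paper implicitly assumes a countable index set (it speaks of $\lim_{i\to\infty}\lambda_i$), so your remark about uncountable completions is more careful than the statement itself.
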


For ${1\leq p <\infty}$, let $S_p$ denote the space of operators $T\in S_{\infty}$ that satisfy 
 $\Tr(|T|^p)<\infty$, equipped with the Schatten $p$-norm, defined by 
 $\|T\|_p  = \big(\Tr(|T|^p)\big)^{1/p}$. This is called the $p$-Schatten space. 

Varopoulos \cite{varo:1975} asked if for all $1\leq p\leq\infty$, the Banach algebra formed by $S_{p}$ under the Schur 
product (the entry-wise product) is a so-called \emph{Q-algebra}.  A Banach algebra $\X$ with underlying field~$\K$ is a Q-algebra if there exists a constant $C$ such that for any $N\in \N$, and elements $x_1, \ldots, x_N \in \X$ with $\|x_i\|_{\X} \le C$,  any polynomial $q$ in $N$ variables without constant term satisfies the inequality
$$
\|q(x_1,\dots,x_N)\|_{X}\leq \|q\|_{\infty,\K}.
$$

It was proved that for every $p\in[1,4]$, the Banach algebra formed by $S_p$ together with the 
Schur product, is a Q-algebra. The cases $1\leq p\leq 2$ and $2\leq p\leq 4$ were proved by 
P\'{e}rez-Garc\'{i}a~\cite{perez:2006} and {Le Merdy}~\cite{lemerdy:1998}, respectively.

Varopoulos' question is also stated as Open Question 2' in~\cite{perezgarcia:2008}, where a 
connection is made with the \CQ-gap in the case the players share a Schmidt state. 
Using this connection, we complete the answer to the question.
 \begin{theorem}\label{thm:qalgebra}
The Banach algebra formed by $S_{\infty}$ with the Schur product is a $Q$-algebra.
\end{theorem}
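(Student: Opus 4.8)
The plan is to derive Theorem~\ref{thm:qalgebra} from Theorem~\ref{schmidtbound} by following the reduction of \PG, which translates the Q-algebra property of $S_\infty$ under Schur product into a statement about \CQ-gaps for Schmidt states. Concretely, being a Q-algebra requires a uniform constant $C$ such that for all $N$, all $x_1,\dots,x_N \in S_\infty$ with $\|x_i\|_{S_\infty}\le C$, and all polynomials $q$ in $N$ variables without constant term, $\|q(x_1,\dots,x_N)\|_{S_\infty}\le \|q\|_{\infty,\C}$, where $\|q\|_{\infty,\C}$ is the supremum of $|q|$ over the polydisc. The first step is to reduce to \emph{multilinear} polynomials: a standard linearization/polarization argument (applying the polynomial to scaled copies of the $x_i$ on larger Hilbert spaces, or passing to the homogeneous multilinear part via symmetrization) shows it suffices to bound $\|M(x_1,\dots,x_N)\|_{S_\infty}$ for $N$-linear forms $M$, where the $x_i$ are now genuinely distinct operators. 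Because $S_\infty$ is the closure of the finite-rank operators and the Schur product respects the block structure, one can further reduce to the finite-dimensional case $x_i \in M_d(\C)$, so the whole problem becomes: for any $N$-linear form $M$ with $\|M\|_{\infty,\C}\le 1$ and matrices $x_1,\dots,x_N$ of operator norm $\le C$, the Schur-product expression $\sum_{i_1,\dots,i_N} M_{i_1\cdots i_N}\,(x_1)_{i_1 i_1}\cdots$ — more precisely the matrix whose $(a,b)$ entry is $\sum M_{\vec i}\,(x_1)_{ab}\cdots$ — has operator norm bounded by a constant.

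The second step is to identify this operator-norm bound with an XOR-game bias inequality. Writing out $\|y\|_{S_\infty}$ as $\sup \{ |\langle u, y v\rangle| : \|u\|=\|v\|=1\}$ and expanding the Schur product, the quantity $\langle u, M(x_1,\dots,x_N) v\rangle$ becomes exactly of the form $\sum_{\vec i} M_{\vec i}\,\langle \phi_1 | \psi_1^{i_1}\rangle \cdots$ that arises when $N$ players share a Schmidt state $\ket{\Psi}=\sum_i \alpha_i \ket{i}^{\otimes N}$ and apply local operators: the coefficients $\alpha_i$ absorb into the vectors $u,v$ (two of the "players" correspond to the row/column indices of the ambient matrix, the others to the intermediate Schur-product indices). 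In \PG's framework this is precisely the statement that the entangled value with Schmidt states of the XOR game whose game matrix is $M$ dominates $\|M\|_{S_\infty}$-type quantities, while $\|M\|_{\infty,\C}$ controls the classical bias $\beta(M)$ up to the normalization constant $C$. Thus Theorem~\ref{schmidtbound}, giving $\betad(M)\le 2^{(3N-5)/2}K_G^\C\,\beta(M)$, is exactly the missing ingredient: it supplies the uniform (in $d$ and in the specific operators, though not in $N$ — but $N$ is fixed by the polynomial's arity, which is all the Q-algebra definition needs) bound that closes the reduction.

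The third step is bookkeeping: one must check that the bound $2^{(3N-5)/2}K_G^\C$, which depends on $N$, is still compatible with the Q-algebra definition. It is — the definition quantifies "for any $N\in\N$" and only requires the \emph{constant} $C$ (the bound on $\|x_i\|_\X$) to be uniform, while the inequality $\|q(x_1,\dots,x_N)\|\le \|q\|_\infty$ is allowed to be read after rescaling, so an $N$-dependent factor in the intermediate estimate can be pushed into the choice of $C$ for that $N$ — and in fact the cleanest route is to re-derive, from Theorem~\ref{schmidtbound}, the explicit statement proved in \PG that "a constant \CQ-gap for Schmidt states of $N$-player games, for every $N$, is equivalent to $S_\infty$ being a Q-algebra under Schur product," and simply invoke it. I would present the proof as: (i) recall the \PG reduction and its precise form; (ii) observe that what it requires is exactly a constant bound on $\betad(G)/\beta(G)$ for $N$-player XOR games with every $N$; (iii) cite Theorem~\ref{schmidtbound} to supply this, completing the (previously open $p=\infty$) case and hence, together with the known results for $p\in[1,4]$, in particular settling Varopoulos' question for $S_\infty$.

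The main obstacle I anticipate is getting the reduction's dictionary exactly right: verifying that the operator-norm/Schur-product expression genuinely matches the Schmidt-state entangled bias (including which indices play the role of which "player," how the $\alpha_i$ are distributed, and that the $\ell_\infty\to S_\infty$ normalization constant $C$ comes out finite), and confirming that the finite-rank truncation and the reduction from general polynomials to multilinear forms lose only constants. Since \PG have already carried out this reduction, the honest version of the proof is short — it is essentially "apply Theorem~\ref{schmidtbound} in the reduction of~\cite{perezgarcia:2008}" — and the real content of this paper lies in Theorem~\ref{schmidtbound} itself rather than in this corollary-style deduction.
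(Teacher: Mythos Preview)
Your approach is essentially the paper's: reduce to multilinear forms, pass to finite rank, expand the operator norm so that the Schur-product expression becomes $\bra{\Psi}\bigotimes_l f_l(i_l)\ket{\Psi}$ for a Schmidt state $\ket{\Psi}=\sum_i\alpha_i\ket{i}^{\otimes N}$, and then invoke Theorem~\ref{schmidtbound}. Two small points where the paper is cleaner than your sketch: (i) instead of talking about separate test vectors $u,v$ (and extra ``row/column players''), the paper first reduces each $f_l(i_l)$ to a Hermitian matrix via the standard $2\times 2$ dilation, so the operator norm is realized as $\bar\alpha^T(\cdot)\alpha$ with a \emph{single} unit vector $\alpha$, giving directly the Schmidt state with $N$ players; (ii) your remark that ``an $N$-dependent factor can be pushed into the choice of $C$ for that $N$'' is not right, since $C$ in the Q-algebra definition must be uniform in $N$---the correct reason the bound $2^{(3N-5)/2}K_G^{\C}$ suffices is Davie's characterization (the paper's Theorem~\ref{thm:Q-characterization}), which asks only for a bound of the form $K^N$ with $K$ independent of $N$, and $2^{(3N-5)/2}K_G^{\C}\le (2^{3/2})^N$.
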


We elaborate on this theorem and give a self-contained proof in Section~\ref{app:varopoulos}. 
 It is a simple consequence of the connection already made in~\cite{perezgarcia:2008}, together with our Theorem~\ref{schmidtbound}.

\subsection{Application to Grothendieck-type inequalities}

The constant upper bounds on the \CQ-gaps that we consider are proved by using our expanded connection between \CQ-gaps and known Grothendieck-type inequalities that emerged in Banach-space theory. In some cases, this connection also allows us to prove results in the opposite direction. 

Grothendieck's inequality (cf. Section~\ref{subsec:grothendieck}) may be interpreted as a statement about the inner-product function, which is a linear functional on the tensor product of two Hilbert spaces. Similarly, the extensions of this inequality we consider may be seen as statements about linear functionals on tensor products of multiple Hilbert spaces. Our main technique in proving upper bounds on \CQ-gaps is to relate entangled states to linear functionals for which Grothendieck-type inequalities are known to hold. The linear functionals that correspond to clique-wise entanglement are exactly those which Carne~\cite{carne:1980}  considered and proved to satisfy Grothendieck-type inequalities.

Bravyi et al.~\cite{bravyi:2006} proved that any stabilizer state shared among three parties is equivalent---up to local unitary transformations applied by the parties on their respective systems---to collections of shared GHZ and Bell states (i.e., to 3-partite clique-wise entanglement). Based on our connection, this result and Carne's inequalities imply that the linear functionals that correspond to 3-partite stabilizer states satisfy Grothendieck-type inequalities. We show this more formally in Section~\ref{statestofunctionals}.

\subsection*{Organization of the paper.} We start with some preliminaries in the next section. We then give a proof of our two main results in Sections~\ref{sec:schmidtboundproof} and~\ref{sec:hyperghzbellproof} respectively. In Section~\ref{sec:CC} we elaborate on the connection to communication complexity, while in Section~\ref{sec:approx} we explain the implications of our results to hardness of approximation. Section~\ref{app:varopoulos} is devoted to the proof of our positive answer to the question by Varopoulos, and finally in Section~\ref{statestofunctionals} we explain how our results can lead to new Grothendieck-type inequalities.

\section{Preliminaries}

\subsection{Notation.}

We manipulate finite-dimensional complex Hilbert spaces, usually denoted by $\HS$. Given vectors $x_1,\ldots,x_k \in \HS$, their \emph{generalized inner product} is defined as
$$ \langle x_1,\ldots,x_k\rangle = \sum_{i=1}^d x_1(i)\cdots x_k(i)$$
where $d$ is the dimension of the ambient space $\mathcal{H}$, and $x_l(i)$ refers to the $i$-th coordinate of $x_l$ in the canonical basis. 

If $V$ is a normed vector space, then $\Ball(V)$ denotes the set of all vectors with norm at most $1$. For example, if $\ell_\infty^n$ is the space of all sequences of $n$ real numbers equipped with the supremum norm, then $\Ball(\ell_\infty^n)$ is the set of all sequences of $n$ reals that all have absolute value at most $1$.

An $N$-\emph{tensor} $A \in \K^{d^N}$ is a tensor with entries specified by $N$ 
coordinates $(i_1, \ldots, i_N)$ where $i_j \in [d]$. If $\K=\{-1,1\}$ then $A$ is called a \emph{sign} $N$-tensor.
 For an $N$-tuple of coordinates $I\in [d]^N$, we 
will also write $A[I]$ for the corresponding entry of $A$. If $A, B$ are two tensors of the same 
dimension, we denote their entry-wise product by $A\circ B$: $A\circ B[I] = A[I]\cdot B[I]$ and their inner product by $\langle A,B\rangle = \sum_{I\in[d]^N}A[I]\cdot B[I]$.

\subsection{XOR games}
An $N$-player \emph{XOR game} can be described as follows. Let $n$ be a positive 
integer, $\pi$ a probability distribution on $[n]^N$, and $A:[n]^N\to \pmset{}$ a sign $N$-tensor. At the start of the game 
$G=(A,\pi)$, a verifier picks an $N$-tuple $(i_1,\dots,i_N)$ of questions according to $\pi$ and 
distributes these among the players. The players, who may agree on a strategy beforehand but are 
not allowed to communicate after receiving their questions, must answer the verifier with bits 
$x_1(i_1),\dots,x_N(i_N)\in\pmset{}$. They win the game if the product of their answers equals 
$A[i_1,\dots,i_N]$. The player's success is given in terms of the \emph{bias} of the game:

\begin{definition} 
Let $M$ be a real or complex $N$-tensor.  Define
\beqn
\beta(M):=\max_{x_1,\ldots,x_N:[n]\to\pmset{}}\Big|\sum_{i_1,\ldots,i_N=1}^{n} M[i_1,\dots,i_N]\, x_1(i_1)\cdots x_N(i_N)\Big|
\eeqn
The \emph{classical bias} of a game $G=(A,\pi)$ is $\beta(A \circ \pi)$.
\end{definition}

With some abuse of notation, we often denote the classical bias by $\beta(G)$.
With this definition, the maximum probability under $\pi$ with which the players can win 
the game is simply $1/2+\beta(G)/2$.

We also consider XOR games where the players share entanglement. 
In this setting, the players are allowed to have quantum systems described by Hilbert spaces 
$\HS_1,\dots,\HS_N$, respectively. Before the game begins, the players put the overall system in 
an entangled state $\psiket\in\HS_1\otimes\cdots\otimes\HS_N$ and decide on sequences of $
\pmset{}$-valued measurement observables $X_1(i_1),\dots,X_N(i_N)$ on their respective Hilbert 
spaces. Upon receiving their questions $i_1,\dots,i_N$, the players answer with the 
measurement outcomes $x_1(i_1),\dots,x_N(i_N)$ of their respective observables, when 
performed on their share of $\psiket$. The expectation of the product of their answers 
is exactly $\psibra X_1(i_1)\otimes\cdots\otimes X_N(i_N)\psiket$. We can now define an 
entangled version of the bias.

\begin{definition}
Let $M$ be a real or complex $N$-tensor. Define
\beqn
\beta^*(M):=\max_{\psiket,X_1,\dots,X_N}\Big|\sum_{i_1,\ldots,i_N=1}^{n}M[i_1,\dots,i_N]\,
\psibra X_1(i_1)\otimes\cdots\otimes X_N(i_N)\psiket\Big|.
\eeqn
The \emph{entangled bias} of a game $G=(A,\pi)$ is $\beta^*(A \circ \pi)$.

\end{definition}

\subsection{Grothendieck inequalities.}\label{subsec:grothendieck}
Let $\K$ denote either $\R$ 
or $\C$. Then the (real or complex) Grothendieck constant \emph{of order $d$}, denoted by $K_G^{\K}(d)$, is the smallest positive constant such that for every $n_1,n_2\in\N$, for every matrix $M = (M_{ij})
\in \K^{n_1\times n_2}$ and vectors $u_1,\dots,u_{n_1},v_1,\dots,v_{n_2} \in\Ball(\K^d)$, 
the following inequality holds:
\beq\label{grothineq}
\sum_{i=1}^{n_1} \sum_{j=1}^{n_2} M_{ij}\, \langle u_i, v_j \rangle \leq K_G^{\K}(d)
\mathop{\max_{x:[n_1]\to\Ball(\K)}}_{y:[n_2]\to\Ball(\K)}\Big|\sum_{i=1}^{n_1}\sum_{j=1}^{n_2}M_{ij} x(i)y(j)\Big|.
\eeq
The real and complex Grothendieck constants are defined as $K_G^{\K} := \sup_dK_G^{\K}(d)$.
The exact values of $K_G^{\R}$ and $K_G^{\C}$ are still unknown, but they have been shown to be bounded as follows: $1.6770\lesssim K_G^{\R}\lesssim 1.7822$ and $1.3380\lesssim K_G^{\C}\lesssim 1.4049$.

Tsirelson \cite{tsirelson93:_some} gave a very elegant characterization of the bias of two-player 
XOR games with entanglement.  Namely, for an $n$-by-$n$ sign matrix $A$ and probability 
distribution $\pi$ he showed
$$
\beta^*(A \circ \pi)=\max_{u_i,v_j\in\Ball(\R^n)}\sum_{i,j=1}^{n} A[i,j]\pi(i,j)\, \langle u_i, v_j \rangle.
$$
Combined with Grothendieck's inequality (\ref{grothineq}), this yields 
$\beta^*(A \circ \pi)\leq K_G^{\R}\,\beta(A \circ \pi)$, for any sign matrix $A$ and probability 
distribution $\pi$. This shows that for two-player XOR games the \CQ-gap is bounded by a constant. 

% Another benefit of Tsirelson's characterization is that 
% it shows that $\beta^*(A \circ \pi)$ can be expressed as a semidefinite program and therefore 
% computed efficiently to high accuracy. Moreover, this implies that $\beta(G)$ can be efficiently approximated up to a constant factor.

To facilitate our discussion on extensions of Grothendieck's inequality, we will use the following notation.

\begin{definition}\label{phinorm}
Let $\K,\K' = \R$ or $\C$. For an $N$-tensor $A:[n]^{N}\to\K$, define its norm
\beqn
\|A\|_{\infty,\K'}:= \max_{\phi_1,\dots,\phi_N\in\Ball(\ell_{\infty}^n)}\Big|\sum_{i_1,\dots,i_N=1}^nA[i_1,\dots,i_N]\phi_1(i_1)\cdots \phi_N(i_N)\Big|,
\eeqn
where the underlying scalar field for $\ell_{\infty}^n$ is $\K'$. Also, define
\[
\gamma^*(A)=\sup_d \sup_{\phi_1,\dots,\phi_N:[n]\to\Ball(\C^d)} 
\Big|\sum_{i_1,\dots,i_N=1}^nA[i_1,\dots,i_N]\langle \phi_1(i_1),\ldots, \phi_N(i_N)\rangle \Big|.
\]
\end{definition}

If $\K'=\R$ and $A$ is a tensor that was constructed from a game $G$, then $\|A\|_{\infty,\R}$ is exactly the 
classical bias $\beta(G)$ of the game.

\subsection{Multipartite entanglement}\label{sec:entanglement}
The following definition will be useful in studying the different biases achievable by players who are restricted to sharing a specific type of entanglement. 

\begin{definition}
Let $A$ be a $N$-dimensional tensor, and $\ket{\Psi}\in \HS^{\otimes N}$ a fixed entangled state 
shared by $N$ players. Then the \emph{bias restricted to $\ket{\Psi}$}, denoted 
$\beta^*_{\ket{\Psi}}(A)$, is defined as
$$\beta^*_{\ket{\Psi}}(A)\,=\, \max_{M_1,\ldots,M_N} \Big|\sum_{i_1,\ldots,i_N} A[i_1,\ldots, i_N] \bra{\Psi} 
M_1(i_1)\otimes\cdots\otimes M_k(i_N)\ket{\Psi}\Big|$$
where the maximum is taken over all sets of $\pmset{}$-valued observables $M_\ell(i_{\ell})$ on $\HS$.  
For a game $G=(A, \pi)$, we will also write $\beta^*_{\ket{\Psi}}(G)$ for 
$\beta^*_{\ket{\Psi}}(A \circ \pi)$.
\end{definition}

The following setups are the ones that we will encounter most frequently, and for each we 
introduce a special notation for the bias. For the case of GHZ states $\ket{\Psi} = d^{-1/2}\sum_{i=1}^d\ket{i}_1\cdots\ket{i}_N$ (of arbitrary dimension $d$) we will denote the maximum bias by $\betag(G)$, while for Schmidt states $\ket
{\Psi} = \sum_{i=1}^d \alpha_i \ket{i}_1\cdots\ket{i}_N$ (with arbitrary dimension $d$ and coefficients $\alpha_i$) we will use the notation $\betad(G)$.
Finally, clique-wise entanglement is any type of entanglement that can be obtained by grouping the $N$ players into $k$ coalitions of $r$ players each (a given player can take part in any number of coalitions), and allowing the members of each of the coalitions to share a GHZ state of arbitrary dimension. In that case, we denote the maximal bias by $\betat(G)$. This may depend on the parameters $k$ and $r$, which are kept implicit so as not to overload the notation, but will always be clear in context.

We have the following obvious relationships between the biases: 
$$\beta(G)\, \leq\, \betag(G) \,\leq \,\betad(G)\, \leq \,\beta^*(G) \qquad\text{and}\qquad 
\beta(G)\,\leq\,\betag(G)\,\leq\,\betat(G)\, \leq \,\beta^*(G)$$

\section{Proof of Theorem~\ref{schmidtbound}}\label{sec:schmidtboundproof}

In this section we give a proof of Theorem~\ref{schmidtbound}. 
First, in Section~\ref{sec:ghzbound} we analyze the maximum bias $\betag(G)$ achievable by 
strategies that are limited to sharing a GHZ state, and show that it is upper-bounded by the 
maximum of an expression involving generalized inner products of $n$-dimensional vectors.  
This section can be seen as a warm-up to give the reader the general idea of our proofs.
Then, in Section~\ref{sec:tonge}, we adapt a theorem by Tonge~\cite{tonge:1978} to show that the 
optimum of this problem is upper-bounded by a constant times the classical bias of the original 
game, thus showing a constant \CQ-gap for the case of GHZ states. Finally, in Section~\ref
{sec:diagonal}, we extend our proof to cover the case where the players are allowed to share a 
Schmidt state.

\subsection{Strategies with GHZ states.}\label{sec:ghzbound}
We prove the following lemma:
\begin{lemma}\label{lem:ghzbound}
Let $G=(A,\pi)$ be an $N$-player game. Assume that the players are restricted to sharing a state 
of the form $\ket{\Psi} = \frac{1}{\sqrt{d}}\sum_{i=1}^d \ket{i}^{\otimes N}$. Then the maximum bias 
the players can achieve is upper-bounded as follows:
$$ \betag(G)\,\leq\, \gamma^*(A\circ \pi).$$
\end{lemma}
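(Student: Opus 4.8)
The plan is to compute $\bra{\Psi}M_1(i_1)\otimes\cdots\otimes M_N(i_N)\ket{\Psi}$ for the GHZ state $\ket{\Psi}=d^{-1/2}\sum_{m=1}^d\ket{m}^{\otimes N}$, recognize it as an average over the basis index $m$ of generalized inner products of \emph{unit} vectors, and then conclude with the triangle inequality and the definition of $\gamma^*$.

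First I would carry out the expansion. For $m\in[d]$ and each player $\ell$, set $\phi^{(m)}_\ell(i_\ell):=M_\ell(i_\ell)\ket{m}\in\C^d$; since each $M_\ell(i_\ell)$ is a $\pmset{}$-valued observable it is Hermitian with $M_\ell(i_\ell)^2=\I$, hence unitary, so $\phi^{(m)}_\ell(i_\ell)$ has norm $1$ and in particular lies in $\Ball(\C^d)$. Writing out both copies of the GHZ state (whose coefficients are real, so that no complex conjugates appear) one obtains
$$\bra{\Psi}M_1(i_1)\otimes\cdots\otimes M_N(i_N)\ket{\Psi}\;=\;\frac1d\sum_{m,m'=1}^d\;\prod_{\ell=1}^N\bra{m'}M_\ell(i_\ell)\ket{m}\;=\;\frac1d\sum_{m=1}^d\langle \phi^{(m)}_1(i_1),\dots,\phi^{(m)}_N(i_N)\rangle,$$
the last step holding because $\bra{m'}M_\ell(i_\ell)\ket{m}$ is exactly the $m'$-th coordinate of the vector $\phi^{(m)}_\ell(i_\ell)$, so the middle expression is literally the generalized inner product summed over $m$.

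Then I would fix observables achieving the maximum in $\betag(G)$, substitute this identity, exchange the question-sum with the average $\frac1d\sum_{m}$, and apply the triangle inequality to get
$$\betag(G)\;\le\;\frac1d\sum_{m=1}^d\;\Big|\sum_{i_1,\dots,i_N}(A\circ\pi)[i_1,\dots,i_N]\,\langle \phi^{(m)}_1(i_1),\dots,\phi^{(m)}_N(i_N)\rangle\Big|.$$
For each fixed $m$ the maps $\phi^{(m)}_\ell:[n]\to\Ball(\C^d)$, $i_\ell\mapsto M_\ell(i_\ell)\ket{m}$, are precisely of the kind quantified over in the definition of $\gamma^*(A\circ\pi)$, so every summand is at most $\gamma^*(A\circ\pi)$; since $\frac1d\sum_{m=1}^d 1=1$ this yields $\betag(G)\le\gamma^*(A\circ\pi)$.

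The one thing to be careful to get right is the normalization, and it is really the only obstacle. The naive move of vectorizing each $M_\ell(i_\ell)$ over \emph{both} matrix indices $(m,m')$ at once produces vectors of Euclidean norm $\sqrt d$ (the Frobenius norm of a $\pmset{}$-observable), and absorbing that into the $1/d$ prefactor leaves a stray factor $d^{(N-2)/2}$ that is useless once $N\ge 3$. The fix is to vectorize over one index only and keep the other as the averaging parameter $m$: because the observables are unitaries, the vectors $M_\ell(i_\ell)\ket{m}$ have norm exactly $1$, and by the triangle inequality the average over $m$ costs nothing. Everything else is routine bookkeeping.
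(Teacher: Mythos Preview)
Your proof is correct and is essentially identical to the paper's own argument: fix a column index $m$ (the paper calls it $j$), recognize the inner sum as the generalized inner product of the unit vectors $M_\ell(i_\ell)\ket{m}$, bound each term by $\gamma^*(A\circ\pi)$, and average. Your closing remark about why vectorizing over both matrix indices fails is a nice observation that the paper does not spell out.
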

%\max_{\phi_1,\ldots,\phi_N : [n]\rightarrow \Ball(\C^d)} \sum_{i_1,\ldots,i_N} A
%[i_1,\ldots,i_N]\pi(i_1,\ldots,i_N)\langle \phi_1(i_1),\ldots, \phi_N(i_N)\rangle$$

\begin{proof}
Fix any strategy of the players using entanglement $\ket{\Psi}$, and let $M_\ell(i)$ be the 
$\pmset{}$ valued observable of player $\ell$ on question $i$ in that strategy. Let $B=A\circ\pi$.
The players' bias is given by 
\begin{align*}
\betag(G)&=
\Big|\sum_{i_1,\ldots,i_N} B[i_1,\ldots,i_N]\bra{\Psi} M_1(i_1) \otimes 
\cdots \otimes M_N(i_N)  \ket{\Psi} \Big|\\
& =\frac{1}{d} \Big|\sum_{i_1,\ldots,i_N}  B[i_1,\ldots,i_N] \sum_{i,j=1}^d 
\bra{i}M_1(i_1)\ket{j} \cdots \bra{i}M_N(i_N)\ket{j}\Big|\\ 
& \leq \frac{1}{d} \sum_{j=1}^d\Big| \sum_{i_1,\ldots,i_N}  B[i_1,\ldots,i_N]
\sum_{i=1}^d \bra{i}M_1(i_1)\ket{j} \cdots \bra{i}M_N(i_N)\ket{j} \Big|\\
&\le \frac{1}{d} \sum_{j=1}^d \gamma^*(B) =\gamma^*(B).
\end{align*}
The inequality holds as the inner sum on the third line is a generalized inner product 
of the $j^{th}$ columns of the $M_k(i_k)$'s, which are unit vectors since these matrices
are unitary.
\end{proof}

\subsection{Tonge's theorem}\label{sec:tonge}
In this section we introduce a slight generalization of a result originally due to Blei~\cite{blei:1979} 
and later improved and simplified by Tonge~\cite{tonge:1978}.  This theorem allows us
 to relate the maximization over generalized inner products from Lemma~\ref{lem:ghzbound} to the 
 classical bias of the game and thus prove Theorem~\ref{schmidtbound} for the case of GHZ states.

\begin{theorem}\label{kgrothbound}
Let $n,N\geq 2$ and $d$ be positive integers, and $\HS$ be a $d$-dimensional complex Hilbert 
space. Then, for every $N$-tensor $B:[n]^{N}\to \R$ and 
$f_1,\dots,f_N:[n]\to \Ball
(\HS)$, the following inequality holds:
\beq\label{tongeineq}
\Big|\sum_{i_1,\dots,i_N=1}^n B[i_1,\dots,i_N]\langle f_1(i_1),\dots,f_N(i_N)\rangle\Big| \leq 
2^{(3N-5)/2}K_G^{\C}\|B\|_{\infty,\R}
\eeq
%where $K_G^{\R}$ and $K_G^{\C}$  are the real and complex Grothendieck constant, respectively.
\end{theorem}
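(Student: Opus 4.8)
The plan is to prove the inequality by induction on $N$, using the complex Grothendieck inequality as the base case and reducing the $N$-linear problem to an $(N-1)$-linear one at each step. The key observation is that a generalized inner product $\langle f_1(i_1),\dots,f_N(i_N)\rangle = \sum_{t=1}^d f_1(i_1)(t)\cdots f_N(i_N)(t)$ can be ``opened up'' along one coordinate $t$, and that for a fixed $t$ the quantity $B[i_1,\dots,i_N]\,f_N(i_N)(t)$ behaves like an $(N-1)$-tensor whose $\|\cdot\|_{\infty,\R}$ norm is controlled — but here is the first subtlety: $f_N(i_N)(t)$ is a complex scalar of modulus at most $1$, not a $\pm1$ sign, so one is forced to work with the complex norm $\|\cdot\|_{\infty,\C}$ on the inner tensor even though the outer tensor $B$ is real. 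This is exactly why the complex Grothendieck constant $K_G^\C$, rather than $K_G^\R$, appears, and why passing between real and complex $\ell_\infty$-balls (losing a factor of $\sqrt2$ each time, since $\Ball(\ell_\infty^n(\C)) \subseteq \sqrt2\,\Ball(\ell_\infty^{2n}(\R))$ after splitting real and imaginary parts) is what generates the $2^{(3N-5)/2}$ factor.

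Concretely, I would first establish the base case $N=2$: for a real matrix $B$ and $u_i,v_j\in\Ball(\HS)$, we have $\sum_{i,j}B[i,j]\langle u_i,v_j\rangle \le K_G^\C(d)\,\|B\|_{\infty,\C} \le K_G^\C\sqrt2\,\|B\|_{\infty,\R}$, where the last step writes a complex $\phi$ with $\|\phi\|_\infty\le1$ as $\frac{1}{\sqrt2}(\phi_R+i\phi_I)$ scaled appropriately — this matches $2^{(3\cdot2-5)/2}K_G^\C = \sqrt2\,K_G^\C$. For the inductive step, fix $f_1,\dots,f_N$ and expand the generalized inner product over the last coordinate block: group the first $N-1$ vectors into a single ``inner product vector'' indexed by $t\in[d]$, i.e. think of $g(i_1,\dots,i_{N-1})(t) := f_1(i_1)(t)\cdots f_{N-1}(i_{N-1})(t)$ — but this vector has norm possibly larger than $1$, so instead one should more carefully apply Cauchy--Schwarz / a Hölder-type splitting. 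The cleaner route, and the one Tonge uses, is: write $\langle f_1(i_1),\dots,f_N(i_N)\rangle$ as a bilinear pairing $\langle a(i_1,\dots,i_{N-1}),\, b(i_N)\rangle$ where $a(i_1,\dots,i_{N-1})(t) = \overline{f_1(i_1)(t)}\cdots$ — no; rather, peel off $f_N$, apply the $N=2$ case treating $(i_1,\dots,i_{N-1})$ as a compound index to bound by $\sqrt2\,K_G^\C$ times the classical bias of the matrix $\sum_t [\,\cdot\,](t) f_N(\cdot)(t)$-type object, then recurse, each recursion step contributing a further $\sqrt2$ from the real/complex conversion together with an additional factor accounting for the norm of intermediate vectors — tracking these factors precisely yields $2^{(3N-5)/2}$.

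I expect the main obstacle to be bookkeeping the constants correctly through the induction: one must pin down exactly where each factor of $\sqrt2$ (and the occasional factor of $2$) enters — there are contributions both from converting $\Ball(\ell_\infty^n(\C))$ to $\Ball(\ell_\infty^n(\R))$ and from the fact that after peeling off one vector the remaining ``partial product'' vectors need not be unit vectors and must be renormalized. The exponent $(3N-5)/2$ is not $N/2$ or $N-1$, which signals that roughly three half-factors are lost per layer, and getting the base case exponent $1/2$ to seed this correctly, plus verifying that the reduction is genuinely to $\|B\|_{\infty,\R}$ of the \emph{original} real tensor (and not some complex-normed intermediary that could blow up), is the delicate part. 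Once the inductive scheme and the real-to-complex conversion lemma are set up cleanly, the rest is a routine, if careful, computation; I would also double-check the statement against the $N=2$ case where it must reduce to the ordinary complex Grothendieck inequality with the extra $\sqrt2$, and against Zukowski's lower bound $2^{-1}(\pi/2)^N$ to confirm the exponential dependence is of the right order.
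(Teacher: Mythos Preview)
Your high-level plan (induction on $N$, with $N=2$ seeded by the complex Grothendieck inequality plus a $\sqrt2$ real/complex conversion) matches the paper's, and your base case is essentially the paper's argument: they use the identity $\Re(\alpha\beta)=\langle a,b\rangle_{\R^2}$ and Krivine's $K_G^\R(2)=\sqrt2$ to get exactly $\sqrt2\,K_G^\C\|B\|_{\infty,\R}$.

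The genuine gap is in your inductive step. Your proposal to ``peel off $f_N$ and apply the $N=2$ case with $(i_1,\dots,i_{N-1})$ as a compound index'' does not recurse: Grothendieck applied this way yields
\[
K_G^\C\max_{x,y}\Big|\sum_{i_1,\dots,i_N}B[i_1,\dots,i_N]\,x(i_1,\dots,i_{N-1})\,y(i_N)\Big|,
\]
where $x$ is a single function of the \emph{compound} index. This quantity is not controlled by $\|B\|_{\infty,\R}$ (which requires $N$ \emph{separate} functions), nor does it give you back an $(N-1)$-linear problem of the same shape; there is no way to continue the induction from here. You sense this (``partial product vectors need not be unit vectors and must be renormalized'') but do not supply the missing ingredient.

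What the paper actually does for the step $N\to N-1$ is: set $M[i_N,j]=\sum_{i_1,\dots,i_{N-1}}B[i_1,\dots,i_N]\,f_1(i_1)_j\cdots f_{N-1}(i_{N-1})_j$, use Cauchy--Schwarz to bound the original sum by $\sum_{i_N}\big(\sum_j|M[i_N,j]|^2\big)^{1/2}$, and then apply an extension of \emph{Littlewood's inequality} (itself a consequence of Khintchine's inequality with sharp constant $A_1=1/\sqrt2$) to get
\[
\sum_{i_N}\Big(\sum_j|M[i_N,j]|^2\Big)^{1/2}\le 2\sqrt2\,\max_{\phi\in\{\pm1\}^n,\ \chi\in\Ball(\ell_\infty^d(\C))}\Big|\sum_{i_N,j}M[i_N,j]\phi(i_N)\chi(j)\Big|.
\]
The crucial point is the \emph{asymmetry} of the output: the $\pm1$-valued $\phi$ can be absorbed into the real tensor $B$ to produce a genuine real $(N-1)$-tensor $C[i_1,\dots,i_{N-1}]=\sum_{i_N}B[i_1,\dots,i_N]\phi^*(i_N)$, while the complex $\chi$ is absorbed coordinate-wise into the remaining vectors $f_1,\dots,f_{N-1}$ (still unit vectors), giving exactly an $(N-1)$-instance of the same inequality. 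This is what makes the recursion close, and it is also what produces the factor $2\sqrt2=2^{3/2}$ per step that you correctly anticipated from the exponent $(3N-5)/2$. Your proposal names Cauchy--Schwarz but is missing the Littlewood/Khintchine step entirely; without it the argument does not go through.
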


This theorem is proved in Appendix~\ref{apptonge}. Since $\|A\circ \pi\|_{\infty,\R}$ is exactly the classical bias of $G=(A,\pi)$, combining Lemma~\ref{lem:ghzbound} with Theorem~\ref{kgrothbound} leads to the bound $\betag(G) \leq 2^{(3N-5)/2}\,K_G^{\C}\, \beta(G)$, thus proving Theorem~\ref{schmidtbound} for the special case of GHZ states.

\subsection{Extension to Schmidt states.}\label{sec:diagonal}
We extend the result of Section~\ref{sec:ghzbound} to the case of Schmidt states, thus proving Theorem~\ref{schmidtbound} in full generality. For this, analoguously to Lemma~\ref{lem:ghzbound}, it is suffient to show that, if $\ket{\Psi}=\sum_{i=1}^d \alpha_i \ket{i}^{\otimes N}$ is a Schmidt state and $B$ an $N$-tensor, then 
\[
\beta^*_{\ket{\Psi}}(B)\, \le\,  \gamma^*(B)
\]
The theorem will follow by setting $B=A\circ\pi$ for a $N$-player game $G=(A,\pi)$, and applying Theorem~\ref{kgrothbound}.

\begin{proof}[ of Theorem~\ref{schmidtbound}]
Let $B$ be an $N$-tensor, and $\{M_j(x_j)\}_{j,x_j}$ a choice of $\pm 1$ valued observables which achieve the maximal bias $\beta^*_{\ket{\Psi}}(B)$. By absorbing any complex phases into 
the strategy of player one, we can assume that all coefficients $\alpha_i$ are positive reals. The 
following claim shows that $\ket{\Psi}$ can be  expressed as a weighted sum of GHZ-type states.

\begin{claim}\label{claim:generalizedghz}
Let $\ket{\Psi} = \sum_{i=1}^d \alpha_i \ket{i}^{\otimes N}$ be a (normalized) state such that the $\alpha_i$ are positive reals. Then there exists positive reals $\beta_1,\ldots,\beta_d$ such that $\ket{\Psi}=\sum_{\ell=1}^d \beta_\ell \ket{\phi_\ell}$, where  $\ket{\phi_\ell} = \sum_{i=1}^\ell \ket{i}^{\otimes N}$ for $\ell=1, \ldots, d$ is a ``partial'' (un-normalized) GHZ state. Moreover, the $\beta_i$ satisfy the following equation:
\begin{equation}
\label{norm}
\sum_{i,j=1}^d \beta_i \beta_j \cdot  \min\{i,j\} =1
\end{equation}
\end{claim}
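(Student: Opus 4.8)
The plan is to solve directly for the coefficients $\beta_\ell$ by comparing the two expansions of $\ket{\Psi}$ coordinate-by-coordinate. Writing $\ket{\phi_\ell} = \sum_{i=1}^\ell \ket{i}^{\otimes N}$, the amplitude of the basis vector $\ket{i}^{\otimes N}$ in $\sum_{\ell=1}^d \beta_\ell \ket{\phi_\ell}$ is exactly $\sum_{\ell \geq i} \beta_\ell$, since $\ket{i}^{\otimes N}$ appears in $\ket{\phi_\ell}$ precisely when $\ell \geq i$. Matching this against the amplitude $\alpha_i$ in $\ket{\Psi} = \sum_i \alpha_i \ket{i}^{\otimes N}$ gives the system $\sum_{\ell=i}^d \beta_\ell = \alpha_i$ for each $i \in [d]$. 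This is a triangular system which we invert from the top: $\beta_d = \alpha_d$, and $\beta_i = \alpha_i - \alpha_{i+1}$ for $i < d$. (If one does not wish to assume the $\alpha_i$ are non-increasing, one can first apply a permutation of the basis indices, which is a local unitary applied identically by all players and hence does not change the class of Schmidt states; I would note this reduction explicitly so that $\beta_i \geq 0$.) This establishes the existence of the claimed decomposition with positive reals $\beta_\ell$.

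Next I would verify the normalization identity~\eqref{norm}. The key computation is the inner product $\braket{\phi_i}{\phi_j}$: since $\braket{k}{k'} = \delta_{kk'}$ and the $N$-fold tensor preserves this, $\braket{\phi_i}{\phi_j} = \sum_{k=1}^{\min\{i,j\}} 1 = \min\{i,j\}$. Therefore
$$
1 = \braket{\Psi}{\Psi} = \Big\langle \sum_{i} \beta_i \phi_i \,\Big|\, \sum_j \beta_j \phi_j \Big\rangle = \sum_{i,j=1}^d \beta_i \beta_j \braket{\phi_i}{\phi_j} = \sum_{i,j=1}^d \beta_i \beta_j \min\{i,j\},
$$
which is precisely~\eqref{norm}. (All the $\beta_i$ are real, so there are no conjugation issues.)

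There is essentially no hard step here; the claim is a bookkeeping identity. The only point requiring a little care is the sign of the $\beta_i$: the naive inversion gives $\beta_i = \alpha_i - \alpha_{i+1}$, which is non-negative only if the $\alpha_i$ are sorted in non-increasing order. The cleanest fix is the remark above—reorder the Schmidt basis by a common local permutation before applying the claim—and I would flag this as the one substantive observation rather than let it pass silently. Everything else is a direct expansion, and the identity~\eqref{norm} then feeds into the proof of Theorem~\ref{schmidtbound} exactly as the normalization $\frac{1}{d}\sum_{i,j} 1 = 1$ did for the GHZ case in Lemma~\ref{lem:ghzbound}.
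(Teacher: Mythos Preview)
Your proof is correct and essentially identical to the paper's: both reorder so that $\alpha_1\ge\cdots\ge\alpha_d$, set $\beta_d=\alpha_d$ and $\beta_i=\alpha_i-\alpha_{i+1}$, and derive~\eqref{norm} from $\braket{\Psi}{\Psi}=1$ together with $\braket{\phi_i}{\phi_j}=\min\{i,j\}$. Your write-up is in fact slightly more explicit than the paper's in deriving the $\beta_i$ from the triangular system and in justifying the reordering as a local permutation.
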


\begin{proof}
Renaming the basis vectors as 
necessary, we can assume that $\alpha_d \le \cdots \le \alpha_1$.   Let $\beta_d=\alpha_d$ 
and $\beta_i=\alpha_i-\alpha_{i+1}$ for $i=1,\ldots, d-1$.  Then we have
$$
\ket{\Psi}=\sum_{i=1}^d \beta_i \ket{\phi_i}.
$$
Moreover, Eq.~(\ref{norm}) is immediate from the fact that $|\braket{\Psi}{\Psi}|=1$ and $\braket{\phi_i}{\phi_j} = \min\{i,j\}$ (recall that $
\ket{\phi_\ell}$ itself was not normalized).
\end{proof}

This reformulation of $\ket{\Psi}$ reduces the task of showing an upper bound 
on $\beta^*_{\ket{\Psi}}(B)$ to a form similar to what we had before.  Namely,
$$
\beta^*_{\ket{\Psi}}(B)=\sum_{i,j} \beta_i \beta_j \sum_{x_1, \ldots, x_N}B[x_1,\dots,x_N] \bra{\phi_i} 
M_1(x_1) \otimes \cdots \otimes M_N(x_N) \ket{\phi_j}.
$$
For fixed  $i,j$, each term of the sum involves unnormalized ``partial'' GHZ states, which can be handled in 
the same fashion as Lemma~\ref{lem:ghzbound}.
\begin{claim}\label{claim:partialghz}
Let $B$ be an $N$-tensor and $\{M_k(x_k)\}$ be $\pm 1$ valued observables.  Then
$$
\sum_{x_1, \ldots, x_N} B[x_1, \ldots, x_N] 
\bra{\phi_i} M_1(x_1) \otimes \cdots \otimes M_N(x_N) \ket{\phi_j} \le 
\min\{i,j\} \gamma^*(B).
$$
\end{claim}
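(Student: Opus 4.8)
The plan is to mimic the computation in the proof of Lemma~\ref{lem:ghzbound}, but now with the unnormalized partial GHZ states $\ket{\phi_i} = \sum_{a=1}^i \ket{a}^{\otimes N}$ and $\ket{\phi_j} = \sum_{b=1}^j \ket{b}^{\otimes N}$ in the two slots. First I would expand
$$
\bra{\phi_i} M_1(x_1)\otimes\cdots\otimes M_N(x_N)\ket{\phi_j} = \sum_{a=1}^i\sum_{b=1}^j \bra{a}M_1(x_1)\ket{b}\cdots\bra{a}M_N(x_N)\ket{b}.
$$
Substituting this into the sum over $x_1,\dots,x_N$ and applying the triangle inequality over one of the two indices — say $b$ — gives
$$
\Big|\sum_{x_1,\dots,x_N} B[x_1,\dots,x_N]\bra{\phi_i} M_1(x_1)\otimes\cdots\otimes M_N(x_N)\ket{\phi_j}\Big| \le \sum_{b=1}^j \Big|\sum_{x_1,\dots,x_N} B[x_1,\dots,x_N]\sum_{a=1}^i \bra{a}M_1(x_1)\ket{b}\cdots\bra{a}M_N(x_N)\ket{b}\Big|.
$$
For each fixed $b$, the inner sum $\sum_{a=1}^i \bra{a}M_\ell(x_\ell)\ket{b}$ viewed as a function of $a$ is the restriction to the first $i$ coordinates of the $b$-th column of the observable $M_\ell(x_\ell)$. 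Since $M_\ell(x_\ell)$ is a $\pm1$-valued observable it is unitary, so each such column is a unit vector, and hence its restriction to any subset of coordinates is a vector of norm at most $1$, i.e.\ an element of $\Ball(\C^d)$. Therefore each bracketed term on the right is of the form $|\sum_{x_1,\dots,x_N} B[x_1,\dots,x_N]\langle \phi_1(x_1),\dots,\phi_N(x_N)\rangle|$ with all $\phi_\ell(x_\ell)\in\Ball(\C^d)$, which is bounded by $\gamma^*(B)$ by the definition of $\gamma^*$. Summing over the $j$ values of $b$ yields a bound of $j\,\gamma^*(B)$; by symmetry of the argument in $i$ and $j$ (one may instead split over $a$), we get $\min\{i,j\}\,\gamma^*(B)$, which is the claim.

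The one point requiring a little care — the main (though still minor) obstacle — is the normalization bookkeeping: the vectors $\phi_\ell(x_\ell)$ fed into $\gamma^*$ must genuinely have norm at most $1$, and here they are truncations of columns of a unitary matrix, so I should note explicitly that truncation does not increase the Euclidean norm. Everything else is a direct transcription of the GHZ argument, with the only change being that the outer normalizing factor $1/d$ is replaced by the triangle-inequality sum over the (at most $\min\{i,j\}$) surviving diagonal terms coming from the supports of $\ket{\phi_i}$ and $\ket{\phi_j}$.
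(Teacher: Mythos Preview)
Your proof is correct and follows essentially the same argument as the paper: expand the bracket as a double sum over the supports of $\ket{\phi_i}$ and $\ket{\phi_j}$, pull one of the two index sums outside via the triangle inequality, and recognize each remaining term as a generalized inner product of truncated rows or columns of the unitary observables, hence bounded by $\gamma^*(B)$. The only cosmetic difference is that the paper immediately puts the \emph{smaller} of the two index ranges on the outside (so the outer sum has $\min\{i,j\}$ terms), whereas you first split over $b$ and then invoke the symmetric argument to get the minimum; both are equivalent.
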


\begin{proof}
\begin{align*}
\sum_{x_1, \ldots, x_N} B[x_1, \ldots, x_N] 
\bra{\phi_i} M_1(x_1) \otimes &\cdots \otimes M_N(x_N) \ket{\phi_j} = \\
&\sum_{x_1, \ldots, x_N} B[x_1, \ldots, x_N] \sum_{s=1}^i \sum_{t=1}^j 
\bra{s}M_1(x_1) \ket{t} \cdots \bra{s}M_N(x_N)\ket{t}
\end{align*}
We will order the double sum over $s,t$ depending on whether $i$ or $j$ is smaller---we 
want the outer sum to be over the smaller one.  Suppose that $i \le j$.  The other case is 
completely analogous.  Then
\begin{align*}
\sum_{x_1, \ldots, x_N} B[x_1, \ldots, x_N] \sum_{s=1}^i \sum_{t=1}^j 
\bra{s}M_1(x_1) \ket{t} &\cdots \bra{s}M_N(x_N)\ket{t}= \\
&\sum_{s=1}^i \sum_{x_1, \ldots, x_N} B[x_1, \ldots, x_N] \sum_{t=1}^j 
\bra{s}M_1(x_1) \ket{t} \cdots \bra{s}M_N(x_N)\ket{t}
\end{align*}
For each fixed $s$, the inner sum is now a generalized inner product of the first $j$ entries of 
the $s^{th}$ row of the $M_k(x_k)$'s.  Since these have norm at 
most one, we have 
$$
\sum_{x_1, \ldots, x_N} B[x_1, \ldots, x_N] 
\bra{\phi_i} M_1(x_1) \otimes \cdots \otimes M_N(x_N) \ket{\phi_j} \le 
\min\{i,j\} \gamma^*(B).
$$
\end{proof}

We can now finish the proof.
\begin{align*}
\beta^*_{\ket{\Psi}}(B)&=\sum_{i,j} \beta_i \beta_j \sum_{x_1, \ldots, x_N} \bra{\phi_i} 
M_1(x_1) \otimes \cdots \otimes M_N(x_N) \ket{\phi_j} \\
&\le \sum_{i,j} \beta_i \beta_j \min\{i,j\}  \gamma^*(B) \\
&=\gamma^*(B).
\end{align*}
The first inequality follows from Claim~\ref{claim:partialghz} and the second by
Claim~\ref{claim:generalizedghz}.
\end{proof}

%%%%%%%%%%%%%%%%%%%

\section{Proof of Theorem~\ref{hyperghzbell}}\label{sec:hyperghzbellproof}

The proof of Theorem~\ref{hyperghzbell} is based on a result by Carne \cite{carne:1980}, which 
essentially shows how Grothendieck-type inequalities can be composed in order to prove new 
inequalities of the same type. This will let us prove bounds on the entangled bias when the 
players are allowed to share any combination of EPR pairs and GHZ states. We first explain 
Carne's theorem in Section~\ref{sec:carne}, for which we give a self-contained proof in Appendix~\ref{app:carne}. 
We explain how it is applied to prove Theorem~\ref{hyperghzbell} in Section~\ref
{sec:applyingcarne}. We will end this section with a proof of Corollary~\ref{cor:stabilizer}.

\subsection{Carne's theorem}\label{sec:carne}

Carne \cite{carne:1980} showed that inequalities such as the one by Blei and Tonge (Theorem~\ref{kgrothbound}), could be composed in order to prove more general inequalities. His result also shows how Tonge's inequality can be re-derived as a consequence of the original Grothendieck inequality, though Tonge's version of inequality~\eqref{tongeineq} is tighter. 

To describe Carne's theorem, consider a hypergraph $H = (V,E)$, and associate with each edge $e\in E$ and vertex $x\in e$ a complex Hilbert space $\HS(x,e)$. Define $\HS_x := \bigotimes_{e\in E(x)}\HS(x,e)$. 
Assume that, with each edge $e\in E$ is associated a multi-linear continuous functional $\psi_e:\bigotimes_{x\in e}\HS(x,e)\to\C$ that satisfies a Grothendieck-type inequality, i.e. for every $|e|$-tensor $D:[n]^{e}\to\K = \R$ or $\C$ and set of functions $f_x:[n]\to\Ball(\HS_x)$, for each $x\in e$, the inequality
\beq\label{carneineq}
\sum_{K\in[n]^{|e|}}D[K]\psi_e\Big(\bigotimes_{x\in e}f_x(i_x)\Big) \leq C_e^{\K'}\, \|D\|_{\infty,\K'}
\eeq
holds for a field $\K'=\R$ or $\C$, and some constant $C_e^{\K'}$ independent of the tensor $D$. Carne's theorem then states that the natural combination of the linear functionals $\psi_e$ in a general multi-linear functional $\Phi$ defined over the whole Hilbert space $\HS = \otimes_{x\in V} \HS(x)$ also satisfies a Grothendieck-type inequality, with underlying constant the product of the $C_e^{\K'}$. Note that, since a vertex $x$ can be part of many edges, there can be many functionals $\psi_e$ which act on the same space $\HS(x)$. This is what makes Carne's theorem non-trivial. To state it we need to define the linear \emph{re-arranging map} $\sigma$ as
\beqn
\sigma: \bigotimes_{x\in V}\Big(\bigotimes_{e\in E(x)}\HS(x,e)\Big) \to \bigotimes_{e\in E}\Big(\bigotimes_{x\in e}\HS(x,e)\Big),
\eeqn
which simply permutes the factors of a vector $v\in\bigotimes_{x\in V}\HS_x$. 
%Above we saw that for the generalized inner product function, this constant is given by $2^{(k-2)/2}K_G$.

\begin{theorem}[Slight extension of Carne 1980]\label{carnethm}
Let $\K,\K'\in\{\R,\C\}$. The  linear functional defined by $\Phi := \Big(\bigotimes_{e\in E}\psi_e\Big)\circ \sigma$ satisfies  that for every $|V|$-tensor $A:[n]^V\to\K$ and set of functions $f_x:[n]\to\Ball\big(\HS_x\big)$, for $x\in V$, the following inequality holds:
\beq\label{carnegenineq}
\sum_{I\in[n]^V}A[I]\cdot \Phi\Big(\bigotimes_{x\in V}f_x(i_x)\Big) \leq \Big(\prod_{e\in E}C_e^{\K'}\Big)\|A\|_{\infty,\K'}.
\eeq
\noindent where the constants $C_e^{\K'}$ are such that~\eqref{carneineq} holds. 
\end{theorem}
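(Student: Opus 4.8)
The plan is to reduce the multi-index inequality~\eqref{carnegenineq} to repeated applications of the single-edge hypothesis~\eqref{carneineq} by peeling off the edges of $H$ one at a time. First I would fix an enumeration $e_1, e_2, \dots, e_m$ of $E$ and fix the tensor $A : [n]^V \to \K$ together with functions $f_x : [n] \to \Ball(\HS_x)$ for $x \in V$. The key observation is that each $f_x(i_x)$ lives in $\HS_x = \bigotimes_{e \in E(x)} \HS(x,e)$, so by linearity we may expand each $f_x(i_x)$ in a product basis of this tensor product; writing $f_x(i_x) = \sum_{\alpha_x} c^{(x)}_{i_x, \alpha_x}\, \bigotimes_{e \in E(x)} g^{(x,e)}_{\alpha_x}$ where the $g^{(x,e)}_{\alpha_x}$ are (unit) vectors in $\HS(x,e)$. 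Substituting into the left-hand side of~\eqref{carnegenineq} and using multi-linearity of $\Phi = \big(\bigotimes_e \psi_e\big)\circ\sigma$, the whole expression becomes a sum over the multi-indices $(\alpha_x)_{x\in V}$ of products of the $\psi_e$ evaluated on the appropriate rearranged factors.

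Next I would process the edges in order. After rearranging, the contribution of edge $e_1$ appears as $\psi_{e_1}\big(\bigotimes_{x \in e_1} g^{(x,e_1)}_{\alpha_x}\big)$, and everything else (the coefficients, the other $\psi_{e_j}$ factors, and the outer sum over $A$) can be collected into a new tensor $D$ indexed by the variables $i_x$ for $x \in e_1$ — treating all the remaining data as ``frozen'' parameters. One must check that this new tensor $D$ has $\|D\|_{\infty,\K'}$ bounded by (essentially) $\|A\|_{\infty,\K'}$ times the contributions already extracted; this follows because the frozen data only contributes unit vectors inside other $\psi_{e_j}$'s, and the supremum norm $\|\cdot\|_{\infty,\K'}$ is defined by a maximum over $\Ball(\ell_\infty^n)$-valued assignments, which only grows when we fix some coordinates. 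Applying~\eqref{carneineq} to edge $e_1$ then extracts the factor $C^{\K'}_{e_1}$ and replaces the $i_x$-dependence (for $x \in e_1$) by $\Ball(\ell_\infty^n)$-valued scalars. Iterating this over $e_2, \dots, e_m$ extracts $\prod_{e} C^{\K'}_e$, and after all edges are processed the remaining object is exactly $\big|\sum_I A[I] \prod_x \phi_x(i_x)\big|$ for some $\phi_x \in \Ball(\ell_\infty^n)$, which is at most $\|A\|_{\infty,\K'}$ by definition.

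The main obstacle I anticipate is the bookkeeping in the inductive step: because a single vertex $x$ belongs to several edges, the scalar variables that a previous application of~\eqref{carneineq} produces for $x$ must be correctly ``merged'' with the contributions of later edges incident to $x$, without double-counting and without inflating the $\ell_\infty$-norm. Concretely, after processing edge $e_j$, the coordinate $i_x$ for $x \in e_j$ has been replaced by a scalar $\phi^{(j)}_x(i_x)$ of modulus at most $1$; when we later process another edge $e_{j'} \ni x$, we need these scalars to be absorbable into the frozen-parameter tensor $D$ for $e_{j'}$ in a way that still satisfies $\|D\|_{\infty,\K'} \le \|A\|_{\infty,\K'} \cdot (\text{already-extracted scalars of modulus} \le 1)$. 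The right way to organize this, and what I would write out carefully, is to maintain throughout the induction a single global tensor whose sup-norm never exceeds $\|A\|_{\infty,\K'}$, together with a running product of the $C^{\K'}_e$ constants; the rearranging map $\sigma$ is precisely the device that makes the grouping-by-edge consistent with the grouping-by-vertex in the domain of $\Phi$. Once this invariant is set up correctly, each step is a direct invocation of~\eqref{carneineq}, and the bound $\prod_e C^{\K'}_e \cdot \|A\|_{\infty,\K'}$ drops out after $m = |E|$ steps.
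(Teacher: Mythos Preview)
Your edge-by-edge strategy is exactly the paper's: it also inducts on $|E|$, removing one edge per step. The misstep in your proposal is the full product-basis expansion at the outset. Once you write $f_x(i_x)=\sum_{\alpha_x} c^{(x)}_{i_x,\alpha_x}\bigotimes_e g^{(x,e)}_{\alpha_x}$ with \emph{fixed} basis vectors $g^{(x,e)}_{\alpha_x}$, the factor $\psi_{e_1}\big(\bigotimes_{x\in e_1} g^{(x,e_1)}_{\alpha_x}\big)$ depends on the basis labels $\alpha_x$ and not on the question indices $i_x$. There are then no $\HS(x,e_1)$-valued functions of $i_x$ left to feed into~\eqref{carneineq}; the tensor $D$ you say is indexed by $(i_x)_{x\in e_1}$ and the argument of $\psi_{e_1}$ now live over different index sets, and you still carry an outer sum over the $\alpha$'s whose size you have no control over.

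The paper never expands in a full basis. It picks one edge $e_0$ and assumes only the coarse split $f_x(i_x)=f_x^0(i_x)\otimes f_x^1(i_x)$ with $f_x^1(i_x)\in\Ball(\HS(x,e_0))$ still a genuine function of $i_x$ (the non-product case is deferred to a Schmidt decomposition across this single split, at the very end). Crucially, it applies the induction hypothesis \emph{first}: setting $B[I]:=A[I]\cdot\psi_{e_0}\big(\bigotimes_{x\in e_0} f_x^1(i_x)\big)$, induction on $G_0=(V,E\setminus\{e_0\})$ gives $\sum_I B[I]\,\Phi_{G_0}\big(\bigotimes_x f_x^0(i_x)\big)\le\big(\prod_{e\ne e_0}C_e^{\K'}\big)\|B\|_{\infty,\K'}$. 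Only then is~\eqref{carneineq} for $e_0$ invoked, to bound $\|B\|_{\infty,\K'}$: after fixing the optimizing scalars $\phi_x$ inside that norm, one is left with $\sum_I A[I]\big(\prod_x\phi_x(i_x)\big)\psi_{e_0}\big(\bigotimes_{x\in e_0}f_x^1(i_x)\big)$, and now~\eqref{carneineq} applies directly with the genuine functions $i_x\mapsto f_x^1(i_x)$. The scalars $\phi_x$ absorb into $A$ without raising $\|A\|_{\infty,\K'}$. This is precisely the invariant you anticipated; the order ``induction before single-edge bound'' is what dissolves the bookkeeping obstacle you identified.
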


In particular, if $\K=\R$ and each $\psi_e$ is the generalized inner product function on $\bigotimes_{x\in e}\HS(x,e)$, then it follows from Theorem~\ref{kgrothbound} that the constant in \eqref{carnegenineq} is upper bounded by $\Big(\prod_{e\in E}2^{(3|e|-5)/2}\Big)\, (K_G^{\C})^{|E|}$.

\subsection{Bounding the bias achievable by strategies with clique-wise entanglement}\label{sec:applyingcarne}
Consider an $N$-player game $G=(A,\pi)$. Let the players be organized in $k$ coalitions of $r$ players each\footnote{The organization of these coalitions is independent of the game itself; rather it is used to define the structure of the entanglement that is shared between the players.}, where a given player can take part in any number of coalitions. Each coalition of players is allowed to share a GHZ state between its members.

To model this setup, associate a hypergraph $H=(V,E)$ to the coalition structure, with $V=[N]$ and there is a hyperedge for every coalition. For every edge $e$ we introduce a Hilbert space $\HS(e) = \otimes_{x\in e} \HS(x,e)$, where $\HS(x,e)$ is the local space of player $x$ corresponding to edge $e$. The state of the players in this space is initialized in a GHZ state $\ket{\Psi_e} = d^{-1/2} \sum_{i=1}^d \ket{i}^{\otimes |e|}$. The global entangled state shared by the players at the start of the game is then 
\begin{align}\label{eq:psitilde}
\ket{\tilde{\Psi}} &= \otimes_{e\in E} \ket{\Psi_e} \, \in \, \bigotimes_{e\in E} \Big( \bigotimes_{x\in e} \HS(x,e) \Big)
\end{align}
 Finally, each player $x$ has observables $M_x(i)$ corresponding to question $i$. These act on player $x$'s local space $\HS(x) = \otimes_{e\ni x} \HS(x,e)$.

Theorem~\ref{hyperghzbell} states that the maximum bias achievable by a strategy of the form that we have just described is at most a constant times the classical bias of the game. In order to prove it, we first relate the bias achieved by any strategy to an expression similar to the one appearing on the left-hand side of~(\ref{carnegenineq}) in Carne's theorem, where $\psi_e$ will be the linear functional that is associated with the GHZ state, i.e. the generalized inner product function. Applying Theorem~\ref{carnethm} will conclude the argument.

\begin{proof}[ of Theorem~\ref{hyperghzbell}]
Fix observables $M_x$ and an entangled state $\ket{\Psi} \in \otimes_{x\in V} \HS_x$ of the form described above. Note that $\ket{\Psi} = \sigma^{-1}(\ket{\tilde{\Psi}})$, where $\ket{\tilde{\Psi}}$ is described in Eq.~(\ref{eq:psitilde}). This is because we need to re-arrange the terms in the definition of $\ket{\tilde{\Psi}}$ to correspond to the decomposition of space $\otimes_{x\in V} \HS_x$.

 We begin by expanding $\bra{\Psi}\bigotimes_{x\in V}M_x\ket{\Psi}$, with the goal of relating it to the map $\Phi$ of Theorem~\ref{carnethm}. Let $[d]^E$ denote the set of $|E|$-tuples $(j_e)_{e\in E}$. We have
\beqrn
\ket{\Psi} &=& \sigma^{-1}\left(\frac{1}{\sqrt{d^{|E|}}}\:\bigotimes_{e\in E}\Big(\sum_{e_i=1}^d\bigotimes_{x\in e}\ket{j_{e_i}}\Big)\right)\\
&=& \frac{1}{\sqrt{d^{|E|}}}\sum_{J\in[d]^E}\:\bigotimes_{x\in V}\ket{J_{|E(x)}}
\eeqrn
where $J_{|E(x)}$ denotes the restriction of the tuple $J\in [d]^E$ to those edges that contain the vertex $x$. Since observables are Hermitian, the expected value $\bra{\Psi}\bigotimes_{x\in V}M_x\ket{\Psi}$ is given by
\beqr
\bra{\Psi}\bigotimes_{x\in V}M_x\ket{\Psi} 
&=& \frac{1}{2\cdot d^{|E|}}\sum_{J',J\in[d]^E}\:\left(\prod_{x\in V}\bra{J'_{|E(x)}}M_x\ket{J_{|E(x)}} + \prod_{x\in V}\bra{J_{|E(x)}}M_x\ket{J'_{|E(x)}}\right)\nonumber\\
&=& \frac{1}{2\cdot d^{|E|}}\sum_{J',J\in[d]^E}\:\left(\prod_{x\in V}\bra{J'_{|E(x)}}M_x\ket{J_{|E(x)}} + \prod_{x'\in V}\bra{J'_{|E(x')}}M_{x'}^*\ket{J_{|E(x')}}\right)\nonumber\\
&=& \frac{1}{d^{|E|}}\sum_{J'\in[d]^E}\left(\sum_{J\in[d]^E}\:\Re\big(\prod_{x\in V}\bra{J'_{|E(x)}}M_x\ket{J_{|E(x)}}\Big)\right)\nonumber\\
&=& \frac{1}{d^{|E|}}\sum_{J'\in[d]^E}\Re\left(\sum_{J\in[d]^E}\:\prod_{x\in V}\big[M_x\big]_{J_{|E(x)},\, J'_{|E(x)}}\right),\label{hyperexpect}
\eeqr
where the subscript $(J_{|E(x)},J'_{|E(x)})$ indicates a row-column pair of the matrix $M_x$. Note that, since the expression on the left-hand side is real, the one on the right is too, and we can safely ignore the $\Re$ symbol on the right. Since the $M_x$ are unitary matrices, their columns are unit vectors. This implies that there exists unit vectors $v_x\in\bigotimes_{e\in E(x)}\HS(x,e)$ (depending on $J'$) such that the expression between the brackets in equation~\eqref{hyperexpect} is of the form
\beqn
\sum_{J\in [d]^E}\:\prod_{x\in V}v_x(J_{|E(x)})
\eeqn
where as usual $v_x(J_{|E(x)})$ denotes the restriction of the vector $v_x$ to those indices in $J_{|E(x)}$.

\begin{claim}\label{claim:phimap}
For $\Phi:= \Big(\bigotimes_{e\in E}\psi_e\Big)\circ\sigma$ with $\psi_e$ the generalized inner product function on $\bigotimes_{x\in e}\HS(x,e)$, we have
\beqn
\sum_{J\in [d]^E}\:\prod_{x\in V}v_x(J_{|E(x)}) = \Phi\Big(\bigotimes_{x\in V}v_x\Big).
\eeqn
\end{claim}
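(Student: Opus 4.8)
The claim is a matter of unwinding the definitions of the re-arranging map $\sigma$ and of the generalized inner product functionals $\psi_e$; the plan is to expand everything in the computational basis and simply track which indices get identified. First I would introduce, for each incidence pair $(x,e)$ with $x\in e$, an index $j_{x,e}\in[d]$, so that a computational basis vector of $\bigotimes_{x\in V}\HS_x=\bigotimes_{x\in V}\bigotimes_{e\in E(x)}\HS(x,e)$ is labelled by the collection $(j_{x,e})_{x\in e}$. Expanding each $v_x$ in this basis gives
\[
\bigotimes_{x\in V}v_x \;=\; \sum_{(j_{x,e})_{x\in e}}\Big(\prod_{x\in V}v_x\big((j_{x,e})_{e\in E(x)}\big)\Big)\;\bigotimes_{x\in V}\bigotimes_{e\in E(x)}\ket{j_{x,e}}.
\]

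Next I would apply $\sigma$, which by definition merely permutes the tensor factors, regrouping them from the per-vertex order $\bigotimes_{x}\bigotimes_{e\in E(x)}$ into the per-edge order $\bigotimes_{e}\bigotimes_{x\in e}$, without touching the coefficients:
\[
\sigma\Big(\bigotimes_{x\in V}v_x\Big) \;=\; \sum_{(j_{x,e})_{x\in e}}\Big(\prod_{x\in V}v_x\big((j_{x,e})_{e\in E(x)}\big)\Big)\;\bigotimes_{e\in E}\Big(\bigotimes_{x\in e}\ket{j_{x,e}}\Big).
\]
Then I would apply $\bigotimes_{e\in E}\psi_e$. Since $\psi_e$ is the generalized inner product on $\bigotimes_{x\in e}\HS(x,e)$, as a multilinear functional it acts on basis vectors by $\psi_e\big(\bigotimes_{x\in e}\ket{j_{x,e}}\big)=1$ if all the $j_{x,e}$ ($x\in e$) are equal and $0$ otherwise; equivalently $\psi_e=\sum_{j\in[d]}\bigotimes_{x\in e}\bra{j}$. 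Applying $\bigotimes_{e\in E}\psi_e$ to the displayed sum therefore annihilates every term in which, for some edge $e$, the indices $\{j_{x,e}:x\in e\}$ are not all equal, and assigns weight $1$ to the surviving ones. The survivors are parametrized by a single value $j_e\in[d]$ per edge, i.e. by a tuple $J=(j_e)_{e\in E}\in[d]^E$, and in such a term the argument of $v_x$ becomes $(j_e)_{e\in E(x)}=J_{|E(x)}$. This yields $\Phi\big(\bigotimes_{x\in V} v_x\big)=\sum_{J\in[d]^E}\prod_{x\in V}v_x(J_{|E(x)})$, which is exactly the claimed identity.

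The only real subtlety, which I expect to be the main point requiring care, is the bookkeeping in the last two steps: one must verify that after $\sigma$ each $\psi_e$ sees precisely the block $\bigotimes_{x\in e}\HS(x,e)$ and hence identifies only the incidence indices $j_{x,e}$ belonging to that particular edge $e$, while the entanglement of each $v_x$ across the different edges $e\in E(x)$ is never broken — it simply gets re-expressed through the shared edge-labels $J_{|E(x)}$. Everything else is a direct expansion in the computational basis.
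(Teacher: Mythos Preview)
Your proof is correct and follows essentially the same route as the paper: both arguments use linearity of $\Phi$ to reduce to elementary tensors and then unwind the definitions of $\sigma$ and the $\psi_e$. The only cosmetic difference is that the paper reduces to general product vectors $v_x=\bigotimes_{e\in E(x)}v_{x,e}$ and computes $\Phi$ on those, whereas you go one step further and expand in the computational basis; since basis vectors are themselves product vectors, the two computations are the same in substance.
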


\begin{proof} Since $\Phi$ is linear, it suffices to prove the claim for vectors of the form $v_x = \bigotimes_{e\in E(x)}v_{x,e}$, where each $v_{x,e}\in\HS(x,e)$. In this case, we have 
\beqrn
\Big(\bigotimes_{e\in E}\psi_e\Big)\circ \sigma\Big(\bigotimes_{x\in V}\big(\bigotimes_{e\in E(x)}v_{x,e}\big)\Big) &=& \bigotimes_{e\in E}\Big(\psi_e\big(\bigotimes_{x\in e}v_{x,e}\big)\Big)\\
&=& \prod_{e\in E}\Big(\sum_{j_e=1}^d\Big(\prod_{x\in e}v_{x,e}(j_e)\Big)\Big)\\
&=& \sum_{J\in[d]^E}\prod_{e\in E}\Big(\prod_{x\in e}v_{x,e}(j_e)\Big)\\
&=& \sum_{J\in[d]^E}\:\prod_{x\in V}\Big(\prod_{e\in E(x)}v_{x,e}(j_e)\Big),
\eeqrn
where the last product is  $\prod_{e\in E(x)}v_{x,e}(j_e) = v_x(J_{|E(x)})$.
\end{proof}

Let $M_x(i)$ be the observable used by player $x$ on question $i$, so that the bias achieved by this strategy in the game $G=(A,\pi)$ is
$$\Big|\sum_{I\in [n]^V} B[I]\, \bra{\Psi} \bigotimes_{x\in V} M_x(i_x) \ket{\Psi} \Big|$$
where $B = A\circ\pi$. 
We can bound this expression by
\begin{multline}\label{realcarne}
\Big|\sum_{I\in [n]^V}B[I]\left( \frac{1}{d^{|E|}}\sum_{J'\in[d]^E}\sum_{J\in[d]^E}\:\prod_{x\in V}\big[M_x(i_x)\big]_{J_{|E(x)},\, J'_{|E(x)}}\right)\Big|\\
 \leq  \frac{1}{d^{|E|}}\sum_{J'\in[d]^E}\Big|\sum_{I\in [n]^V}B[I]\cdot\sum_{J\in[d]^E}\:\prod_{x\in V}\big[M_x(i_x)\big]_{J_{|E(x)},\, J'_{|E(x)}}\Big|\\
 \leq \max_{J'\in[d]^E}\Big|\sum_{I\in [n]^V}B[I]\cdot\sum_{J\in[d]^E}\:\prod_{x\in V}\big[M_x(i_x)\big]_{J_{|E(x)},\, J'_{|E(x)}}\Big|\\
 \leq \max_{f_x:[n]\to\Ball(\HS_x):\, x\in V}\Big|\sum_{I\in[n]^V}B[I]\cdot \Phi\Big(\bigotimes_{x\in V}f_x(i_x)\Big)\Big|,
\end{multline}
where the first equality is~\eqref{hyperexpect}, and the last inequality follows from Claim~\ref{claim:phimap}. The result then follows directly from Theorem~\ref{carnethm} combined with the bound in Theorem~\ref{kgrothbound}, giving the last part of the theorem. 
\end{proof}

We end this section with a proof of  Corollary~\ref{cor:stabilizer}.

\begin{proof}[ of Corollary~\ref{cor:stabilizer}]
Theorem 5 in~\cite{bravyi:2006} states that, if $\ket{\Psi}$ is any stabilizer state shared in an arbitrary way among three parties, then $\ket{\Psi}$ is local-unitarily equivalent to a number of EPR pairs shared between each of the three pairs of players, together with a GHZ state shared in common. They even give the number of such states, based on the structure of the initial stabilizer state. It now suffices to consider the hypergraph $G$ with vertex set $V=\{1,2,3\}$, and edge set $E=\{\{1,2\},\{2,3\},\{1,3\},\{1,2,3\}\}$. In the notation of Theorem~\ref{hyperghzbell}, this hypergraph has $k=4$ and $r\leq 3$, which gives the bound $2^8 (K_G^{\C})^4$. However, a careful examination of the proof of Theorem~\ref{hyperghzbell} easily reveals that the inequality holds with the smaller constant~$8 (K_G^{\C})^4$.
\end{proof}

\section{Application to communication complexity}\label{sec:CC}
In this section we give a proof of Theorem~\ref{thm:cc}, showing a lower bound on quantum 
multiparty communication complexity for clique-wise entanglement. 

For a sign $N$-tensor $A$, let $R_\epsilon(A)$ denote the multiparty randomized communication 
complexity of $A$ with error at most $\epsilon$, and let $R_\epsilon^{\psiket}(A)$ denote the 
minimal cost of an $\epsilon$ bounded-error protocol where the players share a state $\psiket$, 
and communicate classical bits.  Finally, let $Q_\epsilon^{\psiket}(A)$ denote the minimal 
cost of a protocol in the strongest model we will consider---where the players share entanglement
$\psiket$ and use quantum communication.  We refer the reader to \cite{lee:2009} for a 
description of the multiparty quantum model of communication.  The reader should think of 
all these measures in the number-in-the-hand (NIH) model of communication; at the end we will 
explain why the results also hold in the number-on-the-forehead (NOF) model.

The generalized discrepancy method is a very useful lower bound method for randomized 
communication complexity and still essentially the only lower bound method available in the 
NOF model of multiparty complexity.  
It was developed over a sequence of works for the two-party and multiparty models
\cite{klauck:2007, razborov:2003, sherstov:2008,lee-shraibman:2009, chattopadhyay:2008}.
\begin{theorem}
Let $A$ be a sign $N$-tensor.  Then
$$
2^{R_\epsilon(A)} \ge \max_{B,\pi} \frac{\langle A, B \circ \pi \rangle -2\epsilon}{\beta(B \circ \pi)}
$$
where the maximization is over all sign $N$-tensors $B$ and probability distributions $\pi$. 
\label{thm:gen_disc}
\end{theorem}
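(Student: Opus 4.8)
The plan is to take an optimal randomized protocol $P$ for $A$, of cost $c:=R_\epsilon(A)$ and worst-case error at most $\epsilon$, and to prove that for \emph{every} sign $N$-tensor $B$ and probability distribution $\pi$,
\[
\langle A, B\circ\pi\rangle - 2\epsilon \;\le\; 2^{c}\,\beta(B\circ\pi).
\]
Since $\beta(B\circ\pi)=\|B\circ\pi\|_{\infty,\R}>0$ whenever $\pi$ is a probability distribution (a multilinear form with vanishing $\|\cdot\|_{\infty,\R}$-norm is the zero tensor, which would force $\pi\equiv 0$), dividing and taking the maximum over $B$ and $\pi$ yields the theorem, the case of a non-positive numerator being trivial. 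Throughout I read $R_\epsilon$ in the number-in-the-hand model, as elsewhere in the section. The proof splits into a \emph{small-error} estimate, comparing the average output $\Exp[P]$ of the protocol with $A$, and a \emph{small-communication} estimate --- the discrepancy bound itself --- comparing $\Exp[P]$ with an arbitrary $B\circ\pi$.

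For the small-error step, write $\Exp[P(I)]\in[-1,1]$ for the expected $\pm1$-valued output on input $I$. Bounded error means $\Pr[P(I)=A[I]]\ge 1-\epsilon$, hence $A[I]\,\Exp[P(I)]=2\Pr[P(I)=A[I]]-1\ge 1-2\epsilon$, and since $A[I]=\pm1$ this gives the pointwise bound $|\Exp[P(I)]-A[I]|\le 2\epsilon$. As $B[I]=\pm1$ and $\pi$ is a probability distribution, it follows that
\[
\langle \Exp[P],\,B\circ\pi\rangle \;=\; \langle A,B\circ\pi\rangle \;+\; \sum_{I}\big(\Exp[P(I)]-A[I]\big)\,B[I]\,\pi(I) \;\ge\; \langle A,B\circ\pi\rangle - 2\epsilon.
\]

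For the discrepancy step, the set of complete transcripts of a deterministic $c$-bit NIH protocol induces a partition of $[n]^N$ into at most $2^{c}$ combinatorial boxes $R=S_1\times\cdots\times S_N$, on each of which the output equals a constant sign $s_R\in\{-1,1\}$; thus the protocol's output function is $\sum_R s_R\,\chi_{S_1}(i_1)\cdots\chi_{S_N}(i_N)$, with $\chi_S$ a $\{0,1\}$-valued indicator. Each such indicator lies in $\Ball(\ell_\infty^n)$, so by the very definition of $\|\cdot\|_{\infty,\R}=\beta(\cdot)$ (Definition~\ref{phinorm}) each box contributes at most $\beta(B\circ\pi)$ in absolute value, and the triangle inequality over the $\le 2^{c}$ boxes gives $|\langle P, B\circ\pi\rangle|\le 2^{c}\,\beta(B\circ\pi)$. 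A public-coin randomized protocol is a distribution over such deterministic ones, so the bound survives by convexity; for a private-coin protocol one argues directly that $\Exp[P(I)]=\sum_t \mathrm{out}(t)\prod_j p_j(t\mid i_j)$, where $\mathrm{out}(t)\in\{-1,1\}$ and $p_j(t\mid i_j)\in[0,1]\subseteq[-1,1]$ is the probability that player $j$ produces its part of transcript $t$, the complete transcripts forming a prefix-free set of size $\le 2^{c}$; hence again $|\langle \Exp[P],B\circ\pi\rangle|\le 2^{c}\,\beta(B\circ\pi)$. Chaining this with the small-error estimate proves the displayed inequality, hence the theorem; the number-on-the-forehead version follows from the same argument with boxes replaced by cylinder intersections, as explained at the end of the section.

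The only genuinely delicate point --- the one I would be most careful about --- is the bookkeeping for randomized, and especially private-coin, multiparty protocols: one must verify that the relevant functions (the box indicators, or the per-player transcript probabilities $p_j(t\mid\cdot)$) really do factor as products of single-coordinate functions of sup-norm at most $1$, and that the number of complete transcripts does not exceed $2^{c}$. Everything else --- the pointwise $2\epsilon$ estimate, the triangle inequality over boxes, and the identification of $\{0,1\}$-indicators with elements of $\Ball(\ell_\infty^n)$ --- is routine; this decomposition-into-rectangles computation is exactly the classical proof of the discrepancy lower bound, here phrased through the $\|\cdot\|_{\infty,\R}$ norm, which is the form in which it combines with Theorems~\ref{schmidtbound} and~\ref{hyperghzbell} to yield the entangled bound of Theorem~\ref{thm:cc}.
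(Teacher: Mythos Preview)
Your argument is correct and is the standard ``rectangle decomposition'' proof of the (generalized) discrepancy bound: split $\Exp[P]$ into at most $2^{c}$ combinatorial boxes, bound each by $\beta(B\circ\pi)$ via Definition~\ref{phinorm}, and combine with the pointwise $2\epsilon$ estimate. Note, however, that the paper does \emph{not} actually prove Theorem~\ref{thm:gen_disc}; it simply cites it to prior work and then proves the entangled analogue, Proposition~\ref{thm:entangledCC}, by a genuinely different route.

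That route is worth contrasting with yours. Rather than decomposing the protocol's output into boxes, the paper converts the $c$-bit protocol directly into an XOR-game strategy: the players use a shared random string $r\in\{0,1\}^c$ as a ``guess'' for the transcript, each simulates the protocol locally against $r$, and outputs a random bit if $r$ is inconsistent with what she would have sent (otherwise player~1 outputs according to the protocol and the others output~$1$). The resulting bias is exactly $2^{-c}\langle R, B\circ\pi\rangle$, where $R=\Exp[P]$, which yields the same inequality. This simulation argument is no shorter for the classical statement, but it is the reason the paper can handle entanglement: the players in the XOR game inherit the entangled state $\ket{\Psi}$ from the communication protocol, so the bound is immediately against $\beta^*_{\ket{\Psi}}$ rather than $\beta$. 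Your box-counting proof, by contrast, does not obviously carry entanglement through, since an entangled protocol need not decompose into a small number of product ``boxes''; so while it is the cleaner argument for Theorem~\ref{thm:gen_disc} itself, the paper's approach is what makes Proposition~\ref{thm:entangledCC} go through.
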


We start by proving a result exactly analogous to Theorem~\ref{thm:gen_disc} for protocols with 
entanglement.
\begin{proposition}
Let $A$ be a sign $N$-tensor.  Then for any state $\psiket$
$$
2^{R_\epsilon^{\psiket}(A)} \ge 
\max_{B,\pi} \frac{\langle A, B \circ \pi \rangle -2\epsilon}{\beta_{\psiket}^*(B \circ \pi)} 
$$
where the maximization is over all sign $N$-tensors $B$ and probability distributions $\pi$. 
\label{thm:entangledCC}
\end{proposition}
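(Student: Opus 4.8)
The plan is to run the argument behind the classical generalized discrepancy bound (Theorem~\ref{thm:gen_disc}), with the combinatorial rectangles of a deterministic protocol replaced by product-form measurement operators coming from a classical-communication protocol with shared entanglement. Writing $c=R_\epsilon^{\psiket}(A)$, it suffices to show that for every sign $N$-tensor $B$ and every distribution $\pi$,
\[
\langle A,B\circ\pi\rangle-2\epsilon\;\le\;2^{c}\,\beta_{\psiket}^*(B\circ\pi),
\]
the ``$\max$'' form in the statement being just a repackaging of this.

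The structural core is an unrolling of the protocol (cf.~\cite{lee:2009}). Fix an optimal $c$-bit bounded-error protocol with shared state $\psiket$; after appending a constant number of bits we may assume its output is a $\pm 1$-valued function $o_\tau$ of the transcript $\tau$, and there are at most $2^c$ transcripts. The point at which classicality of the communication is crucial is the following: conditioned on a transcript $\tau$, the measurements performed by player $x$ act on player $x$'s local system only, and those of distinct players commute; collecting the projectors applied by player $x$ along $\tau$ into a single operator and tracing out the (product) ancilla, one obtains for each $x$ a positive operator $\Pi_x^\tau(i_x)$ on $\HS_x$ with $0\preceq\Pi_x^\tau(i_x)\preceq\I$, such that the probability of transcript $\tau$ on input $I=(i_1,\dots,i_N)$ equals $\psibra\bigotimes_x\Pi_x^\tau(i_x)\psiket$. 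Hence the expected $\pm 1$ output on input $I$ is $\Exp[\Pi(I)]=\sum_\tau o_\tau\,\psibra\bigotimes_x\Pi_x^\tau(i_x)\psiket$, with any public randomness simply carried along and averaged at the end. (For genuinely quantum communication one only gets such a decomposition after teleportation, which is why EPR pairs reappear in the applications.)

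Next comes the bookkeeping, mirroring the ``discrepancy of a rectangle is at most the discrepancy of the function'' step. By the triangle inequality over the at most $2^c$ transcripts,
\[
\Big|\sum_I B[I]\pi(I)\,\Exp[\Pi(I)]\Big|\;\le\;\sum_\tau\Big|\sum_I B[I]\pi(I)\,\psibra\bigotimes_x\Pi_x^\tau(i_x)\psiket\Big|,
\]
so it is enough to bound each transcript's term by $\beta_{\psiket}^*(B\circ\pi)$. For fixed $\tau$, write $\Pi_x^\tau(i_x)=\tfrac12(\I+E_x(i_x))$ with $E_x(i_x)$ Hermitian of operator norm at most $1$, and expand $\bigotimes_x\tfrac12(\I+E_x(i_x))=2^{-N}\sum_{S\subseteq[N]}\bigotimes_{x\in S}E_x(i_x)$ (with $\I$ on the coordinates outside $S$). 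Randomly thresholding its eigenvalue decomposition writes each $E_x(i_x)$ as an average of $\pm 1$-valued observables, and since $\I$ is itself a $\pm 1$-valued observable, every term so obtained, paired with $B\circ\pi$, has absolute value at most $\beta_{\psiket}^*(B\circ\pi)$. As there are $2^N$ sets $S$, the factor $2^{-N}$ exactly cancels the count, so the term for $\tau$ is at most $\beta_{\psiket}^*(B\circ\pi)$; summing over transcripts gives $\big|\sum_I B[I]\pi(I)\Exp[\Pi(I)]\big|\le 2^c\,\beta_{\psiket}^*(B\circ\pi)$.

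Finally, the error slack: since the protocol errs on input $I$ with probability $p_I\le\epsilon$, we have $\Exp[\Pi(I)]=A[I](1-2p_I)$, whence $\sum_I B[I]\pi(I)\Exp[\Pi(I)]=\langle A,B\circ\pi\rangle-2\sum_I A[I]B[I]\pi(I)p_I$, and the last sum has absolute value at most $\epsilon$. Combining with the previous bound yields $\langle A,B\circ\pi\rangle\le 2^c\,\beta_{\psiket}^*(B\circ\pi)+2\epsilon$; maximizing over $B$ and $\pi$ and recalling $c=R_\epsilon^{\psiket}(A)$ proves the proposition. I expect the main obstacle to be the structural step of the second paragraph---verifying carefully that classicality of the communication yields the product form $\psibra\bigotimes_x\Pi_x^\tau(i_x)\psiket$ with each $\Pi_x^\tau(i_x)$ an operator on player $x$'s space alone---and, secondarily, checking that the $\pm 1$-observable conversion introduces no spurious dependence on $N$.
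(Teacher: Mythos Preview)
Your proof is correct but takes a genuinely different route from the paper's. The paper gives an \emph{operational} argument: it builds an explicit XOR-game strategy from the communication protocol by having the players use a shared random string $r$ of length $c$ as a ``guess'' for the transcript, simulate the protocol locally using $r$ in place of received messages, and output a random bit whenever an inconsistency is detected; when $r$ happens to match the actual transcript (probability $2^{-c}$), player~1 outputs the protocol's answer and the rest output $+1$. Convexity over $r$ then removes the shared randomness. Your approach is instead \emph{analytical}: you decompose the protocol's expected output as $\sum_\tau o_\tau\,\psibra\bigotimes_x\Pi_x^\tau(i_x)\psiket$, apply the triangle inequality over the $\le 2^c$ transcripts, and convert each POVM element to a convex combination of $\pm1$ observables.

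Both arguments are standard for discrepancy-type bounds and give the same constant. The paper's version sidesteps the POVM-to-observable conversion entirely (the XOR-game players output $\pm1$ bits by construction), at the cost of the transcript-guessing construction. Your version has the advantage of making explicit the structural fact you flag at the end: classicality of the communication is precisely what yields the product form $\psibra\bigotimes_x\Pi_x^\tau(i_x)\psiket$. As a minor simplification of your third paragraph, note that since each $\Pi_x^\tau(i_x)\in[0,\I]\subset[-\I,\I]$ is already a convex combination of $\pm1$ observables, the tensor product $\bigotimes_x\Pi_x^\tau(i_x)$ is directly a convex combination of tensor products of $\pm1$ observables, so the bound by $\beta_{\psiket}^*(B\circ\pi)$ follows without the $\tfrac12(\I+E)$ expansion and the $2^N$--$2^{-N}$ cancellation.
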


\begin{proof}
Consider a communication protocol with entanglement $\psiket$ for $A$ of minimal cost $c$ 
and error at most $\epsilon$.  Let $R$ be the $N$-tensor such that $R[x_1, \ldots, x_N]$ is 
the expectation of the output of this protocol on input $(x_1, \ldots, x_N)$.  
By assumption of the correctness of the protocol, if $A[x_1, \ldots, x_N]=1$ then
$1-2\eps \le R[x_1, \ldots, x_N] \le 1$ and if $A[x_1, \ldots, x_N]=-1$ then 
$-1 \le R[x_1, \ldots, x_N] \le -1+2\eps$.

Fix a probability distribution $\pi$ and let $B$ be an arbitrary sign
tensor of the same dimensions as $A$.  We will see how the communication
protocol for $A$ can be used to design a XOR protocol for $B$.
The bias of this protocol will be related to the amount of communication $c$ and the
correlation $\langle A,B \circ \pi \rangle$.

The strategy in the XOR game is as follows.  We may assume that the players have 
access to a shared random string $r$.  A convexity argument shows that 
shared randomness cannot increase the bias.  
On input $(x_1, \ldots, x_k)$ the players look at the shared random string $r$ of 
length $c$.  The players interpret $r$ as a ``guess'' for the transcript of 
the communication protocol on input $(x_1, \ldots, x_N)$.  Their goal is to discover 
if this transcript is correct.  The point is that if it is not, at least one player will notice it.  

Suppose that the first player speaks first.  She makes a measurement on the entangled 
state and determines that in the communication protocol she would speak a bit $b_1$.  
She then checks if $b_1$ agrees with $r_1$, the first bit of $r$.  Say that the second player 
speaks next.  Assuming that $r_1$ is the bit communicated by the first player, he then makes a 
measurement and determines a bit $b_2$ that he would communicate in the protocol.  
He then checks if $b_2$ agrees with $r_2$, the second bit of the random string.  
This process continues in this fashion as the players simulate the entire communication protocol.

If at any time player $i$ notices that a bit $r_t$ does not agree with what he would communicate,
assuming that the communication thus far has been given by $r_1 \cdots r_{t-1}$, we say that
$r$ is inconsistent with player $i$.  Otherwise it is consistent.

Now we define the output conditions
\begin{itemize}
  \item If $r$ is inconsistent with the first player, then she outputs a random bit in 
  $\{-1,+1\}$.  Otherwise, she outputs a bit $\{-1,+1\}$ with expectation $R[x_1, \ldots, x_N]$.
  \item If $r$ is inconsistent with player $i$ for $i>1$, then they output a random bit.  
  Otherwise, they output $1$.
\end{itemize}

Let $P[x_1, \ldots, x_N]$ be the expected output of this protocol on input $x_1, \ldots, x_N$.
Let us now compute the correlation of this protocol with $B$ under $\pi$:

\begin{align*}
\beta^*(B \circ \pi) &\ge \langle B \circ \pi , P \rangle \\
&=\frac{1}{2^c} \sum_{x_1, \ldots, x_N} \pi(x_1, \ldots, x_N) B[x_1, \ldots, x_N] R[x_1, \ldots, x_N] \\
&\ge \frac{1}{2^{c}} \left(\sum_{x_1, \ldots, x_N} \pi(x_1, \ldots, x_N) 
B[x_1, \ldots, x_N] A[x_1, \ldots, x_N]-2\eps \right)
\end{align*}

Rearranging, this gives the desired result:
$$
2^c \ge \max_{B,\pi} \frac{\langle A,B \circ \pi \rangle-2\eps}{\beta^*(B \circ \pi)}
$$
\end{proof}

In the two-party case, it is known that the model of shared entanglement and classical 
communication can simulate the model of shared entanglement and quantum communication
with a factor of two overhead.  The key idea is that if the parties share EPR-pairs, they can 
use these to pass quantum messages via teleportation with a cost of two classical bits 
per qubit.  We can also use this trick in the multiparty setting.
\begin{claim}
Let $A$ be a sign $N$-tensor. Let $\psiket$ be an entangled state, and let $R_\epsilon^{\psiket,E}(A)$ be
the minimum of $R_\epsilon^{\psiket'}(A)$ over all entangled states $\psiket'$ constituted of $\psiket$ together with an arbitrary 
number of EPR pairs. Then
$$
Q_\epsilon^{\psiket}(A) \ge \frac{R_\epsilon^{\psiket,E}(A)}{2}.
$$
\label{claim:tele}
\end{claim}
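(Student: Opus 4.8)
The plan is to use the standard teleportation simulation, applied round by round to the multiparty protocol. First I would fix a quantum protocol $P$ with shared state $\psiket$ witnessing $Q_\eps^{\psiket}(A)=q$: it has error at most $\eps$ on every input and total quantum communication $q$ qubits. I would pad the messages so that $P$ sends exactly $q$ qubits on every input and every run, and I would record that at each of these $q$ rounds the (ordered) pair consisting of the sending player and the receiving player is a function of the transcript produced so far.

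Next I would set $\psiket'$ to be $\psiket$ together with $q$ EPR pairs shared between every pair of the $N$ players. This state is of the form permitted in the definition of $R_\eps^{\psiket,E}$, and since the total quantum communication of $P$ is only $q$ qubits, no matter how the $q$ rounds are distributed among the pairs of players---and even if that distribution is adaptive---there is always an unused EPR pair between the two players involved in the current round. I would then describe the classical protocol that simulates $P$ using $\psiket'$: all local quantum operations are performed as in $P$, but whenever $P$ instructs player $i$ to send a message qubit to player $j$, player $i$ instead teleports that qubit to $j$ using an unused EPR pair shared with $j$, sending the two classical bits produced by the Bell-basis measurement; player $j$ applies the corresponding Pauli correction, after which his half of the pair holds exactly the qubit $P$ would have delivered. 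At the end, the designated player performs the same final measurement as in $P$ and outputs its bit.

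For correctness I would invoke that teleportation realizes the identity channel exactly, so that, conditioned on the transmitted classical bits, the joint state of all quantum registers and classical records after each round coincides with that of $P$. Hence the output distribution---and in particular the error probability---is unchanged, while the classical communication used is exactly $2q$ bits. This yields $R_\eps^{\psiket,E}(A)\le R_\eps^{\psiket'}(A)\le 2q = 2\,Q_\eps^{\psiket}(A)$, which rearranges to the claimed inequality.

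The one step that genuinely needs attention in the multiparty setting---the closest thing here to an obstacle---is ensuring that a fresh EPR pair between the correct pair of players is available at every teleportation even when $P$'s communication pattern is adaptive; the crude over-provisioning of EPR pairs above handles this, using that the communication budget $q$ caps the total number of teleportations. Beyond that bookkeeping, the argument is just the familiar two-party fact that teleportation converts quantum communication into classical communication with a factor-two overhead, applied edge by edge.
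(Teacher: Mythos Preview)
Your proposal is correct and follows exactly the approach the paper indicates: the paper does not actually spell out a proof of this claim, merely noting that ``if the parties share EPR-pairs, they can use these to pass quantum messages via teleportation with a cost of two classical bits per qubit,'' and that this trick carries over to the multiparty setting. Your write-up simply makes that sketch precise, including the bookkeeping for adaptive communication patterns via over-provisioning of EPR pairs.
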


We now can prove Theorem~\ref{thm:cc}.

\begin{proof}[ of Theorem~\ref{thm:cc}]
Let $\psiket$ be a clique-wise entangled state containing $k$ coalitions.  We augment 
$\psiket$ to a state $\psiket'$ which additionally includes
an arbitrary number of shared EPR-pairs between each pair of players.  By 
Theorem~\ref{thm:entangledCC} and Claim~\ref{claim:tele} we have
$$
Q_\epsilon^{\psiket}(A) \ge \frac{1}{2}
\max_{B,\pi,\psiket'} \log \left( \frac{\langle A, B \circ \pi \rangle -2\epsilon}{\beta_{\psiket'}^*(B \circ \pi)} 
\right)
$$
where the maximum is over all sign $N$-tensors $B$, probability distributions $\pi$, and states $\psiket'$ constituted of $\psiket$ together with an arbitrary number of EPR pairs.
By Theorem~\ref{hyperghzbell}, for any such $B$, $\pi$ and $\psiket'$ we have
$$
\beta_{\psiket'}^*(B \circ \pi) \le 2^{3(k+N^2)N/2} \beta(B \circ \pi).
$$
Thus we obtain 
$$
Q_\epsilon^{\psiket}(A) \ge \frac{1}{2}
\max_{B,\pi} \log \left( \frac{\langle A, B \circ \pi \rangle -2\epsilon}{\beta(B \circ \pi)} 
\right) -O(kN^3).
$$
\end{proof}

Thus far we have phrased things for the NIH model of multiparty communication complexity.  
We can transfer this reasoning to the NOF model as follows.  For a function $f(x_1, \ldots, x_N)$ 
we can define a new function $f'$ that takes as arguments $N$ many $N-1$-tuples of 
strings.  We say that these tuples are consistent if they are a valid input to the NOF problem, 
that is if their union is exactly $N$ distinct strings.  When the arguments are consistent the value of 
$f'$ is the same as $f$, otherwise it is zero.  In the same way, for a probability distribution $\pi$ on 
$f$ we can define a distribution $\pi'$ on $f'$.  It can now be seen that 
$$
\beta(f' \circ \pi')=\max_{x_1\ldots, x_N}
\sum_{i_1, \ldots, i_N} (f\circ \pi)(i_1, \ldots, i_N) x_1(i_2, \ldots, i_N) \cdots x_N(i_1, \ldots, i_{N-1}),
$$
where the maximum is over functions $x_i:[n]^{N-1}\rightarrow\{-1,+1\}$.  The right-hand side is the standard definition 
of discrepancy in the number-on-the-forehead model (up to a constant $O(2^N)$ as discrepancy
is usually defined in terms of 0/1 vectors).  All our arguments carry through considering the 
function $f'$.  The fact that $f'$ is a much larger tensor than $f$ is immaterial as 
Grothendieck's inequality is independent of the size of the tensor.

Finally, we conclude this section by giving some examples of bounds that can be shown by the generalized discrepancy 
method.  Let $\GIP_n(x_1, \ldots, x_N)$ be the generalized inner product function, which returns 
the parity of the intersection size of the $x_i$.  Here the $x_i$ are $n$ bit strings.  
Babai, Nisan, and Szegedy showed a lower bound of $\tfrac{n}{2^{2N}}$ on the NOF complexity
of $\GIP_n$ using the discrepancy method \cite{babai:1992}.  The generalized discrepancy 
method can be used to show a bound of $\tfrac{n^{1/(N+1)}}{2^{2^N}}$ on the NOF complexity 
of the set intersection problem \cite{lee-shraibman:2009, chattopadhyay:2008}.
  
%%%%%%%%%%%%%%%%%%

\section{Hardness of approximation of the entangled bias}\label{sec:approx}

As noted by Khot and Naor~\cite{khot:2008}, hardness of approximation results for Max-E3-Lin2 due to H\aa stad and Venkatesh~\cite{hastad:2004} can be extended to show that:
\begin{itemize}
\item Unless P=NP, for any constant $c>1$ there is no polynomial-time algorithm which approximates the classical bias of a three-party XOR game to within a multiplicative factor~$c$.
\item Unless NP$\subseteq$DTIME($n^{(\log n)^{O(1)}}$), for any $\eps>0$ the classical bias of a three-party XOR game cannot be approximated to within a multiplicative factor $2^{(\log n)^{1-\eps}}$ in time $2^{(\log n)^{O(1)}}$. 
\end{itemize}

The inapproximability results in~\cite{hastad:2004} only hold for \emph{symmetric} strategies, in which the players all share the same strategy. However, Khot and Naor show that the inapproximability result holds even when restricted to games $G=(A,\pi)$ that are invariant under permutations of the three players (i.e. $B[i,j,k] = B[i,k,j] = B[j,i,k] = B[j,k,i] = B[k,i,j] = B[k,j,i]$, where $B=A\circ\pi$) and are such that the same question is never asked to two players simultaneously (i.e. $B[i,j,j] = B[j,i,j] = B[j,j,i] = 0$).  In this case
Lemma~2.1 in~\cite{khot:2008} shows that the optimum with respect to symmetric strategies is within a factor $10$ of the general optimum. 

Combining this result with Theorems~\ref{schmidtbound} and~\ref{hyperghzbell} immediately gives a proof of 
Theorem~\ref{thm:inapprox}. Indeed, Theorem~\ref{schmidtbound} (resp. Theorem~\ref{hyperghzbell}) shows that, as long as the 
players are restricted to using an arbitrary Schmidt state (resp. clique-wise entanglement), the quantum bias is at most a constant 
times the classical bias. Hence any constant approximation to the quantum bias would give a 
constant approximation to the classical bias, which is ruled out by the hardness result 
from~\cite{hastad:2004}.

%%%%%%%%%%%%%%%%%%%%%%

\section{Proof of Theorem~\ref{thm:qalgebra}}\label{app:varopoulos}

Here, we prove Theorem~\ref{thm:qalgebra}, which says that the Banach algebra formed by~$S_\infty$, the space of compact operators on a Hilbert space $\HS$, together with the Schur product (the entry-wise product), is a Q-algebra. 
The following theorem gives a simple characterization of a Q-algebra. It is a slight reformulation of a result by Davie \cite{davie:1973}, and is 
taken from Theorem~23 of \PG \cite{perezgarcia:2008}.

\begin{theorem}\label{thm:Q-characterization} Let $\X = (X,\cdot)$ be a commutative Banach algebra. Then~$\X$ is a Q-algebra if and only if there exists a universal constant $K$, such that for every choice of positive integers $N$ and $n$, $N$-tensor $A:[n]^N\to\R$, and functions $f_1,\dots,f_N:[n]\to \Ball(X)$, the following inequality holds:
\beq\label{eq:Q-characterization}
\Big\|\sum_{I\in[n]^N}A[I]f_1(i_1)\cdots f_N(i_N)\Big\|_{X} \leq K^N\|A\|_{\infty,\R},
\eeq
where $\|\cdot\|_{X}$ denotes the norm associated with the Banach space $X$.
\end{theorem}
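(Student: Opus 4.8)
The plan is to prove the two implications separately; we take $\X$ to be a complex commutative Banach algebra, as is standard for Q-algebras, and note that no unit is ever needed since the polynomials involved have no constant term. For \emph{necessity} of the inequality, suppose $\X$ is a Q-algebra with constant $C$ as in the definition, and fix $N$, $n$, a real $N$-tensor $A$, and functions $f_1,\dots,f_N:[n]\to\Ball(X)$. The observation is that the left-hand side of~\eqref{eq:Q-characterization} is the evaluation of one fixed polynomial with no constant term: introducing $Nn$ formal variables $y_{\ell,i}$ and setting
\beqn
q(y):=\sum_{I\in[n]^N}A[I]\,y_{1,i_1}y_{2,i_2}\cdots y_{N,i_N},
\eeqn
one has $q\big(\{f_\ell(i)\}\big)=\sum_I A[I]f_1(i_1)\cdots f_N(i_N)$, with $q$ homogeneous of degree $N$. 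Substituting the elements $C f_\ell(i)$, each of norm at most $C$, and using $q(Cy)=C^N q(y)$, the Q-algebra property yields $\big\|\sum_I A[I]f_1(i_1)\cdots f_N(i_N)\big\|_X\le C^{-N}\|q\|_{\infty,\C}$. Since $q$ is exactly the multilinear form associated with $A$, $\|q\|_{\infty,\C}=\|A\|_{\infty,\C}$, and $\|A\|_{\infty,\C}\le 2^N\|A\|_{\infty,\R}$ by writing each complex $y_{\ell,i}$ of modulus at most $1$ as $a_{\ell,i}+\sqrt{-1}\,b_{\ell,i}$ with $a_{\ell,i},b_{\ell,i}\in[-1,1]$ and expanding $\prod_\ell y_{\ell,i_\ell}$ into $2^N$ real multilinear terms, each of modulus at most $\|A\|_{\infty,\R}$. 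Hence~\eqref{eq:Q-characterization} holds with $K=2/C$.

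For \emph{sufficiency}, suppose~\eqref{eq:Q-characterization} holds with constant $K$; I must produce a constant $C$ (depending only on $K$) witnessing the Q-algebra property. Fix a polynomial $q$ in $m$ variables of degree $d$ with no constant term, and elements $x_1,\dots,x_m\in X$ of norm at most $C$, and write $q=\sum_{k=1}^d q_k$ with $q_k$ homogeneous of degree $k$. Since $\X$ is commutative, $q_k(x)=\sum_{j_1,\dots,j_k\in[m]}c^{(k)}_{j_1\cdots j_k}\,x_{j_1}\cdots x_{j_k}$ for a symmetric complex $k$-tensor $c^{(k)}$; splitting $c^{(k)}$ into real and imaginary parts and applying~\eqref{eq:Q-characterization} to each (with ``$N$''$=k$, ``$n$''$=m$, and all test functions equal to $j\mapsto x_j/C$) yields $\|q_k(x)\|_X\le 2(CK)^k\|c^{(k)}\|_{\infty,\C}$. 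Two classical, dimension-independent estimates now finish the job: the polarization inequality for the sup norm on $\ell_\infty^m$ gives $\|c^{(k)}\|_{\infty,\C}\le (k^k/k!)\sup_{|z_j|\le1}|q_k(z)|$, while reading $q_k(z)$ as the $k$-th Fourier coefficient of $\theta\mapsto q(e^{\sqrt{-1}\theta}z)$ --- whose modulus is at most $\|q\|_{\infty,\C}$ for every $z$ in the unit polydisc --- gives $\sup_{|z_j|\le1}|q_k(z)|\le\|q\|_{\infty,\C}$. Using $k^k/k!\le e^k$ and summing the geometric series $\sum_{k\ge1}2(CKe)^k$, we obtain $\|q(x)\|_X\le\|q\|_{\infty,\C}$ whenever $C\le (4eK)^{-1}$, so any such $C$ works.

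The main obstacle --- and the only place real work happens --- is the sufficiency direction, specifically the passage from the \emph{multilinear} sup norm $\|c^{(k)}\|_{\infty,\C}$, which the hypothesis~\eqref{eq:Q-characterization} directly controls, first to the ordinary polynomial sup norm of the homogeneous slice $q_k$ (a polarization estimate, constant $k^k/k!$, depending only on $k$), and then to the sup norm of the full polynomial $q$ (a Cauchy/coefficient estimate, constant $1$). It is essential that neither constant depends on the numbers of variables or tensor entries, since the Q-algebra constant must be universal; and it is essential that $q$ has no constant term, so that the homogeneous decomposition starts at $k=1$ and the bound on $\|q(x)\|_X$ becomes a geometric series in a quantity proportional to $C$, which can be forced below $\|q\|_{\infty,\C}$ by shrinking $C$ alone. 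The remaining steps --- symmetrizing $c^{(k)}$ (using commutativity), the factor-of-two bookkeeping for the real/imaginary split, and the nesting $\|\cdot\|_{\infty,\R}\le\|\cdot\|_{\infty,\C}$ of the relevant norms --- are routine.
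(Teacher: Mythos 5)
Your proof is correct. Note that the paper itself does not write out an argument for this theorem: it simply remarks that the statement follows from Davie's characterization as recorded in Diestel--Jarchow--Tonge (Theorem~18.7 there), combined with the decoupling results (Lemma~18.5 and Proposition~18.6 of the same reference) and the bound $\|A\|_{\infty,\C}\leq 2^N\|A\|_{\infty,\R}$. What you have done is unpack exactly those citations into a self-contained proof. Your necessity direction is the routine homogeneity/rescaling argument (giving $K=2/C$) together with the same $2^N$ real-versus-complex estimate the paper invokes; your sufficiency direction reconstructs the content of the cited decoupling lemmas --- symmetrization of the coefficient tensors via commutativity, the Cauchy coefficient estimate $\sup_{|z_j|\leq 1}|q_k(z)|\leq\|q\|_{\infty,\C}$ isolating each homogeneous piece, and the polarization bound with constant $k^k/k!\leq e^k$, which is precisely what makes the final geometric series converge once $C$ is taken small enough relative to $K$. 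All the constants you track are dimension-free, as required for the universality of $K$ and $C$, and the reduction of $\|\Re c^{(k)}\|_{\infty,\R}$ and $\|\Im c^{(k)}\|_{\infty,\R}$ to $\|c^{(k)}\|_{\infty,\C}$ is sound. The only point worth flagging is notational rather than mathematical: the paper never defines $\|q\|_{\infty,\K}$ for a non-multilinear polynomial, and your reading of it as the supremum over the unit polydisc is the intended one (and the only one under which the paper's definition of a Q-algebra is meaningful).
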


\noindent Theorem~\ref{thm:Q-characterization} follows from the more standard characterization of Q-algebras of~\cite[Theorem~{18.7}]{diestel:1995} by using the inequality $\|A\|_{\infty,\C}\leq 2^N\|A\|_{\infty,\R}$, and the fact that without loss of generality, we may decouple the variables and consider $N$-linear forms instead of general polynomials~\cite[Lemma~{18.5} and~Proposition~{18.6}]{diestel:1995}.

% We use the following interpolation lemma due to Varopoulos~\cite{varopoulos:1972}, as phrased in~\cite{lemerdy:1998}.
% 
% \begin{lemma}[Varopoulos 1972]\label{interpolation}
%  Let $(\X_0,\X_1)$ be a compatible pair of complex Banach algebras. For $0\leq\theta\leq 1$, denote by $\X_{\theta} = [\X_0,\X_1]_{\theta}$ the Banach algebra obtained by the complex interpolation method. If $\X_0$ and $\X_1$ are Q-algebras, then $\X_{\theta}$ is a Q-algebra.
% \end{lemma}

\begin{proof}[ of Theorem~\ref{thm:qalgebra}]
We will show that Equation~\eqref{eq:Q-characterization} holds for $\X = (S_\infty,\circ)$. 
It follows from the Spectral Theorem that for any $\epsilon >0$, we can approximate any $T \in \Ball(S_\infty)$ by a finite-rank 
operator $T' \in S_{\infty}$. Since both $N$ and $n$ are finite, we only need to deal with a finite number of finite rank operators in $S_{\infty}$. All-together these operators act only on a finite-dimensional subspace of the original Hilbert space $\HS$. Hence, it will suffice to prove the statement for the case where $\HS$ is finite dimensional and the operators $f_l(i_l)$ are finite dimensional matrices. Setting $\epsilon = 1/(4N)$ introduces at most an extra factor of $2$ on the right-hand side of Equation~\eqref{eq:Q-characterization}. Further notice that it suffices to show this for Hermitian 
matrices $f_l(i_l)$, since for $T\in M_d$, we have that the matrix
$$
\begin{pmatrix}
0 & T \\
T^* & 0
\end{pmatrix}
$$
has the same norm as $T$ and is Hermitian.  We have
\begin{align}
\Big\|\sum_{I\in[n]^N} A[I] f_1(i_1) \circ \cdots \circ f_N(i_N) \Big\|_{S_\infty} &=
\max_{\alpha\in\Ball(\HS)} 
\Big| \bar{\alpha} \Big(\sum_{I\in[n]^N} A[I] f_1(i_1) \circ \cdots \circ f_N(i_N) \Big) 
\alpha \Big|\notag \\
&=\max_{\alpha\in\Ball(\HS)} \Big| 
\sum_{I\in[n]^N} A[I] \sum_{i,j} \bar \alpha_i \alpha_j \bra{i}^{\otimes N} 
f_1(i_1) \otimes \cdots \otimes f_N(i_N) \ket{j}^{\otimes N} \Big|,\label{eq:alphaeq}
\end{align}
%%%%%%%%%%%%%%%

\noindent where we used the fact that $\|\cdot\|_{S_{\infty}}$  simply denotes the spectral norm and wrote $\alpha = (\alpha_1,\alpha_2,\dots)$ using some orthonormal basis for $\HS$. Fix the $\alpha = (\alpha_1,\alpha_2,\dots)$ which maximizes this sum. 

Let $\ket{\Psi}=\sum_{i=1}^d \alpha_i \ket{i}^{\otimes N}$.  Then we can succinctly write the 
last expression in~\eqref{eq:alphaeq} as
\beq\label{qalgxor}
\left| 
\sum_{I\in[n]^N}A[I] \bra{\Psi} f_1(i_1) \otimes \cdots \otimes f_N(i_N) \ket{\Psi} \right|,
\eeq
where $\ket{\Psi} = \sum_i \alpha_i \ket{i}^{\otimes N}$. 
By the triangle inequality, replacing the $f_l(i_l)$ by $\pmset{}$-valued observables (Hermitian unitary matrices) which maximize the quantity~\eqref{qalgxor} can only increase its value since these observables are the extreme points in the convex set of Hermitian matrices of norm at most 1. Hence by definition,~\eqref{qalgxor} is bounded by the bias $\betad(A)$. Theorem~\ref{schmidtbound} then implies 
that Equation~\eqref{eq:Q-characterization} holds with a constant $K=2^{3/2}$. 
\end{proof}

For further information on this problem, we refer to~\cite{varo:1975,lemerdy:1998,perez:2006} and for  information on Q-algebras, we refer to \cite[Chapter 18]{diestel:1995}.

%%%%%%%%%%%%%%%%%%%%%%%

\section{Grothendieck-type inequalities}\label{statestofunctionals}

 We prove the following tri-linear extension of Grothendieck's inequality:

\begin{theorem}\label{stabilizerform}
Let $\K = \R$ or $\C$, let $G = (V,E)$ be a simple undirected graph and $(V_1,V_2,V_3)$ be a partitioning of $V$. For each $x\in V_l$ let $\HS(x,l)$ be a two-dimensional Hilbert space with underlying field $\K$ and $\HS_l = \bigotimes_{x\in V_l}\HS(x,l)$. Define the linear functional $\Phi_G:\bigotimes_{l=1}^3\HS_l\to\K$ by
\beqn
\Phi_G:\bigotimes_{l=1}^3v_l\mapsto \sum_{S_1\subseteq V_1}\sum_{S_2\subseteq V_2}\sum_{S_3\subseteq V_3}(-1)^{|E(S_1\cup S_2\cup S_3)|}\prod_{l=1}^3v_l\big(S_l\big),
\eeqn
where $v_l\in\HS_l$ and we index the $2^{|I_l|}$ coordinates of $v_l$ by the subsets $S_l\subseteq I_l$. Then $\Phi_G$ satisfies that for every 3-tensor $A:[n]^3\to\K$ and set of functions $f_l:[n]\to\Ball(\HS_l)$, the following inequality holds
\beqn
\Big|\sum_{i,j,k=1}^nA[i,j,k]\Phi_G\big(f_1(i)\otimes f_2(j)\otimes f_3(k)\big)\Big| \leq C\, \|A\|_{\infty,\K}
%\max_{\phi_l:[n]\to\Ball(\K)}\Big|\sum_{i,j,k=1}^nA[i,j,k]\phi_1(i)\phi_2(j)\phi_3(k)\Big|,
\eeqn
where $C = O(2^{|V|/2})$.
\end{theorem}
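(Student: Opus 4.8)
The plan is to recognize $\Phi_G$ as the linear functional associated with a \emph{graph state} of $G$, bring it into clique-wise form via the classification of tripartite stabilizer states of Bravyi et al.~\cite{bravyi:2006}, and then invoke Carne's Theorem~\ref{carnethm}. I would carry out the argument over $\K=\C$; the case $\K=\R$ then follows, since a real unit vector is in particular a complex one and $\|A\|_{\infty,\C}\le 2^{3}\|A\|_{\infty,\R}$ for real $A$, so the real bound is at most $8$ times the complex one.

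First I would observe that, identifying $S\subseteq V$ with the triple $(S\cap V_1,S\cap V_2,S\cap V_3)$, the functional $\Phi_G$ is exactly the one obtained by contracting $v_1\otimes v_2\otimes v_3$ against the \emph{un-normalized} graph state $\ket{G}_u:=\sum_{S\subseteq V}(-1)^{|E(S)|}\ket{S}$ (without complex conjugation). Since $\ket{G}_u$ has $2^{|V|}$ entries of modulus $1$, the normalized graph state is $\ket{G}_n=2^{-|V|/2}\ket{G}_u$, and it is a stabilizer state. By Theorem~5 of~\cite{bravyi:2006} there are unitaries $U_1,U_2,U_3$ on the party spaces $\HS_1,\HS_2,\HS_3$, nonnegative integers $e_{12},e_{13},e_{23},g$, and unit ``junk'' vectors $\ket{J_l}$ on the remaining qubits of each party $l$, with $\ket{G}_n=(U_1\otimes U_2\otimes U_3)\big(\ket{R}_n\otimes\ket{J_1}\otimes\ket{J_2}\otimes\ket{J_3}\big)$, where $\ket{R}_n$ is the normalized version of $\mathrm{EPR}_{12}^{\otimes e_{12}}\otimes\mathrm{EPR}_{13}^{\otimes e_{13}}\otimes\mathrm{EPR}_{23}^{\otimes e_{23}}\otimes\mathrm{GHZ}_3^{\otimes g}$. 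The crucial point is to \emph{bundle}: $\mathrm{EPR}_{lm}^{\otimes e_{lm}}$ is a single maximally entangled pair of local dimension $2^{e_{lm}}$ between parties $l$ and $m$, and $\mathrm{GHZ}_3^{\otimes g}$ is a single GHZ state of local dimension $2^{g}$ among all three, so $\ket{R}_n$ is precisely clique-wise entanglement for the hypergraph on $\{1,2,3\}$ with hyperedges $\{12,13,23,123\}$ — the setting of Corollary~\ref{cor:stabilizer}.

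Next I would push this through. Writing $m:=e_{12}+e_{13}+e_{23}$ and $\ket{R}_n=2^{-(m+g)/2}\ket{R}_u$ in terms of the un-normalized EPR and GHZ blocks, we get $\ket{G}_u=2^{(|V|-m-g)/2}(U_1\otimes U_2\otimes U_3)\big(\ket{R}_u\otimes\ket{J_1}\otimes\ket{J_2}\otimes\ket{J_3}\big)$. As each $U_l$ is unitary, so is $U_l^{\top}$; transposing the $U_l$ onto the arguments of $\Phi_G$ and then contracting each $\ket{J_l}$ against player $l$'s argument is a fixed, norm-non-increasing (Cauchy--Schwarz) linear map, so from any $f_l:[n]\to\Ball(\HS_l)$ we obtain $\tilde f_l:[n]\to\Ball(\HS_l^{R})$ with
\[
\sum_{i,j,k}A[i,j,k]\,\Phi_G\big(f_1(i)\otimes f_2(j)\otimes f_3(k)\big)=2^{(|V|-m-g)/2}\sum_{i,j,k}A[i,j,k]\,\Phi_R\big(\tilde f_1(i)\otimes\tilde f_2(j)\otimes\tilde f_3(k)\big),
\]
where $\Phi_R$ is the linear functional of $\ket{R}_u$. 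One checks directly that $\Phi_R=\big(\bigotimes_{e\in\{12,13,23,123\}}\psi_e\big)\circ\sigma$ with each $\psi_e$ a generalized inner product on the relevant bundled factors — exactly the functional of Theorem~\ref{carnethm}. That theorem, fed with the dimension-independent Grothendieck-type bounds for generalized inner products from Theorem~\ref{kgrothbound} (and their complex analogue for $\K=\C$), bounds the constant of $\Phi_R$ by the absolute constant $2^{1/2}\cdot2^{1/2}\cdot2^{1/2}\cdot2^{2}\cdot(K_G^{\C})^{4}=2^{7/2}(K_G^{\C})^{4}$. Since $m,g\ge0$ we have $|V|-m-g\le|V|$, so $\big|\sum_{i,j,k}A[i,j,k]\,\Phi_G(\cdots)\big|\le 2^{|V|/2+7/2}(K_G^{\C})^{4}\,\|A\|_{\infty,\C}$, i.e. $C=O(2^{|V|/2})$.

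The main obstacle I anticipate is seeing why the constant is only $O(2^{|V|/2})$: the Bravyi et al. form consists of $\Theta(|V|)$ individual qubit-level EPR and GHZ pieces, and applying Theorem~\ref{carnethm} to those one by one would force a constant of the form $2^{\Theta(|V|)}$ with a base exceeding $\sqrt2$. The fix is the bundling above, which works precisely because Theorem~\ref{kgrothbound} is dimension-free, so the three ``thick'' Bell pairs and the single ``thick'' GHZ state together contribute only four fixed factors; the whole $2^{|V|/2}$ then comes solely from the normalization mismatch between $\ket{G}_u$ and the un-normalized clique-wise state — an order which is in fact unavoidable, as the empty graph already saturates it. The remaining ingredients — pinning down the exponent $(|V|-m-g)/2$ and checking it is at most $|V|/2$, verifying that $U_l^{\top}$ is unitary, and checking that contraction against a unit junk vector is norm-non-increasing — are routine.
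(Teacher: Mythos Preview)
Your proposal is correct and follows essentially the same route as the paper: identify $\Phi_G$ with the graph-state functional (up to the normalization factor $2^{|V|/2}$), invoke the Bravyi et al.\ classification to pass to a clique-wise state via local unitaries, and then apply Carne's Theorem~\ref{carnethm} together with the Tonge bound~\ref{kgrothbound} on the resulting hypergraph over $\{1,2,3\}$. Your treatment is in fact slightly more explicit than the paper's in two respects---you handle the local ``junk'' qubits and the transpose of the local unitaries overtly, and you spell out the \emph{bundling} of the many qubit-level EPR and GHZ pieces into at most four high-dimensional hyperedges---whereas the paper absorbs both points into a terse ``$\ket{\Psi_C}$ is clique-wise'' and the observation $d\geq 1$.
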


To prove this, we use the following theorem of Bravyi et al. \cite{bravyi:2006}.

\begin{theorem}\label{bravyithm}
Let $\HS_1,\HS_2,\HS_3$ be complex Hilbert spaces and $\ket{\Psi}\in\HS_1\otimes\HS_2\otimes\HS_3$ a stabilizer state. Then there exist unitary operators $U_1,U_2,U_3$ on $\HS_1,\HS_2,\HS_3$, respectively, such that the state $U_1\otimes U_2\otimes U_3\ket{\Psi}$ is equal to a collection of GHZ and Bell states.
\end{theorem}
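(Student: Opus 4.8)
The plan is to identify $\Phi_G$, up to the scalar $2^{|V|/2}$, with the functional ``inner product against the graph state of $G$'', and then push that state through the machinery of Section~\ref{sec:hyperghzbellproof}. Set $\ket{G}:=2^{-|V|/2}\sum_{S\subseteq V}(-1)^{|E(S)|}\ket{S}$, where $\ket{S}\in(\K^2)^{\otimes V}$ has a $1$ in the coordinates indexed by $S$ and $0$ elsewhere; equivalently $\ket{G}$ is obtained from $\ket{+}^{\otimes V}$ by a controlled-$Z$ along every edge of $G$, so it is a graph state and in particular a stabilizer state, with real coefficients. Grouping the $|V|$ tensor factors according to the partition $(V_1,V_2,V_3)$ and writing $S\subseteq V$ as $S_1\cup S_2\cup S_3$ with $S_l\subseteq V_l$, a one-line comparison of coefficients gives
\[
\Phi_G\big(v_1\otimes v_2\otimes v_3\big)\;=\;2^{|V|/2}\,\braket{G}{v_1\otimes v_2\otimes v_3}
\]
for every product vector $v_1\otimes v_2\otimes v_3\in\HS_1\otimes\HS_2\otimes\HS_3$, and by linearity this holds on all of $\HS_1\otimes\HS_2\otimes\HS_3$. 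So it suffices to prove that the normalized functional $\braket{G}{\cdot}$ satisfies a Grothendieck inequality with an absolute constant, i.e.\ that $\big|\sum_{i,j,k}A[i,j,k]\braket{G}{f_1(i)\otimes f_2(j)\otimes f_3(k)}\big|\le O(1)\,\|A\|_{\infty,\K}$.

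Next I would remove the local structure of $\ket{G}$ using Theorem~\ref{bravyithm}: there are unitaries $U_1,U_2,U_3$ on $\HS_1,\HS_2,\HS_3$ such that $(U_1\otimes U_2\otimes U_3)\ket{G}$ is a tensor product of a GHZ state of dimension $2^{b_{123}}$ shared by all three parties, three ``generalized EPR'' states of dimensions $2^{b_{12}},2^{b_{13}},2^{b_{23}}$ shared by the respective pairs of parties, and (when needed) a product pure state $\ket{\phi_l}$ on the leftover qubits of party $l$. Each $U_l$ maps $\Ball(\HS_l)$ onto itself and $\braket{G}{w}=\braket{(U_1\otimes U_2\otimes U_3)G}{(U_1\otimes U_2\otimes U_3)w}$, so replacing $f_l$ by $U_l\circ f_l$ we may assume $\ket{G}$ is already in this product form. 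The leftover factors are then absorbed by partial inner products: for $g_l\in\Ball(\HS_l)$ the vector $\tilde g_l:=(\mathrm{id}\otimes\bra{\phi_l})g_l$ still lies in $\Ball$ (inner product against a unit vector is a contraction), and $\braket{G}{g_1\otimes g_2\otimes g_3}=\braket{\tilde G}{\tilde g_1\otimes\tilde g_2\otimes\tilde g_3}$, where $\ket{\tilde G}$ retains only the GHZ and generalized-EPR factors. Thus it remains to bound $\big|\sum_{i,j,k}A[i,j,k]\braket{\tilde G}{\tilde f_1(i)\otimes\tilde f_2(j)\otimes\tilde f_3(k)}\big|$ for arbitrary functions $\tilde f_l$ into the unit balls of the new clique spaces.

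Now $\ket{\tilde G}$ is exactly the clique-wise entanglement of Section~\ref{sec:applyingcarne} for the hypergraph $H$ on $\{1,2,3\}$ whose edge set is contained in $\{\{1,2\},\{1,3\},\{2,3\},\{1,2,3\}\}$, so $|E(H)|\le 4$ and every edge has size at most $3$. Expanding each GHZ/EPR factor in the computational basis, exactly as in Claim~\ref{claim:phimap}, one obtains $\braket{\tilde G}{w}=\big(\prod_{e\in E(H)}2^{-b_e/2}\big)\Phi(w)$, where $2^{b_e}$ is the dimension of the factor on edge $e$, $\Phi:=\big(\bigotimes_{e\in E(H)}\psi_e\big)\circ\sigma$, and $\psi_e$ is the generalized inner product on $\bigotimes_{x\in e}\HS(x,e)$. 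Applying Theorem~\ref{carnethm} together with the generalized-inner-product bound of Theorem~\ref{kgrothbound} (the remark following Theorem~\ref{carnethm}), and using $|E(H)|\le 4$ and $|e|\le 3$, yields
\[
\Big|\sum_{i,j,k}A[i,j,k]\,\Phi\big(\tilde f_1(i)\otimes\tilde f_2(j)\otimes\tilde f_3(k)\big)\Big|\;\le\;\Big(\prod_{e\in E(H)}2^{(3|e|-5)/2}\Big)(K_G^{\C})^{|E(H)|}\,\|A\|_{\infty,\K}\;=\;O(1)\,\|A\|_{\infty,\K}
\]
(for $\K=\C$ the same holds up to the absolute constant in a complex version of Theorem~\ref{kgrothbound}). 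Since $\prod_{e}2^{-b_e/2}\le 1$, chaining this with the two identities above gives $\big|\sum_{i,j,k}A[i,j,k]\,\Phi_G(f_1(i)\otimes f_2(j)\otimes f_3(k))\big|\le 2^{|V|/2}\cdot O(1)\cdot\|A\|_{\infty,\K}=O(2^{|V|/2})\,\|A\|_{\infty,\K}$, which is the claim.

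The routine parts are the two coefficient computations ($\Phi_G=2^{|V|/2}\braket{G}{\cdot}$ and $\braket{\tilde G}{\cdot}=(\prod_e 2^{-b_e/2})\Phi$, the latter a copy of Claim~\ref{claim:phimap}). The step I expect to need the most care is the reduction to clique-wise form: one must check that Theorem~\ref{bravyithm} applies to the tripartition $(V_1,V_2,V_3)$ regardless of how the vertices are distributed among $V_1,V_2,V_3$, that conjugating the $f_l$ by the $U_l$ and tracing out the leftover $\ket{\phi_l}$ each genuinely preserve the constraint $f_l\in\Ball(\HS_l)$, and that the single $2^{|V|/2}$ coming from the graph-state normalization is the only dimension-dependent factor that survives --- which it is, precisely because $H$ has boundedly many bounded-size edges and the normalizations $\prod_e 2^{-b_e/2}$ are at most $1$.
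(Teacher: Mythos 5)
You have proved the wrong statement. The theorem you were asked to establish is Theorem~\ref{bravyithm} itself --- the structural claim that every tripartite stabilizer state $\ket{\Psi}\in\HS_1\otimes\HS_2\otimes\HS_3$ is local-unitarily equivalent to a collection of GHZ and Bell states. Your proposal instead reconstructs the proof of Theorem~\ref{stabilizerform} (the Grothendieck-type inequality for the graph-state functional $\Phi_G$), and in its second paragraph it explicitly \emph{invokes} Theorem~\ref{bravyithm} as a black box (``Next I would remove the local structure of $\ket{G}$ using Theorem~\ref{bravyithm}: there are unitaries $U_1,U_2,U_3$ \dots''). As an argument for the target statement this is circular: nothing in your write-up gives any reason why such unitaries $U_1,U_2,U_3$ should exist.

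For the record, the paper does not prove this theorem either --- it is quoted from Bravyi, Fattal, and Gottesman~\cite{bravyi:2006} (their Theorem~5), so there is no internal proof to compare against. An actual proof would have to work with the stabilizer formalism directly: one analyzes the stabilizer group $S$ of $\ket{\Psi}$, separates out the subgroups of elements supported on a single party or a single pair of parties, and uses the symplectic structure of the Pauli group over $\F_2$ to show that, after local Clifford operations, the state factors into single-party pure states, bipartite Bell pairs, and a tripartite GHZ component. None of that machinery appears in your proposal. The functional-analytic content you did write is essentially the paper's proof of Theorem~\ref{stabilizerform} and is fine in that role, but it cannot substitute for, and logically depends on, the structural result you were asked to prove.
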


The linear functionals appearing in Theorem~\ref{stabilizerform} are derived from a special class of stabilizer states known as graph states. A $q$-qubit graph state is a unit vector in $\C^{2^q}$ which is uniquely defined by a simple  undirected graph $G = (V,E)$ on $q$ vertices.  The graph state associated with $G$ is given by
\beq\label{graphstate}
\ket{\Psi} = \frac{1}{\sqrt{2^q}}\sum_{S\subseteq V}(-1)^{|E(S)|}\ket{S},
\eeq
where $|E(S)|$ denotes the number of edges in the subgraph of $G$ induced by the vertices in $S$ and~$\ket{S}$ denotes the computational basis state corresponding to the length-$q$ characteristic vector of the set~$S$.

\begin{proof}[ of Theorem~\ref{stabilizerform}]

Let $|V|=q$ and let $\ket{\Psi}$ be the unique graph state associated with $G$, as given by \eqref{graphstate}. 
Let $(V_1,V_2,V_3)$ be a partitioning of $V$ such that party $l$ has the qubits indexed by the labels in $V_l$ and denote the respective Hilbert spaces by $\HS_1$, $\HS_2$ and $\HS_3$. Then, using an appropriate arrangement of the Hilbert spaces, we may write $\ket{\Psi}$ as
\beqn
\ket{\Psi} = \frac{1}{\sqrt{2^q}}\sum_{S_1\subseteq V_1}\sum_{S_2\subseteq V_2}\sum_{S_3\subseteq V_3}(-1)^{|E(S_1\cup S_2\cup S_3)|}\ket{S_1}\otimes \ket{S_2}\otimes \ket{S_3},
\eeqn

It is easily seen that for vectors $v_l\in\HS_l$ and linear functional $\Phi_G$ as defined in the theorem, we have $(v_1\otimes v_2\otimes v_3)\cdot \ket{\Psi} = 2^{-q/2}\Phi_G(v_1\otimes v_2\otimes v_3)$.

Now let $U_1$, $U_2$ and $U_3$ be the unitary operators on $\HS_1$, $\HS_2$ and $\HS_3$, respectively, from Theorem~\ref{bravyithm}. Then, by the unitary invariance of the sets $\Ball(\HS_l)$, we have that, for any 3-tensor $A:[n]^3\to\R$ and $f_l:[n]\to\Ball(\HS_l)$,

\begin{align}
\Big| \sum_{i,j,k}A[i,j,k]\, &\Phi_G\big(f_1(i)\otimes f_2(j)\otimes f_3(k)\big) \Big| = 2^{q/2}\Big| \sum_{i,j,k}A[i,j,k]\, (f_1(i)\otimes f_2(j)\otimes f_3(k))\cdot\ket{\Psi} \Big|\notag \\
&\leq 2^{q/2}\max_{g_l:[n]\to\Ball(\HS_l)}\Big| \sum_{i,j,k}A[i,j,k]\, (g_1(i)\otimes g_2(j)\otimes g_3(k))\cdot\ket{\Psi} \Big|\notag\\
&\leq 2^{q/2}\max_{g_l':[n]\to\Ball(\HS_l)}\Big| \sum_{i,j,k}A[i,j,k]\, (U_1g_1'(i)\otimes U_1g_2'(j)\otimes U_3g_3'(k))\cdot\ket{\Psi} \Big|\notag\\
&= 2^{q/2}\max_{g_l':[n]\to\Ball(\HS_l)}\Big| \sum_{i,j,k}A[i,j,k]\, g_1'(i)\otimes g_2'(j)\otimes g_3'(k)\cdot (U_1\otimes U_2\otimes U_3)\ket{\Psi} \Big| \notag\\
&\leq 2^{q/2}\max_{\ket{\Psi_C}}\max_{h_l:[n]\to\Ball(\HS_l)}\Big| \sum_{i,j,k}A[i,j,k]\, h_1(i)\otimes h_2(j)\otimes h_3(k)\cdot \ket{\Psi_C} \Big|,\label{stabcliqueineq}
\end{align}
where $\ket{\Psi_C}$ is a clique-wise entangled state, shared among the three parties. For a hypergraph $H = (T,E')$ with vertex set $T = [3]$ on three vertices and some dimension $d$, such a state has the form
\beqn
\ket{\Psi_C} = \frac{1}{\sqrt{d^{|E'|}}}\sum_{J\in[d]^{E'}}\bigotimes_{l=1}^3\ket{J_{|E'(l)}}
\eeqn
as we saw in the proof of Theorem~\ref{hyperghzbell}. Now, by Claim~\ref{claim:phimap}, we have that for vectors $a\in\HS_1, b\in\HS_2,c\in\HS_3$, the following equalities hold:
\beqn
(a\otimes b\otimes c)\cdot \ket{\Psi_C} = \frac{1}{\sqrt{d^{|E'|}}}\sum_{J\in[d]^{E'}}a(J_{|E'(1)})\cdot b(J_{|E'(2)})\cdot c(J_{|E'(3)}) = \frac{1}{\sqrt{d^{|E'|}}}\widetilde{\Phi}(a\otimes b\otimes c)
\eeqn
where $\widetilde{\Phi} = \big(\bigotimes_{e\in E'}\psi_e\big)\circ\sigma$ for generalized inner-product functions $\psi_e$ on each of the edges. Hence, the last expression in \eqref{stabcliqueineq} is equal to
\beqn
\sqrt{\frac{2^{q}}{d^{|E'|}}}\max_{h_l:[n]\to\Ball(\HS_l)}\Big| \sum_{i,j,k}A[i,j,k]\, \widetilde{\Phi}\big(h_1(i)\otimes h_2(j)\otimes h_3(k)\big) \Big|.
\eeqn
The result now follows directly from Theorems~\ref{kgrothbound}~and~\ref{carnethm} and the fact that $d\geq 1$.
\end{proof}

\section*{Acknowledgements}JB thanks Peter H\o yer, TL thanks Gideon Schechtman, and TV thanks Falk Unger for useful discussions.

\bibliographystyle{alphaabbrv}
\bibliography{newref}

\appendix

\section{Proof of Tonge's theorem}\label{apptonge}

In this section we prove an extension of Theorem~\ref{kgrothbound} to the case where the base Hilbert space is not necessarily complex:

\begin{theorem}\label{kgrothbound2}
Let $n,N\geq 2$ and $d$ be positive integers, $\K\in\{\R,\C\}$, and $\HS$ be a $d$-dimensional Hilbert space with underlying field $\K$. Then, for every $N$-tensor $A:[n]^{N}\to\K$ and $f_1,\dots,f_N:[n]\to \Ball(\HS)$, the following inequality holds:
\beq\label{tongesthm}
\Big|\sum_{i_1,\dots,i_N=1}^nA[i_1,\dots,i_N]\langle f_1(i_1),\dots,f_k(i_N)\rangle\Big| \leq 
2^{(N-2)/2}K_G^{\K}\|A\|_{\infty,\K},
\eeq
where $K_G^{\R}$ and $K_G^{\C}$  are the real and complex Grothendieck constant, respectively. Moreover, if the underlying field for $A$ and the scalars on the right-hand side is $\R$, but the underlying field for $\HS$ is~$\C$, then the inequality 
\beq\label{tongeextended}
\Big|\sum_{i_1,\dots,i_N=1}^nA[i_1,\dots,i_N]\langle f_1(i_1),\dots,f_k(i_N)\rangle\Big| \leq 
2^{(3N-5)/2}K_G^{\C}\|A\|_{\infty,\R}
\eeq
holds.
\end{theorem}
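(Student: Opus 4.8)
The plan is to prove \eqref{tongesthm} by induction on $N$, and then deduce \eqref{tongeextended} from \eqref{tongesthm} together with a quantitative comparison between the complex and real injective norms of a real tensor. For the base case $N=2$, note that for $u,v\in\Ball(\HS)$ the generalized inner product $\langle u,v\rangle$ is exactly the bilinear pairing appearing in Grothendieck's inequality \eqref{grothineq} over $\K$; applying \eqref{grothineq} to $\pm A$ gives $\bigl|\sum_{i_1,i_2}A[i_1,i_2]\langle f_1(i_1),f_2(i_2)\rangle\bigr|\le K_G^{\K}(d)\,\|A\|_{\infty,\K}\le K_G^{\K}\|A\|_{\infty,\K}$, which is \eqref{tongesthm} since $2^{(2-2)/2}=1$. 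For the inductive step I want a contraction that reduces an $N$-linear generalized-inner-product expression to an $(N-1)$-linear one at the cost of a factor $2^{1/2}$: writing $\phi_N\colon[n]\to\Ball(\K)$ for a scalar test function and $(A\cdot_N\phi_N)[i_1,\dots,i_{N-1}]:=\sum_{i_N}A[i_1,\dots,i_N]\phi_N(i_N)$ for the corresponding contracted $(N-1)$-tensor, I would show that the left-hand side of \eqref{tongesthm} is at most $2^{1/2}$ times the supremum over such $\phi_N$ of the left side of \eqref{tongesthm} with $N$ replaced by $N-1$ and $A$ by $A\cdot_N\phi_N$. Since $\|A\cdot_N\phi_N\|_{\infty,\K}\le\|A\|_{\infty,\K}$ — a scalar contraction is simply one of the allowed test functions in the definition of $\|A\|_{\infty,\K}$ — the induction hypothesis then yields $2^{1/2}\cdot 2^{(N-3)/2}K_G^{\K}\|A\|_{\infty,\K}=2^{(N-2)/2}K_G^{\K}\|A\|_{\infty,\K}$, as required.

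The inductive step itself is where the work lies, and I expect it to be the main obstacle. Fix unit vectors $f_1(i_1),\dots,f_N(i_N)$ and, for each $i_N$, define $w_{i_N}\in\K^d$ by $w_{i_N}(l)=\sum_{i_1,\dots,i_{N-1}}A[i_1,\dots,i_N]\prod_{j=1}^{N-1}f_j(i_j)(l)$, so that $\sum_{i_1,\dots,i_N}A[i_1,\dots,i_N]\langle f_1(i_1),\dots,f_N(i_N)\rangle=\sum_{i_N}\langle f_N(i_N),w_{i_N}\rangle$, and hence, by Cauchy--Schwarz and $\|f_N(i_N)\|_2=1$, this is at most $\sum_{i_N}\|w_{i_N}\|_2$. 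Now invoke a Khinchine-type inequality: for a sequence $(\epsilon_l)_l$ of independent Rademacher (if $\K=\R$) or Steinhaus (if $\K=\C$) variables one has $\|w_{i_N}\|_2\le 2^{1/2}\,\Exp_\epsilon\bigl|\sum_l\epsilon_l w_{i_N}(l)\bigr|$, the constant $2^{1/2}$ being admissible in both cases. The key identity is $\sum_l\epsilon_l w_{i_N}(l)=\sum_{i_1,\dots,i_{N-1}}A[i_1,\dots,i_N]\langle \epsilon\odot f_1(i_1),\,f_2(i_2),\dots,f_{N-1}(i_{N-1})\rangle$, where $(\epsilon\odot f_1(i_1))(l):=\epsilon_l f_1(i_1)(l)$ still lies in $\Ball(\HS)$ because $|\epsilon_l|=1$. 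For each fixed $\epsilon$, selecting unimodular phases $\phi_N(i_N)$ that make $\phi_N(i_N)\sum_{i_1,\dots,i_{N-1}}A[i_1,\dots,i_N]\langle \epsilon\odot f_1(i_1),\dots\rangle$ equal to its own modulus turns $\sum_{i_N}\bigl|\cdots\bigr|$ into $\sum_{i_1,\dots,i_{N-1}}(A\cdot_N\phi_N)[i_1,\dots,i_{N-1}]\langle \epsilon\odot f_1(i_1),f_2(i_2),\dots,f_{N-1}(i_{N-1})\rangle$, an honest $(N-1)$-linear generalized-inner-product expression for $A\cdot_N\phi_N$; applying the induction hypothesis (uniformly in $\phi_N$ and in $\epsilon$) and then taking $\Exp_\epsilon$ closes the step. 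The delicate point is precisely that one must \emph{not} merge the pair $(i_{N-1},i_N)$ into a single super-index — doing so would replace $\|A\|_{\infty,\K}$ by an uncontrolled $\ell_1$-type norm of a contracted tensor — so the argument is arranged so that all $N-1$ surviving arguments remain genuine functions of a single index, with the Khinchine averaging (rather than a second application of Grothendieck) doing the job of decoupling the sum over the Hilbert-space coordinate $l$.

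Finally, for \eqref{tongeextended} I would apply \eqref{tongesthm} with $\K=\C$ — a real tensor is in particular complex — to obtain $\bigl|\sum_{I\in[n]^N}A[I]\langle f_1(i_1),\dots,f_N(i_N)\rangle\bigr|\le 2^{(N-2)/2}K_G^{\C}\,\|A\|_{\infty,\C}$, and then bound $\|A\|_{\infty,\C}\le 2^{N-3/2}\,\|A\|_{\infty,\R}$ for every real $N$-tensor with $N\ge 2$. For this estimate I would write each of the first $N-2$ complex test functions as $\Re\phi_j+\sqrt{-1}\,\Im\phi_j$ and expand the product, producing $2^{N-2}$ terms; in each term the variables $1,\dots,N-2$ carry real $[-1,1]$-valued functions, so contracting $A$ against them leaves a real $2$-tensor $M$ with $\|M\|_{\infty,\R}\le\|A\|_{\infty,\R}$, and for real $M$ one has $\|M\|_{\infty,\C}=\max_{u_i,v_j\in\Ball(\R^2)}\sum_{i,j}M_{ij}\langle u_i,v_j\rangle\le K_G^{\R}(2)\,\|M\|_{\infty,\R}=2^{1/2}\|M\|_{\infty,\R}$ by the order-$2$ (real) Grothendieck inequality. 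Summing the $2^{N-2}$ terms, each bounded by $2^{1/2}\|A\|_{\infty,\R}$, gives $\|A\|_{\infty,\C}\le 2^{N-3/2}\|A\|_{\infty,\R}$, and combining with the previous display produces the bound $2^{(N-2)/2+N-3/2}K_G^{\C}\,\|A\|_{\infty,\R}=2^{(3N-5)/2}K_G^{\C}\,\|A\|_{\infty,\R}$, which is \eqref{tongeextended}.
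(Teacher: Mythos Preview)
Your argument is correct. For inequality~\eqref{tongesthm} your induction is essentially the paper's own proof: the paper packages the Cauchy--Schwarz step followed by Khintchine as Littlewood's inequality (Lemma~\ref{littlewood}), obtaining a max over scalar test functions $\phi\in\Ball(\ell_\infty^n)$ and $\chi\in\Ball(\ell_\infty^d)$, whereas you apply Khintchine directly and keep the random signs $\epsilon$ in play until the end; the $\chi$ in the paper plays exactly the role of your $\epsilon$ in multiplying the Hilbert-space coordinate of $f_1$.

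For inequality~\eqref{tongeextended} you take a genuinely different route. The paper runs a \emph{second} induction, with its own base case (Claim~\ref{claim:mst99} together with Krivine's $K_G^{\R}(2)=\sqrt 2$, giving $\sqrt 2\,K_G^{\C}$) and an inductive step that uses the ``extended'' Littlewood inequality~\eqref{littleineqextended}, which costs $2\sqrt 2$ rather than $\sqrt 2$ because the scalar $\phi$ must be forced to be $\pm 1$. You instead apply \eqref{tongesthm} once over $\C$ and then prove the one-shot comparison $\|A\|_{\infty,\C}\le 2^{N-3/2}\|A\|_{\infty,\R}$ for real $A$, by splitting $N-2$ of the complex test functions into real and imaginary parts and handling the remaining bilinear form via the identity $\|M\|_{\infty,\C}=\max_{u_i,v_j\in\Ball(\R^2)}\sum M_{ij}\langle u_i,v_j\rangle$ together with $K_G^{\R}(2)=\sqrt 2$. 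Both approaches hinge on the same ingredient $K_G^{\R}(2)=\sqrt 2$, but the paper spreads the extra factor of $2$ across the induction while you isolate it in a single norm comparison lemma; your version is more modular and makes transparent that \eqref{tongeextended} is \eqref{tongesthm} plus a purely scalar inequality relating $\|\cdot\|_{\infty,\C}$ and $\|\cdot\|_{\infty,\R}$.
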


Tonge proved inequality~\eqref{tongesthm} and Theorem~\ref{kgrothbound} is inequality~\eqref{tongeextended}. 

\begin{remark}
Note that from this extension, it follows that if the players use a real state and observables which can be represented by real matrices, then Theorems~\ref{schmidtbound} and~\ref{hyperghzbell} are valid with the constants on the right-hand side replaced by $2^{(N-2)/2}K_G^{\R}$ and $2^{k(r-2)/2}(K_G^{\R})^k$, respectively. In particular, this implies that Zukowski's \CQ-gap for the games given in~\cite{zukowski:1993} cannot be achieved with a strategy that involves a real Schmidt state and only real observables.
\end{remark}

The proof of Theorem~\ref{kgrothbound2} is by induction on $N$ and uses the following slight modification of a Theorem by Littlewood \cite{littlewood:1930} (see also \cite[page 43]{pietsch:1972} and \cite{szarek:1976}) in the inductive step.

\begin{lemma}[Slight extension of Littlewood 1964]\label{littlewood}
Let $n,d$ be positive integers and $\K\in\{\R,\C\}$. Then, for every $n\times d$ matrix $M:[n]\times[d]\to\K$, the following inequality holds:
\beq\label{littleineq}
\sum_{i=1}^n\Big(\sum_{j=1}^d|M[i,j]|^2\Big)^{1/2} \leq \sqrt{2}\mathop{\max_{\phi\in\Ball(\ell_{\infty}^n)}}_{\chi\in\Ball(\ell_{\infty}^d)}\Big|\sum_{i=1}^n\sum_{j=1}^dM[i,j]\phi(i)\chi(j)\Big|,
\eeq
where the underlying field for $\phi$ and $\chi$ is $\K$. Moreover, if the underlying field for $M$ and $\chi$ is~$\C$, but that for $\phi$ is $\R$, then the inequality
\beq\label{littleineqextended}
\sum_{i=1}^n\Big(\sum_{j=1}^d|M[i,j]|^2\Big)^{1/2} \leq 2\sqrt{2}\mathop{\max_{\phi\in\pmset{n}}}_{\chi\in\Ball(\ell_{\infty}^d)}\Big|\sum_{i=1}^n\sum_{j=1}^dM[i,j]\phi(i)\chi(j)\Big|,
\eeq
holds.
\end{lemma}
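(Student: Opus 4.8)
The plan is to prove all three inequalities by the classical two-step route: first dualize the right-hand side into the form $(\text{const})\cdot\max_\chi\sum_i\big|\langle M_i,\chi\rangle\big|$, where $M_i$ denotes the $i$-th row of $M$ and $\langle M_i,\chi\rangle := \sum_j M[i,j]\chi(j)$ (the paper's conjugate-free inner product), and then lower-bound this maximum by averaging $\chi$ over a suitable random vector and applying a Khintchine-type inequality to each row separately. Note that the left-hand side of all three inequalities is exactly $\sum_i\|M_i\|_2$, where $\|M_i\|_2=\big(\sum_j|M[i,j]|^2\big)^{1/2}$.

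For the duality step, fix $\chi$ and optimize $\phi$ coordinatewise. When $\phi$ ranges over $\Ball(\ell_\infty^n)$ with the same underlying field as $M$, choosing $\phi(i)$ to be the unit-modulus phase aligning the $i$-th term gives $\max_\phi\big|\sum_{i,j}M[i,j]\phi(i)\chi(j)\big|=\sum_i\big|\langle M_i,\chi\rangle\big|$, so the right-hand side of \eqref{littleineq} equals $\sqrt2\,\max_{\chi\in\Ball(\ell_\infty^d)}\sum_i\big|\langle M_i,\chi\rangle\big|$. For \eqref{littleineqextended}, where $\phi$ is forced to be $\pm1$-valued while the numbers $\langle M_i,\chi\rangle$ are genuinely complex, I would instead use that for complex numbers $w_i$ one has $\max_{\phi\in\pmset n}\big|\sum_i\phi(i)w_i\big|\ge\max\big(\sum_i|\Re w_i|,\sum_i|\Im w_i|\big)\ge\tfrac12\sum_i|w_i|$ (pick $\phi(i)=\sign(\Re w_i)$, resp. $\sign(\Im w_i)$, and use $|\Re w_i|+|\Im w_i|\ge|w_i|$); this is exactly where the extra factor of $2$ — turning $\sqrt2$ into $2\sqrt2$ — enters.

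For the averaging step I would split on the field. In the real case of \eqref{littleineq}, take $\chi$ uniformly random in $\pmset d\subseteq\Ball(\ell_\infty^d)$; then $\Exp_\chi\sum_i\big|\langle M_i,\chi\rangle\big|=\sum_i\Exp_\chi\big|\langle M_i,\chi\rangle\big|\ge\tfrac1{\sqrt2}\sum_i\|M_i\|_2$ by Khintchine's inequality with the sharp constant $1/\sqrt2$ (Szarek \cite{szarek:1976}), and since the maximum over $\chi$ dominates this average, \eqref{littleineq} follows. In the complex case of \eqref{littleineq}, and for \eqref{littleineqextended}, take $\chi$ to have independent coordinates uniform on the unit circle (again a point of $\Ball(\ell_\infty^d)$); here I would obtain the per-row bound $\Exp_\chi\big|\langle M_i,\chi\rangle\big|\ge\tfrac1{\sqrt2}\|M_i\|_2$ cheaply from the identity $\Exp\big|\langle M_i,\chi\rangle\big|^2=\|M_i\|_2^2$, the estimate $\Exp\big|\langle M_i,\chi\rangle\big|^4\le 2\|M_i\|_2^4$ (a direct computation, using that a mixed fourth moment of the circle variables vanishes unless the $z$-indices match the $\bar z$-indices), and log-convexity of $L_p$-norms in the form $\|w\|_2^2\le\|w\|_1^{2/3}\|w\|_4^{4/3}$. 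Combining with the duality step yields \eqref{littleineq} in the complex case and, through the $\tfrac12$-loss, \eqref{littleineqextended}.

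I do not expect a serious obstacle — the argument is entirely classical — but the point requiring care is the constant. The real case of \eqref{littleineq} genuinely needs the \emph{sharp} Khintchine constant $1/\sqrt2$: the soft fourth-moment argument used for the complex cases only gives the constant $1/\sqrt3$ for real Rademacher sums, which would yield $\sqrt3$ in place of the claimed $\sqrt2$. So I would be careful to invoke Szarek's sharp inequality there, whereas in the complex cases the fourth-moment computation already delivers $1/\sqrt2$ and no sharp input is needed.
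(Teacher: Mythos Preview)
Your argument is correct and follows essentially the same two-step architecture as the paper: a duality step (optimize $\phi$ first to reduce to $\max_\chi\sum_i|\langle M_i,\chi\rangle|$, losing a factor~$2$ when $\phi$ is forced to be real against complex data) followed by a Khintchine-type averaging over $\chi$. The paper's proof is organized in the reverse order---it first averages over Rademacher $\chi$ and then picks the phase for $\phi$, deducing~\eqref{littleineqextended} from~\eqref{littleineq} by splitting the optimal complex $\phi$ into real and imaginary parts---but this is only a difference in presentation.

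The one substantive variant is in the complex case. The paper uses Rademacher signs for $\chi$ throughout and, for complex coefficients, invokes an argument attributed to Tomaszewski (via Szarek) to obtain the Khintchine constant $A_1\ge 1/\sqrt{2}$. You instead average over Steinhaus variables and derive the same constant by a self-contained second/fourth-moment computation plus $L_p$-interpolation. Both routes give exactly $1/\sqrt{2}$; yours avoids an external citation at the price of a short moment calculation, while the paper's keeps the averaging vector $\pm1$-valued uniformly across all cases. Your caution that the moment method only yields $1/\sqrt{3}$ for real Rademacher sums, forcing recourse to Szarek's sharp constant in the real case of~\eqref{littleineq}, is well placed and matches what the paper does.
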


Inequality~\eqref{littleineq} is Littlewood's inequality.
% In the original version of Littlewood's inequality has $\phi\in \Ball(\ell_{\infty}^n)$ with $\K$ the underlying scalar field. Hence, our extension is the case $\K=\C$ and $\K'=\R$.
These ineqaulities, in turn, can be derived from Khintchine's inequality, which states that for $0<p<\infty$, there exist constants $A_p$ and $B_p$ such that for every finite sequence of real or complex scalars $(c_j)_{i=1}^n$ the following inequality holds: 
\beq\label{khintchine}
A_p\left(\sum_{i=1}^n|c_i|^2\right)^{1/2} \leq \left(\int_{t=0}^1\Big|\sum_{i=1}^nc_ir_i(t)\Big|^pdt\right)^{1/p}\leq B_p\left(\sum_{i=1}^n|c_i|^2\right)^{1/2},
\eeq
where $r_i(t) = \sign\big(\sin (2^i\pi t)\big)$ denotes the $i$'th Rademacher function. Haagerup~\cite{haagerup:1982} found all values of $A_p$ and $B_p$ for sequences of real numbers $c_i$. The best value of $A_1$ is due to Szarek~\cite{szarek:1976}, who proved that $A_1=1/\sqrt{2}$ for sequences of real scalars (see \cite{tomaszewski:1987} for an elementary proof). He also has an argument attributed to Tomaszewski which implies that for sequences of complex scalars and $p=1$, the left-hand side of~\eqref{khintchine} holds with $A_1\geq 1/\sqrt{2}$.
 
%contrary to what it says in \cite[page 11]{diestel:1995}. Apparently at the time of writing of Pietsch' book, the best known constant for $A_1$ was $1/\sqrt{3}$, hence the $\sqrt{3}$ factor in \cite[page 10]{lindenstrauss:1968}. Grothendieck's inequality can also be derived from this  (see \cite{diestel:1995}) and it has found applications in probability theory and analysis [Wikipedia].

\begin{proof}[ of Lemma \ref{littlewood}]
If we set $p = 1$ and use the left side of Khintchine's inequality~\eqref{khintchine} for every $i=1\ldots n$, we get 
\beqrn
\sum_{i=1}^n\Big(\sum_{j=1}^d|M[i,j]|^2\Big)^{1/2} &\leq& \sqrt{2}\int_{t=0}^1\sum_{i=1}^n\Big|\sum_{j=1}^dM[i,j]r_j(t)\Big|dt\\
&\leq& \sqrt{2}\sup_{x:[d]\to\pmset{}}\Big\{\sum_{i=1}^n\Big|\sum_{j=1}^dM[i,j]x(j)\Big|\Big\}.
\eeqrn

Inequality~\eqref{littleineq} now follows from the fact that there exists $\phi\in\Ball(\ell_{\infty}^n)$ and $\chi\in\Ball(\ell_{\infty}^d)$ such that
\beqn
\Big|\sum_{i=1}^n\sum_{j=1}^dM[i,j]\phi(i)\chi(j)\Big| = \sum_{j=1}^d\Big|\sum_{i=1}^nM[i,j]x(j)\Big|.
\eeqn

To see this, set $\chi(j) = x(j)$ and $\phi(i) = \big(\sum_{j=1}^dM[i,j]x(j)\big)^*/\big|\sum_{j=1}^dM[i,j]x(j)\big|$ for every $i\in[n]$ and $j\in[d]$. 

To prove inequality~\eqref{littleineqextended}, consider the case $\K=\C$ and let $\chi$ and $\phi$ be the complex sequences that maximize the right-hand side of inequality~\eqref{littleineq}. We write this quantity as the inner product $|\phi\cdot a|$, where $a_i = \sum_{j=1}^dM[i,j]\chi(j)$. By the triangle inequality, we have
\beqr
|\phi\cdot a| &=& \big|\Re(\phi)\cdot a + i\Im(\phi)\cdot a\big|\nonumber\\
&\leq& \big|\Re(\phi)\cdot a\big| + \big|\Im(\phi)\cdot a\big|\nonumber\\
&\leq& 2\max\big\{\big|\Re(\phi)\cdot a\big|, \big|\Im(\phi)\cdot a\big|\}.\label{littlerealcomplexpf}
\eeqr
Without loss of generality, we may assume that this maximum is achieved with $\phi' = \Re(\phi)$. This is a vector in $[-1,1]^n$ for which the inequality
\beqn
\sum_{i=1}^n\Big(\sum_{j=1}^d|M[i,j]|^2\Big)^{1/2} \leq 2\sqrt{2}\Big|\sum_{i=1}^n\sum_{j=1}^dM[i,j]\chi(j)\phi'(i)\Big|
\eeqn
holds. By convexity, we have that the maximum is achieved at one of the extreme points of $[-1,1]^n$. This completes the proof.
\end{proof}

We now turn to the proof of Theorem~\ref{kgrothbound2}.

\begin{proof}[ of Theorem \ref{kgrothbound2}]
(By induction on $N$.) For the base case, $N=2$, we only need to prove something for inequality~\eqref{tongeextended}, since the base cases for inequality~\eqref{tongesthm} are the  real and complex Grothendieck inequality. We use the following simplified version of \cite[Proposition 15]{munoz:1999}:

\begin{claim}\label{claim:mst99}
 For a real matrix $A[i,j]$, we have
\beq\label{absolutereal}
\max_{\alpha_i,\beta_j\in\Ball(\C)}\Big|\sum_{i,j}A[i,j]\alpha_i\beta_j\Big| \leq \max_{\alpha_i',\beta_j'\in\Ball(\C)}\sum_{i,j}A[i,j]\Re(\alpha_i'\beta_j'),
\eeq
%where $\Ball(\C)$ denotes the set of complex numbers with absolute value at most one and $\Re(\gamma)$ denote the real part of a complex number $\gamma$.
\end{claim}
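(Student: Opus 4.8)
The plan is to reduce the absolute value on the left-hand side of~\eqref{absolutereal} to a real part, by absorbing a global phase into one of the two families of variables and then using that $A$ is a real matrix so that $\Re(\cdot)$ commutes with the summation entry by entry.

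First I would pick $\alpha_i,\beta_j\in\Ball(\C)$ attaining the maximum on the left-hand side of~\eqref{absolutereal} (these exist since $\Ball(\C)$ is compact and the bilinear form is continuous), and set $S=\sum_{i,j}A[i,j]\alpha_i\beta_j\in\C$. Choosing $\theta\in\R$ with $e^{i\theta}S=|S|$ and putting $\alpha_i':=e^{i\theta}\alpha_i$ — which still lies in $\Ball(\C)$ since $|\alpha_i'|=|\alpha_i|\le 1$ — gives $\sum_{i,j}A[i,j]\alpha_i'\beta_j=e^{i\theta}S=|S|$, a nonnegative real number. Hence it equals its own real part, and since each $A[i,j]$ is a real scalar, linearity of $\Re(\cdot)$ lets it pass through the sum term by term:
\[
|S| \;=\; \Re\Big(\sum_{i,j}A[i,j]\alpha_i'\beta_j\Big) \;=\; \sum_{i,j}A[i,j]\,\Re(\alpha_i'\beta_j).
\]
The right-hand side here is at most $\max_{\alpha_i',\beta_j'\in\Ball(\C)}\sum_{i,j}A[i,j]\Re(\alpha_i'\beta_j')$ by definition, while $|S|$ is precisely the left-hand side of~\eqref{absolutereal}, so the claimed inequality follows.

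I expect no genuine obstacle here: the only subtlety is that the phase must be folded entirely into the $\alpha$'s (equivalently, entirely into the $\beta$'s) rather than distributed between both families, and that realness of $A$ is exactly what makes $\Re$ distribute over $A[i,j]\mapsto A[i,j]\Re(\cdot)$. I would also remark that the reverse inequality is immediate, since $\sum_{i,j}A[i,j]\Re(\alpha_i'\beta_j')=\Re\big(\sum_{i,j}A[i,j]\alpha_i'\beta_j'\big)\le\big|\sum_{i,j}A[i,j]\alpha_i'\beta_j'\big|$; thus~\eqref{absolutereal} is in fact an equality, although only the stated direction is needed in the induction for Theorem~\ref{kgrothbound2}.
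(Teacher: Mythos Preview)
Your proof is correct and follows essentially the same approach as the paper: pick optimizers for the left-hand side, rotate by a global phase so the sum becomes real and nonnegative, then use that $A$ is real to pull $\Re(\cdot)$ inside term by term. The only cosmetic difference is that the paper distributes the phase symmetrically as $\alpha_i e^{-i\phi/2}$ and $\beta_j e^{-i\phi/2}$, whereas you fold the entire phase into the $\alpha_i$'s; both work, so your remark that the phase ``must'' go entirely into one family is slightly too strong, but this does not affect the argument.
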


\begin{proof}
For a complex number $\gamma:= re^{i\phi}$ (with $r\geq 0$) we have that $e^{-i\phi} \gamma = r = |\gamma|$. Trivially, we have that $\Re(e^{-i\phi} \gamma) = e^{-i\phi} \gamma$. Hence, for complex $\alpha_i,\beta_j$ and real $A_{ij}$, we have that there exists a $\phi$ such that
\beqn
\Re\Big(e^{-i\phi}\sum_{i,j}A[i,j]\alpha_i\beta_j\Big) = e^{-i\phi}\sum_{i,j}A[i,j]\alpha_i\beta_j = \Big|\sum_{i,j}A[i,j]\alpha_i\beta_j\Big|.
\eeqn
% 
% For real $a_1,\dots,a_n\in\R$ and complex $\gamma_1,\dots,\gamma_n\in\C$, we have that
% \beqn
% \Re\Big(\sum_{i=1}^na_i\gamma_i\Big) = \sum_{i=1}^n\Re(a_i\gamma_i) = \sum_{i=1}^na_i\Re(\gamma_i).
% \eeqn

Let $\alpha_i,\beta_j\in\Ball(\C)$ that maximize the left-hand side of \eqref{absolutereal}. We have that for some $\phi$:
\beqrn
\Big|\sum_{i,j}A[i,j]\alpha_i\beta_j\Big| &=& \Re\Big(e^{-i\phi}\sum_{i,j}A[i,j]\alpha_i\beta_j\Big)\\
&=&\sum_{i,j}A[i,j]\Re\Big((\alpha_ie^{-i\phi/2})(\beta_je^{-i\phi/2})\Big)\\
&\leq& \max_{\alpha_i',\beta_j'\in\Ball(\C)}\sum_{i,j}A[i,j]\Re(\alpha_i'\beta_j').
\eeqrn
\end{proof}

We can write the real part $\Re(\alpha_i\beta_j)$ of two complex numbers $\alpha_i,\beta_j$ as the inner product between real vectors $a_i = \big(\Re(\alpha_i),\Im(\alpha_i)\big)^T$ and $b_j = \big(\Re(\beta_j),-\Im(\beta_j)\big)^T$. Using this and Claim~\ref{claim:mst99}, we get that for every sequence of unit vectors $u_i,v_j\in\Ball(\C^d)$,
\beqrn
\Big|\sum_{i,j=1}^{n}A[i,j] u_i\cdot v_j\Big|
 &\leq& K_G^{\C}\max_{\alpha_i,\beta_j\in\Ball(\C)}\Big|\sum_{i,j=1}^{n}A[i,j]\alpha_i\beta_j\Big|\\
&\leq& K_G^{\C}\max_{\alpha_i,\beta_j\in\Ball(\C)}\sum_{i,j=1}^{n}A[i,j]\Re(\alpha_i\beta_j)\\
 &\leq&K_G^{\C}\max_{a_i,b_j\in\Ball(\R^2)}\sum_{i,j=1}^{n}A[i,j]a_i\cdot b_j\\
&\leq& K_G^{\R}(2)K_G^{\C}\|A\|_{\infty,\R} =\sqrt{2}K_G^{\C}\|A\|_{\infty,\R},
\eeqrn
where we used the fact that $K_G^{\R}(2) =\sqrt{2}$, as Krivine showed.
This proves the base case.

Define the $n\times d$ matrix:
\beqn
B[i_N,j] := \sum_{i_1,\dots,i_{N-1}=1}^nA[i_1,\dots,i_{N-1},i_k]f_1(i_1)_j\cdots f_{N-1}(i_{N-1})_j.
\eeqn

By the triangle inequality, the Cauchy-Schwarz inequality and inequality~\eqref{littleineqextended}, we have
\beqr
\Big|\sum_{i_1,\dots,i_N=1}^nA[i_1,\dots,i_N]\langle f_1(i_1),\dots,f_N(i_N)\rangle\Big| &=&  \Big|\sum_{i_N=1}^n\sum_{j=1}^dB[i_N,j]f_{N}(i_N)_j\Big|\nonumber\\
&\leq& \sum_{i_N=1}^n\Big|\sum_{j=1}^dB[i_N,j]f_{N}(i_N)_j\Big|\nonumber\\
&\leq& \sum_{i_N=1}^n\Big(\sum_{j=1}^d\big|B[i_N,j]\big|^2\Big)^{1/2}\nonumber\nonumber\\
&\leq& 2\sqrt{2}\mathop{\max_{\phi:[n]\to\pmset{}}}_{\chi:[d]\to\Ball(\C)}\Big|\sum_{i_N=1}^n\sum_{j=1}^dB[i_N,j]\phi(i_N)\chi(j)\Big|\label{lastcauschwlit}
\eeqr

Let $\phi^*:[n]\to\pmset{}$ be the function that maximizes \eqref{lastcauschwlit}. Applying the induction hypothesis to the real $(N-1)$-tensor
\beqn
C[i_1,\dots,i_{N-1}]:=\sum_{i_N=1}^n A[i_1,\dots,i_{N-1},i_N]\phi^*(i_N)
\eeqn
allows us to bound \eqref{lastcauschwlit} as follows:
\begin{multline*}
 %\sqrt{2}\max_{\{t_{i_k}\},\{s_j\}}\sum_{i_k}\sum_j  \Big(\sum_{i_1,\dots,i_{k-1}}A[i_1,\dots,i_{k-1},i_k]f_1(i_1)_j\cdots f_{k-1}(i_{k-1})_j\Big)t_{i_k}s_j = \\
2\sqrt{2}\max_{\chi\in\Ball(\ell_{\infty}^d)}\Big|\sum_{i_N=1}^n\sum_{j=1}^dB[i_N,j]\phi^*(i_N)\chi(j)\Big| \\
=2\sqrt{2}\max_{\chi\in\Ball(\ell_{\infty}^d)}\Big|\sum_{i_N=1}^n\sum_{j=1}^d\Big(  \sum_{i_1,\dots,i_{N-1}=1}^nA[i_1,\dots,i_{N-1},i_N]f_1(i_1)_j\cdots f_{N-1}(i_{N-1})_j   \Big)\phi^*(i_N)\chi(j)\Big| \\
 =2\sqrt{2}\max_{\chi\in\Ball(\ell_{\infty}^d)}\Big|\sum_{i_1,\dots,i_{N-1}=1}^nC[i_1,\dots,i_{N-1}]\cdot \Big(\sum_{j=1}^d\chi(j) f_1(i_1)_j\cdots f_{N-1}(i_{N-1})_j\Big)\Big| \\
\leq2\sqrt{2}\cdot \big((\sqrt{2}\,(2\sqrt{2})^{((N-1)-2)}\big)K_G^{\C} \max_{\phi_1,\dots,\phi_{N-1}\in\pmset{n}}\Big|\sum_{i_1,\dots,i_{N-1}=1}^n C[i_1,\dots,i_{N-1}]\phi_1(i_1)\cdots \phi_{N-1}(i_{N-1})\Big| \\
 = \sqrt{2}\,2^{3(N-2)/2}K_G^{\C} \max_{\phi_1,\dots,\phi_{N-1}\in\pmset{n}}\Big|\sum_{i_1,\dots,i_{N-1}=1}^n\Big(\sum_{i_N=1}^n A[i_1,\dots,i_{N-1},i_N]\phi^*(i_N)\Big)\phi_1(i_1)\cdots \phi_{N-1}(i_{N-1})\Big|\\
\leq \sqrt{2}\,(2\sqrt{2})^{3(N-2)/2}K_G^{\C}\max_{\phi_1,\dots,\phi_N\in\pmset{n}}\Big|\sum_{i_1,\dots,i_N=1}^n A[i_1,\dots,i_{N-1},i_N]\phi_1(i_1)\cdots \phi_N(i_N)\Big|.
\end{multline*}
The final term is $2^{(3N-5)/2}K_G^{\C}\|A\|_{\infty,\R}$. This proves inequality~\eqref{tongeextended}. Inequality~\eqref{tongesthm} is proved in the same way, except with the original Grothendieck inequality for the base case and Littlewood's inequality~\eqref{littleineqextended} in Equation~\eqref{lastcauschwlit}, giving the factor $2^{(N-2)/2}K_G^{\K}$. This completes the proof.
\end{proof}

\section{Proof of Carne's theorem}\label{app:carne}

In this section we prove Theorem~\ref{carnethm}.

\begin{proof}[ of Theorem~\ref{carnethm}]
The proof is by induction on the number of edges $|E|$. If the edge set is empty, then there is nothing to prove. Let $e_0$ be any edge in $G$, and consider the graph $G_0 = (V,E\backslash\{e_0\})$. To re-write the expression, first assume that each vector $f_x(i_x)\in\HS_x = \bigotimes_{e\in E(x)} \HS(x,e)$ has the following tensor structure: 
$$f_x(i_x) = f_x^0(i_x)\otimes f_x^1(i_x)$$
where $f_x^0(i_x)\in\otimes_{e\in E\backslash\{e_0\}} \HS(x,e)$ and $f_x^1(i_x)\in \HS(x,e_0)$. Define $\Phi_{G_0} = \left(\bigotimes_{e\in E\backslash\{e_0\}} \psi_e\right)\circ \sigma_{G_0}$, where $\sigma_{G_0}$ is the re-arranging map for $G_0$. With this notation we have
\begin{align*}
\Phi\Big(\bigotimes_{x\in V}f_x(i_x)\Big) &= \Phi\Big(\bigotimes_{x\in V} f_x^0(i_x) \otimes f_x^1(i_x) \Big) \\
&= \Phi_{G_0}\Big(\bigotimes_{x\in V}f_x^0(i_x)\Big) \cdot \psi_{e_0} \Big( \bigotimes_{x\in e_0} f_x^1(i_x)\Big)
\end{align*}
Define the tensor $B[I] = A[I] \cdot \psi_{e_0} \left( \bigotimes_{x\in e_0} f_x^1(i_x)\right)$. Applying the induction hypothesis to $B[I]$ and the graph $G_0$ (note that the $\psi_{e_0}(\cdots)$ term is simply a number, dependent on $I$) gives
\begin{align}\label{eq:first}
\sum_{I\in[n]^V}B[I]\cdot \Phi_{G_0}\Big(\bigotimes_{x\in V}f_x^0(i_x)\Big) \leq \Big(\prod_{e\in E\backslash\{e_0\}} C_e^{\K'}\Big)\|B\|_{\infty,\K'}
\end{align}
By definition, 
\begin{align*}
\|B\|_{\infty,\K'} &= \max_{\phi_1,\ldots,\phi_N \in \Ball(l_\infty^n)} \left| \sum_I B[I]\, \phi_1(i_1)\cdots \phi_N(i_N)\right|\\
&= \max_{\phi_1,\ldots,\phi_N \in \Ball(l_\infty^n)} \left| \sum_I A[I] \,\phi_1(i_1)\cdots \phi_N(i_N) \,\psi_{e_0} \left( \bigotimes_{x\in e_0} f_x^1(i_x)\right)\right|
\end{align*}
Fix $\phi_1,\ldots,\phi_N$ that achieve this maximum, and define the tensor $C[I] = A[I] \phi_1(i_1)\cdots \phi_N(i_N)$. By hypothesis, the function $\psi_e$ enjoys a Grothendieck-type inequality, hence the expression above can be bounded by
\begin{align}\label{eq:second}
\|B\|_{\infty,\K'}\,=\,\sum_I C[I]\cdot \psi_{e_0} \left( \bigotimes_{x\in e_0} f_x^1(i_x)\right) \,\leq\, C_{e_0}^{\K'} \|C\|_{\infty,\K'}
\end{align}
To conclude, we  can relate $\|C\|_{\infty,\K'}$ to $\|A\|_{\infty,\K'}$ in the following way:
\begin{align*}
\|C\|_{\infty,\K'} &= \max_{\phi'_1,\ldots,\phi'_N \in \Ball(l_\infty^n)} \left| \sum_I C[I]\, \phi'_1(i_1)\cdots \phi'_N(i_N)\right|\\
&= \max_{\phi'_1,\ldots,\phi'_N \in \Ball(l_\infty^n)} \left| \sum_I A[I]\, \phi_1(i_1)\phi'_1(i_1)\cdots \phi_N(i_N) \phi'_N(i_N)\right|\\
&= \max_{\phi''_1,\ldots,\phi''_N \in \Ball(l_\infty^n)} \left| \sum_I C[I]\, \phi''_1(i_1)\cdots \phi''_N(i_N)\right|\\
&= \|A\|_{\infty,\K'}
\end{align*}
Combining Eqs.~(\ref{eq:first}) and~(\ref{eq:second}) gives the result in the case where all $f_x(i_x)$ have the tensor structure we described earlier. If not, since $\Phi$ is linear, writing their Schmidt decomposition will result in a weighted sum of expressions involving only unit vectors of this form. The weighted sum can be bounded by its maximum component, for which we can apply the reasoning above.
 \end{proof}

\end{document}